\renewcommand{\backref}[1]{}
\renewcommand{\backrefalt}[4]{%
\small
\ifcase #1 %
\or
[p.\ #2]%
\else
[pp.\ #2]%
\fi}
\newcommand{\para}{%
  \@startsection{paragraph}{4}%
  {\z@}{1.5ex \@plus .5ex \@minus .3ex}{-1em}%
  {\normalfont\normalsize\bfseries}%
}
\newtheorem{theorem}{Theorem}
\newtheorem{lemma}[theorem]{Lemma}
\newtheorem{fact}[theorem]{Fact}
\newtheorem{claim}[theorem]{Claim}
\theoremstyle{definition}
\newtheorem{definition}[theorem]{Definition}
\newtheoremstyle{part}
  {-0.2\topsep}   
  {\topsep}   
  {\itshape}  
  {0pt}       
  {\bfseries} 
  {.}         
  {5pt plus 1pt minus 1pt} 
  {}          
\theoremstyle{part}
\newtheorem{factpart}{Fact}[theorem]
\newcommand{\eq}[1]{\hyperref[eq:#1]{(\ref*{eq:#1})}}
\renewcommand{\sec}[1]{\hyperref[sec:#1]{Section~\ref*{sec:#1}}}
\newcommand{\thm}[1]{\hyperref[thm:#1]{Theorem~\ref*{thm:#1}}}
\newcommand{\lem}[1]{\hyperref[lem:#1]{Lemma~\ref*{lem:#1}}}
\newcommand{\prop}[1]{\hyperref[prop:#1]{Proposition~\ref*{prop:#1}}}
\newcommand{\cor}[1]{\hyperref[cor:#1]{Corollary~\ref*{cor:#1}}}
\newcommand{\fig}[1]{\hyperref[fig:#1]{Figure~\ref*{fig:#1}}}
\newcommand{\tab}[1]{\hyperref[tab:#1]{Table~\ref*{tab:#1}}}
\newcommand{\alg}[1]{\hyperref[alg:#1]{Algorithm~\ref*{alg:#1}}}
\newcommand{\app}[1]{\hyperref[app:#1]{Appendix~\ref*{app:#1}}}
\newcommand{\defn}[1]{\hyperref[def:#1]{Definition~\ref*{def:#1}}}
\newcommand{\clm}[1]{\hyperref[clm:#1]{Claim~\ref*{clm:#1}}}
\newcommand{\fct}[1]{\hyperref[fact:#1]{Fact~\ref*{fact:#1}}}
\newcommand*{\fullref}[1]{\hyperref[{#1}]{\autoref*{#1}:~\nameref*{#1}}}
\newcommand*{\fullbref}[1]{\hyperref[{#1}]{\autoref*{#1} (\nameref*{#1})}}
\newcommand{\B}{\{0,1\}}
\newcommand{\AND}{\textsc{And}}
\newcommand{\IP}{\textsc{IP}}
\newcommand{\XOR}{\textsc{Xor}}
\newcommand{\X}{\mathcal{X}}
\newcommand{\Y}{\mathcal{Y}}
\newcommand{\U}{\mathcal{U}}
\newcommand{\CS}{\mathrm{CS}}
\newcommand{\tO}{\widetilde{O}}
\newcommand{\tOmega}{\widetilde{\Omega}}
\renewcommand{\(}{\left(}
\renewcommand{\)}{\right)}
\newcommand{\<}{\langle}
\renewcommand{\>}{\rangle}
\newcommand{\id}{\mathbbm{1}}
\DeclareMathOperator{\rank}{rk}
\DeclareMathOperator{\arank}{\rank_{1/3}}
\newcommand{\alogrank}{\log\arank}
\DeclareMathOperator{\Size}{size}
\DeclareMathOperator{\disc}{disc}
\DeclareMathOperator{\dom}{{dom}}
\DeclareMathOperator{\polylog}{polylog}
\DeclareMathOperator{\Dc}{D}
\DeclareMathOperator{\R}{R}
\newcommand{\Q}{\mathrm{Q}^*}
\newcommand{\Qn}{\mathrm{Q}}
\newcommand{\G}{\mathcal{G}}
\newcommand{\x}{\mathbf{x}}
\newcommand{\y}{\mathbf{y}}
\newcommand{\fu}{\mathbf{u}}
\newcommand{\fv}{\mathbf{v}}
\newcommand{\defeq}{\coloneqq} 
\newcommand{\eps}{\varepsilon}
\renewcommand{\epsilon}{\varepsilon}
\renewcommand{\Pr}{\mathrm{Pr}}
\newcommand{\E}{\mathcal{E}}
\newcommand{\F}{\mathrm{F}}
\newcommand{\I}{\mathbb{I}}
\newcommand{\VR}{\Delta}
\newcommand{\err}{\mathrm{err}}
\newcommand {\fn} [2] {\ensuremath{ #1 \minusspace \br{ #2 } }}
\newcommand {\minusspace} {\: \! \!}
\newcommand {\br} [1] {\ensuremath{ \left( #1 \right) }}
\newcommand {\bra} [1] {\ensuremath{ \left\langle #1 \right| }}
\newcommand {\ket} [1] {\ensuremath{ \left| #1 \right\rangle }}
\newcommand {\ketbratwo} [2] {\ensuremath{ \left| #1 \middle\rangle \middle\langle #2 \right| }}
\newcommand {\ketbra} [1] {\ketbratwo{#1}{#1}}
\newcommand {\Tr} {\ensuremath{ \mathrm{Tr} }}
\newcommand {\norm} [1] {\ensuremath{ \left\| #1 \right\| }}
\newcommand {\normsub} [2] {\ensuremath{ \norm{#1}_{#2} }}
\newcommand {\onenorm} [1] {\normsub{#1}{1}}
\newcommand {\cspace} [1] {\ensuremath{\mathnormal{#1}}}
\newcommand {\set} [1] {\ensuremath{ \left\lbrace #1 \right\rbrace }}
\newcommand {\relent} [2] {\fn{\mathrm{S}}{#1 \middle\| #2}}
\newcommand {\ent} [1] {\fn{\mathrm{S}}{#1}}
\newcommand{\expec}{\mathbb{E}}
\def\D{\mathcal{D}}
\def\L{\mathcal{L}}
\def\H{\mathcal{H}}
\newcommand{\BR}{\mathrm{B}}
\newcommand{\QCC}{\mathrm{QCC}}
\newcommand{\tX}{\widetilde{X}}
\newcommand{\tU}{\widetilde{U}}
\newcommand{\tY}{\widetilde{Y}}
\newcommand{\tV}{\widetilde{V}}
\begin{document}

\title{Separating quantum communication and approximate rank}

\author{
Anurag Anshu\footnote{Centre for Quantum Technologies, National University of Singapore, Singapore. \texttt{a0109169@u.nus.edu}} \qquad
Shalev Ben-David\footnote{Massachusetts Institute of Technology. \texttt{shalev@mit.edu}} \qquad
Ankit Garg\footnote{Microsoft Research New England. \texttt{garga@microsoft.com}} \\[1.5ex]
Rahul Jain\footnote{Centre for Quantum Technologies, National University of Singapore and MajuLab, UMI 3654, 
Singapore. \texttt{rahul@comp.nus.edu.sg}} \qquad 
Robin Kothari\footnote{Center for Theoretical Physics, Massachusetts Institute of Technology. \texttt{rkothari@mit.edu}} \qquad 
Troy Lee\footnote{SPMS, Nanyang Technological University and Centre for Quantum Technologies and MajuLab, UMI 3654, Singapore. {\tt troyjlee@gmail.com}}
}

\hypersetup{pageanchor=false} 
\date{}
\maketitle

\begin{abstract}
One of the best lower bound methods for the quantum communication complexity of a function $H$ (with or without shared 
entanglement) is the logarithm of the approximate rank of the communication matrix of $H$. 
This measure is essentially equivalent to the approximate $\gamma_2$ norm and generalized discrepancy, and subsumes several other 
lower bounds. 
All known lower bounds on quantum communication complexity in the general unbounded-round model can be shown
via the logarithm of approximate rank, and it was an open problem to give any separation at all between quantum 
communication complexity and the logarithm of the approximate rank.

In this work we provide the first such separation: We exhibit a total function $H$ with quantum communication complexity almost quadratically larger than the logarithm of its approximate rank. We construct $H$ using the communication lookup 
function framework of Anshu et al.~(FOCS 2016) based on the cheat sheet framework of Aaronson et al.~(STOC 2016).  
From a starting function $F$, this framework defines a new function $H=F_\G$.
Our main technical result is a lower bound on the quantum communication complexity of $F_\G$ in terms of the 
discrepancy of $F$, which we do via quantum information theoretic arguments.  
We show the upper bound on the approximate rank of $F_\G$
by relating it to the Boolean circuit size of the starting function $F$.
\end{abstract}


\thispagestyle{empty}
\clearpage
\section{Introduction}
\label{sec:intro}
\hypersetup{pageanchor=true} 
\setcounter{page}{1}
Communication complexity studies how much two parties Alice and Bob need to communicate in order to compute a 
function when each party only has partial knowledge of the input.  The model of quantum communication complexity 
allows the players to send quantum messages back and forth, and measures the total number of qubits that need to 
be exchanged in order to compute the function.  
Communication complexity has become a fundamental 
area in theoretical computer science with applications to circuit complexity, data structures, streaming algorithms, 
property testing, and linear and semi-definite programs. 
Many of these applications require showing communication complexity \emph{lower bounds}, which raises the importance of studying lower bound techniques in communication complexity. 

In this paper we study lower bounds on quantum communication complexity. For a two-party function $F: \X \times \Y \rightarrow \{0,1\}$, we denote by $\Qn(F)$ the minimum number of qubits needed by a quantum protocol to compute $F$ with error probability at most $1/3$.  

One of the strongest lower bounds on $\Qn(F)$ comes by viewing $F$ as a Boolean $|\X|\times|\Y|$ matrix, known as the communication matrix, 
which we will also denote by $F$.  The approximate rank of $F$, denoted $\arank(F)$, is 
the minimum rank of a matrix $\tilde F$ that is entrywise close to $F$, that is, satisfying $\ell_\infty(\tilde F - F) \le 1/3$.  Building on the work of Kremer \cite{Kre95} and Yao \cite{Yao93}, Buhrman and de Wolf \cite{BdW01} showed that 
$\Qn(F)=\Omega(\alogrank(F))$.  Later, it was shown that approximate rank can also be used to lower bound 
quantum communication complexity with shared entanglement, denoted $\Q(F)$.  More precisely, 
$\Q(F)=\Omega(\alogrank(F))- O(\log \log (|\X| \cdot |\Y|))$ \cite{LS08c}.  As this paper studies quantum communication complexity lower bounds, we will focus on the measure $\Q(F)$, which makes our results stronger.

The logarithm of the approximate rank dominates nearly all other lower bounds on quantum 
communication complexity, including the discrepancy method \cite{Kre95}, the approximate trace norm~\cite{Raz03,LS07}, the generalized discrepancy method 
\cite{Kla07, Raz03, She11}, and the approximate
$\gamma_2$ norm bound \cite{LS07}.\footnote{In fact, the generalized discrepancy method, 
logarithm of approximate $\gamma_2$ norm, and logarithm of approximate rank are all equivalent, up 
to constant mutliplicative factors and an additive logarithmic term.} In fact, to the best of our knowledge, all known lower bounds for general two-way quantum communication can be obtained using approximate rank. Besides being a powerful lower bound method, approximate rank is a robust measure posessing several desirable properties such as error reduction, direct sum and strong direct product theorems~\cite{She12}, and an optimal lifting theorem~\cite{She11,SZ09b}.

Given our current state of knowledge, it is consistent that $\Q(F) = O(\alogrank(F))$ 
for every function $F$, that is, the logarithm of the approximate rank \emph{characterizes} quantum communication complexity. 
As it is widely believed that this is not the case, this state of affairs points to the 
limitations of our current lower bound techniques for quantum communication complexity.  

In this paper, we show the first superlinear separation between quantum communication complexity and the logarithm of 
the approximate rank.
\begin{restatable}{theorem}{separation}
\label{thm:separation}
	There is a family of total functions $F:\X\times\Y\to\B$ with
	$\Q(F)=\tOmega\Bigl(\log^2 \arank(F)\Bigr)$.
\end{restatable}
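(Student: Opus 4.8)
The plan is to instantiate the $F\mapsto F_\G$ framework with a starting function $F$ whose discrepancy is exponentially small but whose Boolean circuit complexity is only linear, and then play the two resulting bounds off each other. Concretely, take $F=\IP_n$, the inner product mod $2$ on $\B^n\times\B^n$: a standard calculation (Lindsey's lemma) gives $\disc(\IP_n)=2^{-\Omega(n)}$, so $\log(1/\disc(\IP_n))=\Omega(n)$, while as a Boolean function of its $2n$ input bits $\IP_n$ has $\Size(\IP_n)=O(n)$. Set $H\defeq F_\G=(\IP_n)_\G$, a total function. I will show $\Q(H)=\tOmega(n)$ and $\log\arank(H)=\tO(\sqrt n)$; this gives $\Q(H)=\tOmega\bigl(\log^2\arank(H)\bigr)$ and so proves \thm{separation}, with $H$ in the role of the function called $F$ there.

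For the lower bound I would establish the general inequality $\Q(F_\G)=\tOmega(\log(1/\disc(F)))$ and apply it. The intuition is that any quantum protocol for $F_\G$ must, in effect, locate the cheat-sheet cell it is supposed to inspect, and the address of that cell is the tuple of outputs of the many embedded copies of $F$; hence a protocol for $F_\G$ that is much cheaper than $\log(1/\disc(F))$, when run on a suitable hard distribution, would solve $F$ on a typical copy with communication below its discrepancy bound --- a contradiction. Carrying this out requires: a hard input distribution (the copies of $F$ drawn from the distribution witnessing $\disc(F)$, with the cells populated consistently so that $H$ is roughly balanced); singling out one ``revealing'' copy; and bounding the quantum information the protocol leaks about that copy by (roughly) its communication cost, via quantum chain rules together with a message-compression / cut-and-paste argument --- while also arguing that the contents of the remaining cells and copies furnish no shortcut under the hard distribution. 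Applied with $F=\IP_n$ this yields $\Q(H)=\tOmega(n)$.

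For the upper bound I would exhibit an explicit low-rank approximation of the communication matrix of $(\IP_n)_\G$, built from the linear-size circuit for $\IP_n$. Write $H$ as the disjunction over the $2^c$ cell indices $\ell$ of the predicate ``cell $\ell$ is a valid cheat sheet''; on the support of $H$ these disjuncts are pairwise exclusive, so it suffices to approximate each disjunct to error $2^{-\Theta(c)}$ in rank and then sum, at the cost of a factor $2^{O(c)}n^{O(1)}$. Each disjunct is a conjunction of $\tO(\Size(\IP_n))$ ``local'' checks --- individual gate evaluations of the circuit together with equality tests of the wire values it claims against Alice's and Bob's actual inputs --- each of which has $O(1)$ approximate rank; the matrix-norm analogue of the fact that the $N$-bit \textsc{And} function has approximate degree $\Theta(\sqrt N)$ (an amplitude-amplification-style low-rank approximation of a long conjunction, composed with the trivial approximations of the local checks) then bounds $\log\arank$ of each disjunct by $\tO(\sqrt{\Size(\IP_n)})=\tO(\sqrt n)$, whence $\log\arank(H)=\tO(\sqrt n+c)=\tO(\sqrt n)$ once $c=\Theta(\log n)$, as the framework requires. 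Note that this does not involve approximating $\IP_n$ itself in low rank --- the cell lets the approximation recompute the relevant outputs cheaply --- which is exactly why $\arank(H)$ can be so much smaller than $\arank(\IP_n)$. The same accounting also shows this route cannot do better than a quadratic gap: for any $F$ depending on all its inputs one has $\log(1/\disc(F))=O(\Size(F))$, so the separation obtained is at best $\Size(F)$ versus $\sqrt{\Size(F)}$.

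I expect the quantum lower bound of the second step to be the main obstacle. Because there is no general quantum direct-sum theorem one cannot argue black-box that ``$F_\G$ requires solving $c$ copies of $F$''; the proof must instead track quantum information through the whole protocol and show that the act of pinning down the correct cell already forces $\tOmega(\log(1/\disc(F)))$ qubits of communication about one well-chosen copy, while ruling out any exploitation of the (adversarially chosen) contents of the other cells. Making this information accounting tight enough that the bound loses only logarithmic factors --- so that it still dominates $\log^2\arank(H)$ --- is the crux of the whole argument.
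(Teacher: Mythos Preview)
Your proposal is correct and follows essentially the same route as the paper: take $F=\IP_n$, build the lookup function $H=(\IP_n)_\G$ with $c=\Theta(\log n)$, lower bound $\Q(H)$ by $\log(1/\disc(\IP_n))=\Omega(n)$ via a quantum information / cut-and-paste argument, and upper bound $\log\arank(H)$ by $\tO(\sqrt n)$ using the unambiguous (pairwise-exclusive) decomposition over cells together with a Grover/approximate-degree-of-\textsc{And} speedup on the $\tO(\Size(\IP_n))$ local gate checks. One small slip: each cell certifies $c$ circuit evaluations, so the conjunction has $\tO(c\cdot\Size(\IP_n))$ local checks (and you also pay for error reduction to $2^{-\Theta(c)}$), giving $\log\arank(H)=\tO(c^{3/2}\sqrt{\Size(\IP_n)})$ rather than $\tO(\sqrt n+c)$; with $c=\Theta(\log n)$ this is still $\tO(\sqrt n)$, so your conclusion stands.
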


As far as we are aware, \autoref{thm:separation} is the first superlinear separation between quantum communication complexity and the logarithm of the approximate rank even for \emph{partial} functions, which are functions defined only on a subset of the domain $\X \times \Y$.\footnote{For partial functions, we require the approximate low-rank decomposition of the communication matrix to take values between 0 and 1 even on inputs on which the function is undefined. Without this constraint it is easy to construct large partial function separations.}

One alternative to approximate rank for showing lower bounds on quantum communication complexity is the 
recently introduced quantum information complexity \cite{Tou15}.  This bound has been shown to dominate the logarithm of 
the approximate rank \cite{Braverman15}, and has nice properties like characterizing amortized quantum communication 
complexity.  The quantum information complexity, however, is difficult to bound for an explicit function and has not yet been 
used to show a new lower bound in the general unbounded-round model of quantum communication complexity.  

By analogy with the log rank conjecture, which postulates that $\Dc(F) = O(\polylog(\rank(F)))$, where $\Dc(F)$ is the deterministic communication complexity of $F$, it is natural to state 
an approximate log rank conjecture.  The quantum version of the approximate log rank conjecture states 
$\Q(F) = O(\polylog(\arank(F)))$.  Our results show that the exponent of the logarithm in such a statement must 
be at least 2.  The largest gap we currently know between $\Dc(F)$ and 
$\log \rank(F)$ is also quadratic \cite{GPW15}. One could also consider a randomized version of the log rank 
conjecture, stating $\R(F) = O(\polylog(\arank(F)))$, where $\R(F)$ is the $1/3$-bounded-error randomized 
communication complexity.  This conjecture is actually known to imply the usual deterministic log rank conjecture 
\cite{KMS+14}. The largest known gap between $\R(F)$ and $\log \arank(F)$ is $4\text{th}$ power \cite{GJPW15}. 

Our separation is established using quantum information theoretic arguments to lower bound quantum communication complexity of a particular family of functions known as lookup functions, introduced in \cite{ABB+16a}. We use Boolean circuit size to upper bound the logarithm of approximate rank of lookup functions.
We now provide an overview of lookup functions and our proof techniques.

\subsection{Techniques}
Many questions in communication complexity have analogs in the (usually simpler) model of query complexity.    The query 
complexity quantity that is analogous to approximate rank is the approximate 
polynomial degree.  Using the quantum adversary lower bound, Ambainis \cite{Amb03} gave a function $f$ with an $n$ versus 
$n^{1.32}$ separation between its approximate polynomial degree and quantum query complexity.  This result is the main 
reason for the belief that there should also be a separation between the logarithm of approximate rank and quantum 
communication complexity.  One way to do this would be to ``lift'' the quantum query lower bound for $f$ into a 
quantum communication lower bound for a related communication problem by composing $f$ with an appropriate communication gadget.  While such a lifting theorem is known for the approximate polynomial degree \cite{She11,SZ09b}, 
it remains an open question to show a lifting theorem for quantum query complexity or the quantum adversary method.  The lack of an analog of the  
adversary lower bound in the setting of quantum communication complexity is part of the difficulty of separating the 
logarithm of approximate rank and quantum communication complexity.

There has recently been a great deal of progress in showing new separations between complexity measures in query 
complexity \cite{GPW15,ABB+16,ABK16}.
The work in query complexity most closely related to ours is the \emph{cheat sheet} method of Aaronson et al.~\cite{ABK16}.
The cheat sheet method is a way to transform a function $f$ into its ``cheat sheet'' version $f_{\CS}$ so that, for some 
complexity measures, $f_\CS$ retains the hardness of $f$, while other complexity measures are drastically reduced 
by this transformation.  Among other things, Aaronson et al.~\cite{ABK16} use this method to improve 
Ambainis' separation and give a 4th power separation 
between quantum query complexity and approximate polynomial degree.

\cite{ABB+16a} generalize the cheat sheet method to communication complexity.  They are able to lift several query 
results of \cite{ABK16} to communication complexity, such as an example of a total function with a super-quadratic 
separation between its randomized and quantum communication complexities.  They do this by introducing the idea 
of a \emph{lookup} function.  To motivate a lookup function, consider first a communication version of the familiar 
address function.  Alice receives inputs $x \in \{0,1\}^c$ and $u_0, \ldots, u_{2^c-1} \in \{0,1\}$ and Bob receives 
$y \in \{0,1\}^c$ and $v_0, \ldots, v_{2^c-1} \in \{0,1\}$.  The desired output is found by interpreting 
$x \oplus y$ as the binary representation of a number $\ell \in \{0, \ldots, 2^c-1\}$ and outputting $u_\ell \oplus v_\ell$.  

The $(F, \G)$ lookup function $F_\G$ is defined by a function $F: \X \times \Y \rightarrow \{0,1\}$ and a function family   $\mathcal{G}=\{G_0, \ldots, G_{2^c-1}\}$, with $G_i: (\X^c \times \{0,1\}^m) \times (\Y^c \times \{0,1\}^m) 
\rightarrow \{0,1\}$.  Alice receives input $\x=(x_1, \ldots, x_c) \in \X^c$ and $u_0, \ldots, u_{2^c-1} \in \B^m$ and Bob receives 
inputs $\y=(y_1, \ldots, y_c) \in \Y^c$  and $v_0, \ldots, v_{2^c-1}\in \B^m$.  Now the address is determined by interpreting
$(F(x_1, y_1), \ldots, F(x_c, y_c)) \in \{0,1\}^c$ as an integer $\ell \in \{0, \ldots, 2^c-1\}$ and the goal of the 
players is to output $G_\ell((\x,u_\ell), (\y,v_\ell))$.  Note that, in contrast to the case with the address function, in 
a lookup function, $G_{\ell}$ can depend on $\x$ and $\y$.  This is the source of difficulty in showing lower bounds 
for lookup functions, and also key to their interesting properties.

\para{Lower bound.} The main result of \cite{ABB+16a} showed that, given some mild restrictions on the family of functions $\G$, the randomized 
communication complexity of $F_\G$ is at least that of $F$.  Our main result shows that, given mild restrictions 
on the function family $\G$, if there is a quantum protocol with $q$ qubits of communication for $F_\G$, then there is a $q$ 
qubit protocol for $F$ with \emph{non-negligible bias}.  Because of the round-by-round nature of our quantum 
information theoretic argument, the success probability of the quantum protocol for $F$ decays with the number of rounds 
of the quantum protocol for $F_\G$.  Thus to apply this theorem, we need to start with a function $F$ that has 
high quantum communication complexity even for protocols with small bias.  As the discrepancy method lower bounds 
quantum communication complexity even with small bias, we can informally state our main theorem as follows. 
\begin{restatable}[Informal restatement of \cor{lower}]{theorem}{CSinformal}
\label{thm:CSinformal}
For any $(F, G)$ lookup function $F_\G$, provided $\G$ satisfies certain mild technical conditions, $\Q(F_\G) = \Omega(\log (1/\disc(F)))$.
\end{restatable}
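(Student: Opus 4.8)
The plan is to establish the statement by a reduction followed by the discrepancy bound. Starting from a quantum protocol $\Pi$ for $F_\G$ that uses $q$ qubits of communication over $r$ rounds and errs with probability at most $1/3$, I would build a quantum protocol for $F$ that uses at most $q$ qubits and, on a hard distribution $\mu$ for $F$, outputs $F$ with advantage $\delta=\delta(r)>0$ over random guessing, where the advantage degrades with the number of rounds $r$. Because the discrepancy method lower-bounds quantum communication even for protocols of vanishing bias — any quantum protocol computing $F$ with bias $\delta$ uses at least $\Omega(\log(1/\disc(F)))-O(\log(1/\delta))$ qubits — and because $r\le q$ (rounds carrying no message may be merged away) while $\log(1/\delta(r))$ grows at most linearly in $r$, rearranging this inequality gives $q=\Omega(\log(1/\disc(F)))$. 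So everything comes down to the reduction and to controlling the loss $\delta(r)$.

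For the reduction I would plant the real instance into a random coordinate. Using shared randomness (which is free given shared entanglement), pick a uniform coordinate $j\in\{1,\ldots,c\}$ and publicly sample the remaining coordinates $(x_i,y_i)$, $i\ne j$, so that both players know each value $F(x_i,y_i)$; place the real instance $(x,y)\sim\mu$ at coordinate $j$. The address of $F_\G$ then lies in $\{L_0,L_1\}$, two strings that agree on all the known bits and differ exactly in position $j$, and — here the mild conditions on $\G$ are used — the players fill the lookup table so that the correct output of $F_\G$ on the resulting input is a fixed bijective function of $F(x,y)$, so that solving $F_\G$ here would solve $F$ on $(x,y)$. The obstruction, and the reason this is not a one-shot reduction, is that the lookup-table entry at the true address is supposed to be a certificate for $F(x_j,y_j)$, and the validity of such a certificate depends on the very bit $F(x,y)$ the players are trying to compute (and the cells may anyway be too short to contain a full certificate); so the players can only prepare the table for a guessed value of $F(x,y)$, and if the guess is wrong $\Pi$ is run on an input whose true cell is not a valid certificate, on which its output a priori reveals nothing about $F(x,y)$.

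The technical heart is therefore to argue that $\Pi$'s messages must betray $F(x,y)$ anyway, and this is where the round-by-round quantum information-theoretic argument enters. I would run $\Pi$ on the planted input and track, round by round, how the reduced quantum state held by Alice and Bob correlates with the contents of coordinate $j$, using a quantum cut-and-paste / hybrid argument: swap the real instance at coordinate $j$ for a publicly known decoy after round $t$, and bound the resulting change in the players' joint state by combining quantum chain rules for mutual information with a Pinsker-type inequality converting small leakage into small trace distance. Telescoping these per-round bounds over all $r$ rounds shows that either some round already leaks $\Omega(1/r)$-worth of information about coordinate $j$ — in which case truncating $\Pi$ after that round and completing the computation by a guess yields a protocol for $F$ of bias $\Omega(1/\poly(r))$ using at most $q$ qubits — or the decoy-swapped execution is globally close to the true one, in which case $\Pi$ run with the decoy already solves $F$ with constant-order bias; either way we obtain the biased protocol needed in the first paragraph. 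The step I expect to be the main obstacle is exactly this one: the classical analogue in \cite{ABB+16a} can use a single-shot cut-and-paste (the rectangle, or transcript-factorization, property of classical protocols), whereas quantum messages carry entanglement and admit no such one-shot factorization, so the swap must be handled round by round, and at each round one loses a factor in passing from the quantum mutual-information bound to a distinguishing advantage — which is precisely why the bias, and hence the strength of the reduction, decays with the number of rounds of the protocol for $F_\G$.
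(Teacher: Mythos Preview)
Your high-level plan---build from $\Pi$ a protocol for $F$ with bias $\delta(r)$ decaying in the round count, then invoke the discrepancy bound---matches the paper exactly, and your accounting $q\ge\Omega(\log(1/\disc(F)))-O(\log(1/\delta))$ together with $r\le q$ is precisely how the paper closes the argument. The reduction you sketch, however, diverges from the paper's in a way that leaves a genuine gap.

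You embed at a coordinate $j$, \emph{publicly} sample the remaining $(x_i,y_i)$ so that both players know $L_{-j}$, and try to fill the two candidate cells $L_0,L_1$ so that the output of $F_\G$ is a bijection of $F(x,y)$. You correctly flag the obstruction: the required cell contents may depend on $(x,y)$, which neither player fully knows. But the fix you propose---a round-by-round hybrid that swaps the real $(x_j,y_j)$ for a public decoy---does not address it. In your case~2, once the decoy is in place the execution has no remaining dependence on $(x,y)$: the address part uses the decoy, and by your own description the cells were filled from a \emph{guess}, not from the true $F(x,y)$. So the assertion that ``$\Pi$ run with the decoy already solves $F$'' has no content; there is nothing left in that execution that encodes $F(x,y)$. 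The cut-and-paste is being applied to the wrong pair of variables.

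The paper's reduction is organized differently. The other coordinates are sampled \emph{privately} (Alice receives $x_i$, Bob receives $y_i$, neither learns $F(x_i,y_i)$), and all $2^c$ cells are uniformly random---no attempt is made to tailor them. The dichotomy is: either at some round a player's registers carry nontrivial information about some bit $L_i=F(x_i,y_i)$ (this is your case~1, and the paper extracts a biased protocol here via a maximum-likelihood measurement), or they never do. In the latter branch the paper does not produce a protocol but a \emph{contradiction}: because the full address $\ell$ is unpredictable and there are $2^c$ random cells, Bob's state carries almost no information about Alice's correct cell $U_\ell$ (and symmetrically for $V_\ell$); the quantum cut-and-paste lemma is then applied to the \emph{cell contents} $(u_\ell,v_\ell)$---not to the address input $(x_j,y_j)$---to show that Alice's final state is nearly identical on two settings of $(u_\ell,v_\ell)$ that force different values of $G_\ell$, contradicting the $1/3$ error of $\Pi$. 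The nontrivial-XOR hypothesis on $\G$ is invoked only at this last step, to manufacture those two settings. The idea missing from your sketch is that the hybrid runs over the cell values, with the cells playing the role of uninformative random noise rather than a prepared certificate; your diagnosis of the per-round loss mechanism is correct, but it must be deployed on $(u_\ell,v_\ell)$, not on $(x_j,y_j)$.
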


Let us call such theorems, where we lower bound the complexity of a lookup function $F_\G$ (or a cheat sheet function $f_\CS$) in terms of a measure of the original function $F$ (or $f$), ``cheat sheet theorems.'' Essentially optimal cheat sheet theorems have been shown in a number of computational models such as deterministic, randomized, and quantum query complexity ~\cite{ABK16} and randomized communication complexity~\cite{ABB+16a}. Cheat sheet theorems are in spirit similar to joint computation results such as direct sum and direct product theorems \cite{Barak2013, BRWY_dp, BW15, Dru12, LR13, She12, Tou15}.\footnote{One point of difference is that in direct sum and direct product theorems, the lower bounds on the amount of resources (query, communication, etc.) usually scale with $c$, the number of copies of the function $F$. In the cheat sheet theorem we prove (and also in prior works), the lower bounds do not scale with $c$. This is due to the fact that the value of $c$ is usually small in our applications.
} 
Direct sum and direct product theorems are widely applicable tools and are often an important goal by themselves. 
Cheat sheet theorems have become useful tools recently and for example, the cheat sheet theorems proven in \cite{ABK16} were later used in \cite{AKK16}. 
We hope that our quantum cheat sheet theorem will find further applications.

We now provide a high-level overview of the proof of our quantum cheat sheet theorem. We would like to rule out the existence of a quantum protocol $\Pi$ that solves the lookup function $F_\G$ and whose communication cost is much smaller than the quantum communication complexity of $F$ (with inverse polynomial bias, for technical reasons explained below). Since $\Pi$ has small communication cost, during the course of the protocol Alice and Bob do not know the value of the index $\ell = (F(x_1, y_1), \ldots, F(x_c, y_c))$. Also since there are too many cells in the array, which has length $2^c \gg \Q(F)$, and $\Pi$ has small communication cost, Alice and Bob cannot talk about too many cells of the array. We first show that these two conditions imply that Alice and Bob have little information about the contents of the correct cell of the other player's array, i.e., Alice has little information about $v_\ell$ and Bob has little information about $u_\ell$. 

In the hypothesis of the theorem, we assume that $G_\ell$ satisfies a \emph{nontriviality condition}: this states that 
$G_\ell(\x,\y,u_\ell,v_\ell)$ takes both values $0$ and $1$ as $(u_\ell,v_\ell)$ range over all possible values.  
Thus the fact that Alice has little information about $v_\ell$ and Bob has little information about $u_\ell$ sounds 
like we have reached a contradiction already.  The issue is that we do not have any control over the 
\emph{bias} of $G_\ell$.
This situation is reminiscent of the quantum information theoretic arguments in the proof of quantum communication complexity lower bounds for the disjointness function \cite{JRS03}. In that case, one has to argue that a quantum protocol that solves the AND function on $2$ bits exchanges non-trivial amount of information even on distributions which are extremely biased towards the AND being $0$. We use similar arguments (namely the {\em quantum cut-and-paste} argument) to obtain a contradiction for our lookup function. Quantum cut-and-paste arguments usually have a round dependence (which is provably needed for the disjointness lower bound) but which may not be needed for our lookup function. Improving our quantum cheat sheet theorem or proving that it is tight remains an excellent open question. 

At a high level our proof follows the same strategy as the proof for randomized communication complexity in \cite{ABB+16a}, but the implementation of the steps of the argument is different due to the quantum nature of the protocol. A quantum communication protocol presents several challenges, such as the fact that there is no notion of a communication transcript, since it is not possible to store all the quantum messages exchanged during the protocol. Hence arguments that applied to the overall communication transcript do not work in the quantum setting. Several technical lemmas, such as the Markov chain property of classical communication protocols used in \cite{ABB+16a}, fail to hold in the quantum setting.

\para{Upper bound.} We devise a general technique for proving upper bounds on the logarithm of approximate rank of lookup functions for carefully constructed function families $\G$. Given a circuit $\mathcal{C}$ for $F$, a cell in the array tries to certify the computation of $F$ by the circuit $\mathcal{C}$. More formally, $G_\ell(\x,\y,u_\ell,v_\ell) = 1$ iff $(F(x_1, y_1), \ldots, F(x_c, y_c))=\ell$ and $u_\ell \oplus v_\ell$ provides the values of the inputs and outputs to all the gates in $\mathcal{C}$ for each of the $c$ different evaluations of $\mathcal{C}$ on inputs $(x_1,y_1),\ldots,(x_c,y_c)$. 
We show 
that a small circuit for $F$ implies a good upper bound on the approximate rank of the lookup function $F_\G$.

\begin{theorem}[Informal restatement of \thm{upper}]\label{thm:upperinformal} For any Boolean function $F$, there exists a family of functions $\G$ satisfying certain nontrivality conditions such that the lookup function 
$F_\G$ satisfies $\log \arank(F_\G) = \tO(\sqrt{\Size(F)})$.
\end{theorem}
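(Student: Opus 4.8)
The plan is to realize the ``certificate'' family sketched just before \thm{upperinformal}. Fix a fan-in-$2$ Boolean circuit $\mathcal{C}$ computing $F$ with $s\defeq\Size(F)$ gates, and let $m=O(cs)$ be $c$ times the number of wires of $\mathcal{C}$. Define $\G=\{G_0,\dots,G_{2^c-1}\}$ by declaring $G_\ell(\x,\y,u_\ell,v_\ell)=1$ exactly when the string $u_\ell\oplus v_\ell$ spells out, for every $i\in[c]$, a full wire-by-wire transcript of the run of $\mathcal{C}$ on $(x_i,y_i)$ whose input wires carry the actual bits of $(x_i,y_i)$ and whose output wire equals $\ell_i$. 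Since $u_\ell$ and $v_\ell$ can be set independently, for every $\ell$ there are inputs forcing $G_\ell$ to $1$ and inputs forcing it to $0$, so this $\G$ meets the nontriviality hypotheses required by \thm{CSinformal}.

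First I would record a structural fact that removes the lookup. If $G_\ell$ accepts, then the recorded transcript is internally consistent and its input wires agree with $(x_i,y_i)$, which forces its output wire to equal $\mathcal{C}(x_i,y_i)=F(x_i,y_i)$; since $G_\ell$ also insists that wire be $\ell_i$, we get $\ell=(F(x_1,y_1),\dots,F(x_c,y_c))$, i.e.\ $\ell$ is precisely the address the lookup consults. Hence at most one cell is ever accepted, and it is the right one, so as $\B$-valued matrices
\[
  F_\G \;=\; \sum_{\ell\in\B^c} V_\ell,
  \qquad\text{where}\qquad
  V_\ell\defeq G_\ell(\x,\y,u_\ell,v_\ell)
\]
depends only on Alice's data $(\x,u_\ell)$ and Bob's data $(\y,v_\ell)$, and at most one summand is nonzero on any input. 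Consequently $\arank(F_\G)\le\sum_{\ell}\rank(\widetilde V_\ell)$ for any matrices $\widetilde V_\ell$ with $\|\widetilde V_\ell-V_\ell\|_\infty\le\eps$, $\eps\defeq 3^{-1}2^{-c}$ (so the $2^c$ errors still sum to at most $1/3$).

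The crux is to build such a $\widetilde V_\ell$ of rank $2^{\tO(\sqrt s)}$. Write $V_\ell$ as the indicator of $\{N_\ell=0\}$, where $N_\ell=\sum_{j=1}^{M}N_{\ell,j}\in\{0,1,\dots,M\}$ counts the violated atomic constraints --- one per gate and per copy, plus the input-wire and output-wire checks --- so $M=O(cs)$, and each $N_{\ell,j}$ depends on only $O(1)$ bits of each player's input, hence has rank $O(1)$ as a communication matrix. Now take the univariate polynomial $p$ of degree $d=O(\sqrt M\,\log(1/\eps))=\tO(\sqrt s)$ obtained by rescaling a Chebyshev polynomial so that $p(0)=1$ while $|p(t)|\le\eps$ for $t\in\{1,\dots,M\}$, and set $\widetilde V_\ell\defeq p(N_\ell)$; then $\|\widetilde V_\ell-V_\ell\|_\infty\le\eps$. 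Expanding $p(N_\ell)$ and using $N_{\ell,j}^2=N_{\ell,j}$ writes it as a linear combination of at most $\sum_{k\le d}\binom Mk=2^{O(d\log M)}$ Hadamard products $\prod_{j\in S}N_{\ell,j}$ with $|S|\le d$; since matrix rank is submultiplicative under the entrywise product, each such product has rank $2^{O(d)}$, so $\rank(\widetilde V_\ell)\le 2^{O(d\log M)}=2^{\tO(\sqrt s)}$. (If the decomposition is required to be $[0,1]$-valued one post-composes $p$ with a fixed degree-$O(1)$ ``clamp'', which is immaterial.) Summing over $\ell$ gives $\log\arank(F_\G)\le c+\tO(\sqrt s)=\tO(\sqrt{\Size(F)})$ for the small $c$ used in all our applications.

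The hard part will be this last step --- turning the size-$M$ conjunction of gate checks inside one cell into a low-rank object. Handling that \textsc{And} directly as a Hadamard product of $M$ rank-$O(1)$ matrices produces rank $2^{\Theta(M)}=2^{\Theta(s)}$, which is worthless; the quadratic saving appears only once the conjunction is replaced by the degree-$\Theta(\sqrt M)$ polynomial that approximates the point indicator at $0$ (equivalently, the approximate degree of $\OR_M$), after which the monomial count finishes the bound. The remaining work is routine bookkeeping: verifying the nontriviality conditions on $\G$, the $O(1)$-locality of each $N_{\ell,j}$, the Chebyshev degree estimate, and the error accumulation over the $2^c$ cells.
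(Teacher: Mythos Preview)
Your argument is correct, and the family $\G$ you build is exactly the one the paper uses. But the route you take to bound $\log\arank(F_\G)$ is genuinely different from the paper's.

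The paper proceeds in two steps. First it observes (as you do) that unambiguity gives $F_\G=\sum_\ell G_\ell$, whence $\log\arank(F_\G)\le c+\max_\ell\log\rank_\eps(G_\ell)$ with $\eps=2^{-c}/3$. But then, rather than constructing a low-rank approximation of $G_\ell$ by hand, it invokes the generic inequality $\log\rank_\eps(G_\ell)\le O(\Q_\eps(G_\ell))=O(c\,\Q(G_\ell))$ and exhibits an explicit quantum protocol for $G_\ell$: Alice and Bob Grover-search over the $O(cs)$ local gate/input/output checks for a violation, costing $\tO(\sqrt{cs})$ qubits. This yields $\log\arank(F_\G)=\tO(c^{3/2}\sqrt{s})$.

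Your approach bypasses quantum protocols entirely: you write $G_\ell=[N_\ell=0]$ with $N_\ell$ a sum of $M=O(cs)$ rank-$O(1)$ indicators, approximate the point indicator at $0$ by a degree-$d=O(\sqrt M\log(1/\eps))$ Chebyshev polynomial, and bound the rank of the resulting multilinear expansion by $2^{O(d\log M)}$. This is more elementary and self-contained --- no appeal to \fct{arankQ}, error reduction, or Grover --- and it makes transparent that the quadratic saving is exactly the approximate-degree-of-$\OR_M$ phenomenon. The paper's route, on the other hand, is shorter to state once the black boxes are granted, and it highlights the operational picture (Merlin hands over the address, Alice and Bob quantum-verify). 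Both arrive at the same $\tO(c^{3/2}\sqrt{\Size(F)})$ bound.
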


Here $\Size(F)$ denotes the size of the smallest circuit (i.e., the one with the least number of gates) for $F$ over some constant-sized gate set, such as the set of all $2$-bit gates. The high level idea for the upper bound is the following. Suppose an all-knowing prover Merlin provided Alice and Bob the value $\ell = (F(x_1, y_1), \ldots, F(x_c, y_c))$. Then they can ``unambiguously"  verify  Merlin's answer with a small amount of quantum communication. Essentially they look at the $\ell^{\text{th}}$ cell of the array and try to find an inconsistency in the circuit values. This can then be done with quadratically less communication by a quantum protocol by using a distributed version of Grover's algorithm~\cite{Gro96,BCW98}. We then show that this sort of upper bound on ``unambiguously certifiable quantum communication" provides an upper bound on the log of approximate rank of the lookup function $F_{\G}$. A similar upper bound was also used in the query complexity separations of~\cite{ABK16}.

Putting these upper and lower bounds together, if we choose $F$ to be the inner product function, which has exponentially small discrepancy and linear circuit size, \thm{CSinformal} and \thm{upperinformal} give us the desired quadratic separation between quantum communication complexty and the log of approximate rank for a lookup function $F_{\G}$.

One intriguing aspect of \thm{upperinformal} is that if one can prove lower bounds on $\log \arank(F_\G) \gg \sqrt{n}$ for {\em every} nontrivial function family $\G$, then one proves nontrivial circuit lower bounds for $F$! This theorem is similar in flavor to the theorem \cite{LLS06, Rei11} that the square of the quantum query complexity of a function $f$ is a lower 
bound on the formula size of $f$. 
It might seem hopeless to prove a lower bound on $\log \arank(F_\G)$ for every nontrivial function family $\G$, but this is exactly what our quantum cheat sheet theorem achieves for quantum communication complexity, and what the results of \cite{ABB+16a} achieve for randomized communication complexity.

\section{Preliminaries and notation}
\label{sec:prelim}

We will use $X,Y,Z$ to denote random variables as well as their distributions. $x \leftarrow X$ will stand for $x$ being sampled from the distribution of $X$. For joint random variables $XY$, $Y^x$ will denote the distribution of $Y|X=x$.

We now state some classical complexity measures that will be used in this paper. We define quantum measures in more detail in \sec{qinf} and \sec{comm}. We first formally define approximate rank.

\begin{definition}[Approximate rank]
Let $\epsilon \in [0,1/2)$ and $F$ be an $|\X| \times |\Y|$ matrix.  The $\epsilon$-approximate rank of $F$ is 
defined as
\[
\rank_\epsilon(F) = \min_{\tilde F}\  \{\rank(\tilde F) : \forall x\in\X, y\in \Y, \, |\tilde F(x,y) - F(x,y)| \le \epsilon\}
\] 
\end{definition}

As discussed in the introduction, approximate rank lower bounds bounded-error quantum communication complexity with shared entanglement. It also lower bounds $\eps$-error quantum communication~\cite{LS08c}:

\begin{fact}
\label{fact:arankQ}
For any two-party function $F:\X \times \Y \to \B$ and $\eps\in[0,1/3]$,  we have $\Q_\eps(F) = \Omega(\log \rank_\eps(F)) - O(\log\log(|\X|\cdot |\Y|))$.
\end{fact}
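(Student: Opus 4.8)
The plan is to deduce \fct{arankQ} by combining two known facts: first, that $\Q_\eps(F)$ is lower bounded by the logarithm of an appropriate norm of the communication matrix (the approximate $\gamma_2$ norm, or equivalently the generalized discrepancy), and second, that this norm is in turn polynomially related to the approximate rank, with at most a $\log\log(|\X|\cdot|\Y|)$ additive loss. Concretely, I would invoke the chain of inequalities established by Linial--Shraibman and Lee--Shraibman \cite{LS08c}: for any sign matrix $F$ and error $\eps$, one has $\Q_\eps(F) \geq \frac{1}{2}\log \gamma_2^{\alpha}(F) - O(1)$ for the relevant $\alpha = \alpha(\eps)$, and $\gamma_2^{\alpha}(F) \geq \Omega\!\left(\frac{\rank_\eps(F)}{(|\X|\cdot|\Y|)^{o(1)}}\right)$ after passing through a dimension-reduction / Johnson--Lindenstrauss style argument that rounds an approximate $\gamma_2$ factorization down to a low-rank approximator while controlling the dimension blowup.

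The key steps, in order, would be: (i) recall the definition of the approximate $\gamma_2$ norm $\gamma_2^{\eps}(F) = \min\{\gamma_2(\tilde F) : \ell_\infty(\tilde F - F) \le \eps\}$ and the classical fact $\rank_\eps(F) \leq (\gamma_2^{\eps/2}(F))^2$ trivially, but more importantly the reverse-direction statement that a low-rank approximation of $F$ can be converted, via random projection, into an approximate $\gamma_2$ factorization of dimension $O(\log(|\X||\Y|))$, so that $\log\gamma_2^{\eps'}(F) \le \log\rank_\eps(F) + O(\log\log(|\X||\Y|))$ for a slightly larger error $\eps'$; (ii) invoke the quantum lower bound $\Q_\eps(F) = \Omega(\log \gamma_2^{\alpha(\eps)}(F))$ proved in \cite{LS08c} via the trace-norm / generalized-discrepancy method, which is where shared entanglement is handled; (iii) chain the two to get $\Q_\eps(F) = \Omega(\log\rank_\eps(F)) - O(\log\log(|\X||\Y|))$, absorbing the error-parameter adjustments using the fact that approximate rank enjoys error reduction (changing $\eps$ within $[0,1/3]$ costs only constant factors in the exponent). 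Since the statement is explicitly attributed to \cite{LS08c} and marked as a \texttt{fact}, the intended ``proof'' is really a pointer, so I would keep the argument at the level of citing these lemmas rather than reproving the dimension reduction.

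The main obstacle is step (i): carefully tracking how the additive $O(\log\log(|\X|\cdot|\Y|))$ term arises and why it cannot be avoided. The subtlety is that approximate rank and approximate $\gamma_2$ norm are \emph{not} equivalent up to constants — they differ precisely because a rank-$r$ matrix need not have a $\gamma_2$ factorization of norm $\sqrt{r}$ unless one also controls the magnitudes of the vectors in the factorization, and the only general way to do that is to randomly project into dimension $\Theta(\log(|\X||\Y|))$ and argue via a union bound over all $|\X|\cdot|\Y|$ entries that the inner products are approximately preserved; this projection step is what injects the $\log\log$ loss into the exponent of the rank. I would therefore spend most of the writeup making sure the quantifiers on $\eps$ versus the intermediate error parameters are consistent, since sloppiness there is the only real place this argument can go wrong. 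Everything else — the quantum lower bound from generalized discrepancy, and the handling of entanglement — I would take directly from \cite{LS08c} as a black box, since reproving it is outside the scope of this preliminaries section.
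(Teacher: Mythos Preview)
The paper gives no proof of this fact at all --- it is stated with a bare citation to \cite{LS08c}, exactly as you anticipated in your last paragraph. So at the level of ``what the paper does,'' your plan to treat it as a pointer is correct.

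That said, your sketch of the underlying argument has the Johnson--Lindenstrauss step pointed the wrong way, and as written the chain of inequalities does not close. You need $\log\gamma_2^{\eps'}(F)\ge \tfrac12\log\rank_\eps(F)-O(\log\log(|\X||\Y|))$ to combine with $\Q_\eps(F)=\Omega(\log\gamma_2^{\eps'}(F))$; your step (i) instead derives $\log\gamma_2^{\eps'}(F)\le \log\rank_\eps(F)+O(\log\log(|\X||\Y|))$, from which nothing about $\Q_\eps$ versus $\log\rank_\eps$ follows. Relatedly, the ``trivial'' claim $\rank_\eps(F)\le(\gamma_2^{\eps/2}(F))^2$ is false in general (take $F=I_n$: $\gamma_2(I_n)=1$ while $\rank_{1/3}(I_n)=\Theta(\log n)$), and random projection does not turn a low-rank factorization into a bounded-row-norm one. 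The dimension reduction goes the other direction: starting from a $\gamma_2$ factorization $\tilde F=XY^\top$ with bounded row norms but possibly huge inner dimension, a JL projection to dimension $O\bigl((\gamma_2^{\eps'}(F))^2\log(|\X||\Y|)\bigr)$ preserves all $|\X|\cdot|\Y|$ inner products and yields a low-rank entrywise approximation, giving $\log\rank_\eps(F)\le 2\log\gamma_2^{\eps'}(F)+O(\log\log(|\X||\Y|))$. That is the inequality that, chained with the generalized-discrepancy lower bound on $\Q_\eps$, produces the stated fact with exactly the additive $O(\log\log(|\X||\Y|))$ loss.
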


Another classical lower bound measure that we use is the discrepancy of a function~\cite{KN06}.

\begin{definition}[Discrepancy]
\label{def:disc}
Let $F$ be an $|\X| \times |\Y|$ Boolean-valued matrix and $P$ a probability distribution over $\X \times Y$.  
The discrepancy of $F$ with respect to $P$ is 
\[
\disc_P(F) = \max_R \left | \sum_{(x,y) \in R} P(x,y) (-1)^{F(x,y)} \right | \enspace ,
\]
where the maximum is taken with respect to all combinatorial rectangles $R$.  The discrepancy of $F$, 
denoted $\disc(F)$, is defined as $\disc(F) = \min_P \disc_P(F)$, where the minimum is taken over all 
probability distributions $P$.
\end{definition}

The discrepancy bound lower bounds not only bounded-error quantum communication complexity, but also quantum communication complexity with error exponentially close (in the discrepancy) to $1/2$.
More precisely, we have the following~\cite{Kre95, LS07}.

\begin{theorem}
\label{thm:discQ}
Let $F: \X \times \Y \rightarrow \{0,1\}$ be a two-party function and $\eps\in[0,1/2)$. Then
\[
\Q_\epsilon(F) = \Omega \left(  \log \frac{1-2\epsilon}{\disc(F)} \right).
\]
\end{theorem}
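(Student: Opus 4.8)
The plan is to port the classical ``rectangle decomposition'' proof of the discrepancy bound into the quantum setting, replacing the partition of the communication matrix into $2^c$ rectangles by a factorization of the protocol's acceptance matrix into $2^{O(c)}$ rank-one pieces. First I would fix a quantum protocol $\Pi$ with shared entanglement that uses $c=\Q_\eps(F)$ qubits of communication and computes $F$ with error at most $\eps$, and let $A(x,y)=\Pr[\Pi(x,y)=1]$ be its acceptance matrix. Setting $B\defeq J-2A$, where $J$ is the all-ones matrix, we have $B(x,y)\in[-1,1]$, and since $\Pi$ errs with probability at most $\eps$ on every input,
\[
(-1)^{F(x,y)}\,B(x,y)\ =\ \Pr[\Pi(x,y)=F(x,y)]-\Pr[\Pi(x,y)=\overline{F(x,y)}]\ \ge\ 1-2\eps .
\]
Hence for every probability distribution $P$ on $\X\times\Y$, writing $M\defeq P\circ(-1)^F$ for the entrywise product, $\langle B,M\rangle=\sum_{x,y}P(x,y)\,(-1)^{F(x,y)}B(x,y)\ge 1-2\eps$.

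The structural input I would invoke is the Yao--Kremer unrolling of a quantum communication protocol: expanding the $c$ communicated qubits writes the final global state on input $(x,y)$ as a superposition over $2^{O(c)}$ ``communication patterns'' $w$, from which the acceptance probability reads off as $A(x,y)=\sum_w f_w(x)\,g_w(y)$ with $2^{O(c)}$ indices and $0\le f_w(x),g_w(y)\le 1$. I expect this lemma to be the only genuinely nonroutine step, and in particular the point where shared entanglement must be handled with care: following \cite{LS07}, one lets the vectors produced by the unrolling live in a Hilbert space of unbounded dimension (so that Alice's and Bob's halves of the shared state are absorbed into $f_w$ and $g_w$), and the key observation is that both the count $2^{O(c)}$ of terms and the bound $1$ on the factors depend only on the number of communicated qubits, not on the dimension of the entangled state. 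This is exactly what \cite{Kre95,LS07} establish.

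It then remains to combine the two estimates. For any rank-one matrix $fg^\top$ with $\|f\|_\infty,\|g\|_\infty\le 1$ and any distribution $P$, $|\langle fg^\top,M\rangle|=|f^\top M g|\le\|M\|_{\infty\to 1}=O(\disc_P(F))$: writing the entries of an optimal pair of $\{-1,1\}$-vectors as $2b-1$ with $b\in\{0,1\}$ expresses $f^\top M g$ as a signed sum of four rectangle sums of $M$, each at most $\disc_P(F)$ in absolute value. Using $B=J-2\sum_w f_w g_w^\top$ and that $\langle J,M\rangle$ is itself a rectangle sum,
\[
1-2\eps\ \le\ \langle B,M\rangle\ \le\ |\langle J,M\rangle|+2\sum_w|\langle f_w g_w^\top,M\rangle|\ =\ 2^{O(c)}\cdot\disc_P(F),
\]
so $c=\Omega\!\bigl(\log\tfrac{1-2\eps}{\disc_P(F)}\bigr)$. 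Since $P$ was arbitrary, taking $P$ to attain $\disc(F)=\min_P\disc_P(F)$ yields $\Q_\eps(F)=\Omega\!\bigl(\log\tfrac{1-2\eps}{\disc(F)}\bigr)$. Equivalently one could route the whole argument through the approximate $\gamma_2$ norm and its dual (the approach of \cite{LS07}); I would keep the elementary $\infty\to 1$ formulation since it most transparently mirrors the classical proof.
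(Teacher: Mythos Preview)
The paper does not give its own proof of this theorem: it is stated in the preliminaries with a citation to \cite{Kre95,LS07} and used as a black box (in the derivation of Corollary~\ref{cor:lower}). So there is no in-paper argument to compare against; your proposal is essentially a reconstruction of the proof from those references, and the overall shape---Yao--Kremer unrolling of the protocol into $2^{O(c)}$ ``rank-one'' pieces, then bounding each piece against the discrepancy via the $\infty\!\to\!1$ norm of $P\circ(-1)^F$---is the standard and correct one.

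One point deserves tightening. In the paper's notation $\Q_\eps$ is the model \emph{with} shared entanglement. There the Yao--Kremer expansion does not yield scalar factors $f_w(x),g_w(y)\in[0,1]$; the shared state forces $f_w(x),g_w(y)$ to be unit-bounded \emph{vectors} in a Hilbert space whose dimension is unconstrained, and the ``rank-one'' term is $\langle f_w(x),g_w(y)\rangle$. Your rounding argument (write $\pm1$ as $2b-1$ and decompose into four rectangle sums) bounds $|f^\top M g|$ only for scalar $f,g$; for vector-valued factors you need Grothendieck's inequality to pass from $\gamma_2^*$ to $\|M\|_{\infty\to 1}$, or equivalently the $\gamma_2$-norm route of \cite{LS07} that you mention at the end. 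As written, the main line of your argument establishes the bound for $\mathrm{Q}_\eps$ (no entanglement), and the entangled case is only covered by the ``equivalently'' remark. Since the theorem as stated in the paper is about $\Q_\eps$, you should either make the $\gamma_2$/Grothendieck step explicit, or invoke teleportation to reduce to the entanglement-free model at a constant multiplicative cost before running the scalar argument.
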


Finally we define the Boolean circuit size of a function. To do this, we first fix a gate set, say the set of all gates with 2 input bits (although we could have chosen any constant instead of 2). 

\begin{definition}[Circuit size]
For a function $F:\B^n \times \B^m \to \B$, we define $\Size(F)$ to be the size (i.e., number of gates) of the smallest circuit over the gates set of all 2-input Boolean gates that computes $F$.
\end{definition}

Note that here the encoding of Alice's and Bob's input is important, since different input representations may yield different sized circuits, unlike in communication complexity. When we use this size measure, we only deal with functions defined on bits where the input encoding is clearly specified.

\subsection{Quantum Information}
\label{sec:qinf}
We now introduce some quantum information theoretic notation. We assume the reader is familiar with standard notation in quantum computing~\cite{NC00,Wat16}.


Let $\H$ be a finite-dimensional complex Euclidean space, i.e.,  $\mathbb{C}^n$ for some positive integer $n$ with the usual complex inner product $\langle \cdot, \cdot \rangle$, which is defined as $\langle u,v \rangle = \sum_{i=1}^n u_i^* v_i$. We will also refer to $\H$ as a Hilbert space. We will usually denote vectors in $\H$ using braket notation, e.g., $|\psi\> \in \H$.

The $\ell_1$ norm (also called the trace norm) of an operator $X$ on $\H$ is $\onenorm{X}\defeq \Tr (\sqrt{X^{\dag}X})$, which is also equal to (vector) $\ell_1$ norm of the vector of singular values of $X$. 

A {\em quantum state} (or a {\em density matrix} or simply a {\em state}) $\rho$ is a positive semidefinite matrix on $\H$ with $\Tr(\rho)=1$. The state $\rho$ is said to be a {\em pure state} if its rank is $1$, or equivalently if $\Tr(\rho^2)=1$, and otherwise it is called a {\em mixed state}. 
 Let $\ket{\psi}$ be a unit vector on $\H$, that is $\langle \psi|\psi \rangle=1$.  With some abuse of notation, we use $\psi$ to represent the vector $|\psi\>$ and also the density matrix $\ketbra{\psi}$, associated with $\ket{\psi}$. Given a quantum state $\rho$ on $\H$, the {\em support of $\rho$}, denoted $\text{supp}(\rho)$ is the subspace of $\H$ spanned by all eigenvectors of $\rho$ with nonzero eigenvalues.

A {\em quantum register} $A$ is associated with some Hilbert space $\H_A$. Define $|A| \defeq \log \dim(\H_A)$. Let $\L(A)$ represent the set of all linear operators on $\H_A$. We denote by $\D(A)$ the set of density matrices on the Hilbert space $\H_A$. We use subscripts (or superscripts according to whichever is convenient) to denote the space to which a state belongs, e.g, $\rho$ with subscript $A$ indicates $\rho_A \in \H_A$. If two registers $A$ and $B$ are associated with the same Hilbert space, we represent this relation by $A\equiv B$.  For two registers $A$ and $B$, we denote the combined register as $AB$, which is associated with Hilbert space $\H_A \otimes \H_B$.  For two quantum states $\rho\in \D(A)$ and $\sigma\in \D(B)$, $\rho\otimes\sigma \in \D(AB)$ represents the tensor product (or Kronecker product) of $\rho$ and $\sigma$. The identity operator on $\H_A$ is denoted $\id_A$. 

Let $\rho_{AB} \in \D(AB)$. We define the {\em partial trace with respect to $A$} of $\rho_{AB}$ as
\[ \rho_{B} \defeq \Tr_{A}(\rho_{AB})
\defeq \sum_{i} (\bra{i} \otimes \id_{\cspace{B}})
\rho_{AB} (\ket{i} \otimes \id_{\cspace{B}}) , \]
where $\set{\ket{i}}_i$ is an orthonormal basis for the Hilbert space $\H_A$.
The state $\rho_B\in \D(B)$ is referred to as a {\em reduced density matrix} or a {\em marginal state}. Unless otherwise stated, a missing register from subscript in a state will represent partial trace over that register. Given a $\rho_A\in\D(A)$, a {\em purification} of $\rho_A$ is a pure state $\rho_{AB}\in \D(AB)$ such that $\Tr_{B}(\rho_{AB})=\rho_A$. Any quantum state has a purification using a register $B$ with $|B|\leq |A|$. The purification of a state, even for a fixed $B$, is not unique as any unitary applied on register $B$ alone does not change $\rho_A$.

An important class of states that we will consider is the {\em classical quantum states}. They are of the form $\rho_{AB} = \sum_a \mu(a) \ketbra{a}_A\otimes \rho^a_B$, where $\mu$ is a probability distribution. In this case, $\rho_A$ can be viewed as a probability distribution and we shall continue to use the notations that we have introduced for probability distribution, for example,  $\expec_{a\leftarrow A}$ to denote the average $\sum_a \mu(a)$. 

A quantum {\em super-operator} (or a {\em quantum channel} or a {\em quantum operation}) $\E: A\rightarrow B$ is a completely positive and trace preserving (CPTP) linear map (mapping states from $\mathcal{D}(A)$ to states in $\mathcal{D}(B)$). The identity operator in Hilbert space $\H_A$ (and associated register $A$) is denoted $\id_A$.  A {\em unitary} operator $\U_A:\H_A \rightarrow \H_A$ is such that $\U_A^{\dagger}\U_A = \U_A \U_A^{\dagger} = \id_A$. The set of all unitary operations on register $A$ is  denoted by $\mathcal{U}(A)$. 

A $2$-outcome quantum measurement is defined by a collection $\{M, \id - M\}$, where $0 \preceq M \preceq \id$ is a positive semidefinite operator, where $A\preceq B$ means $B-A$ is positive semidefinite. Given a quantum state $\rho$, the probability of getting outcome corresponding to $M$ is $\Tr(\rho M)$ and getting outcome corresponding to $\id - M$ is $1-\Tr(\rho M)$. 


\subsubsection{Distance measures for quantum states}

We now define the distance measures we use and some properties of these measures. Before defining the distance measures, we introduce the concept of {\em fidelity} between two states, which is not a distance measure but a similarity measure.

\begin{definition}[Fidelity]
 Let $\rho_A,\sigma_A \in \D(A)$ be quantum states. The fidelity between $\rho$ and $\sigma$ is defined as
$$\F(\rho_A,\sigma_A)\defeq\onenorm{\sqrt{\rho_A}\sqrt{\sigma_A}}.$$
\end{definition}

For two pure states $|\psi\>$ and $|\phi\>$, we have $\F(|\psi\>\<\psi|,|\phi\>\<\phi|) = |\<\psi|\phi\>|$. We now introduce the two distance measures we use.

\begin{definition}[Distance measures]
 Let $\rho_A,\sigma_A \in \D(A)$ be quantum states. We define the following distance measures between these states.
\begin{align*}
\text{Trace distance:}& \quad \Delta(\rho_A,\sigma_A) \defeq \frac{1}{2}\|\rho_A-\sigma_A\|_1  \\
\text{Bures metric:}& \quad \BR(\rho_A,\sigma_A) \defeq \sqrt{1-\F(\rho_A,\sigma_A)}. 
\end{align*}
\end{definition}

Note that for any two quantum states $\rho_A$ and $\sigma_A$, these distance measures lie in $[0,1]$. The distance measures are $0$ if and only if the states are equal, and the distance measures are $1$ if and only if the states have orthogonal support, i.e., if $\rho_A \rho_B = 0$.

Conveniently, these measures are closely related.

\begin{fact}\label{fact:deltabures}For all quantum states $\rho_A,\sigma_A \in \D(A)$, we have
\begin{align*}
&\quad 1-\F(\rho_A,\sigma_A) \leq \Delta(\rho_A,\sigma_A) \leq \sqrt{2} \cdot \BR(\rho_A,\sigma_A). 
\end{align*}
\end{fact}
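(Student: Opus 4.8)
The plan is to establish \fct{deltabures} by proving the two inequalities separately, both of which are standard facts about trace distance and fidelity that follow from the Fuchs–van de Graaf inequalities (or can be derived from first principles using the variational characterizations of the quantities involved).

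\textbf{Lower bound $1-\F(\rho_A,\sigma_A) \le \Delta(\rho_A,\sigma_A)$.} First I would recall Uhlmann's theorem: $\F(\rho_A,\sigma_A) = \max |\langle \psi | \phi \rangle|$ over all purifications $|\psi\rangle$ of $\rho_A$ and $|\phi\rangle$ of $\sigma_A$ in a common space $AB$. Fix purifications achieving this maximum. Since the trace distance is non-increasing under the partial trace (a CPTP map), $\Delta(\rho_A,\sigma_A) \le \Delta(|\psi\rangle\langle\psi|, |\phi\rangle\langle\phi|)$. For two pure states, a direct computation of the eigenvalues of the (rank-$\le 2$) difference gives $\Delta(|\psi\rangle\langle\psi|,|\phi\rangle\langle\phi|) = \sqrt{1-|\langle\psi|\phi\rangle|^2}$. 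Combining, $\Delta(\rho_A,\sigma_A) \le \sqrt{1-\F(\rho_A,\sigma_A)^2} = \sqrt{(1-\F)(1+\F)} \le \sqrt{2(1-\F)}$; but in fact we want the slightly stronger $1-\F \le \Delta$, which follows because $1-\F \le 1-\F^2 = (1-\F)(1+\F)$ only goes the wrong way, so instead I would argue directly: $1 - \F \le \sqrt{1-\F^2}$ whenever $\F \in [0,1]$ (square both sides: $1 - 2\F + \F^2 \le 1 - \F^2 \iff 2\F^2 \le 2\F \iff \F \le 1$), hence $1-\F(\rho_A,\sigma_A) \le \sqrt{1-\F(\rho_A,\sigma_A)^2} \le \Delta(\rho_A,\sigma_A)$.

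\textbf{Upper bound $\Delta(\rho_A,\sigma_A) \le \sqrt{2}\,\BR(\rho_A,\sigma_A)$.} Unwinding the definition $\BR(\rho_A,\sigma_A) = \sqrt{1-\F(\rho_A,\sigma_A)}$, this is equivalent to $\Delta(\rho_A,\sigma_A) \le \sqrt{2(1-\F(\rho_A,\sigma_A))}$, i.e.\ $\Delta^2 \le 2(1-\F)$. From the pure-state identity above and the fact that the optimal purifications saturate $|\langle\psi|\phi\rangle| = \F$, together with monotonicity of trace distance under partial trace, we get $\Delta(\rho_A,\sigma_A)^2 \le 1 - \F^2 = (1-\F)(1+\F) \le 2(1-\F)$, which is exactly what is needed. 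I would also note the one subtlety to handle cleanly: the pure-state trace-distance formula requires verifying that $|\psi\rangle\langle\psi| - |\phi\rangle\langle\phi|$ has eigenvalues $\pm\sqrt{1-|\langle\psi|\phi\rangle|^2}$, which one does by restricting to the (at most two-dimensional) span of $|\psi\rangle, |\phi\rangle$ and diagonalizing the resulting $2\times 2$ traceless Hermitian matrix.

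The main obstacle — really the only non-bookkeeping point — is invoking Uhlmann's theorem and the monotonicity of trace distance under partial trace in a self-contained way; both are standard and can simply be cited to \cite{NC00,Wat16}, so this "obstacle" is minor. The rest is the elementary inequality $\F \le 1 \Rightarrow 1-\F \le \sqrt{1-\F^2} \le \sqrt{2(1-\F)}$ and the explicit two-state calculation, neither of which presents any real difficulty.
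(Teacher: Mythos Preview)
Your argument for the upper bound $\Delta \le \sqrt{2}\,\BR$ is fine: Uhlmann plus monotonicity of trace distance under partial trace correctly gives $\Delta(\rho_A,\sigma_A) \le \sqrt{1-\F^2} \le \sqrt{2(1-\F)}$. This is exactly the Fuchs--van~de~Graaf upper bound, and the paper's proof does the same thing (citing the inequality and then observing $1-\F^2 \le 2(1-\F)$).

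However, your argument for the lower bound $1-\F \le \Delta$ is broken. The Uhlmann-plus-monotonicity step you carry out yields $\Delta(\rho_A,\sigma_A) \le \Delta(|\psi\rangle\langle\psi|,|\phi\rangle\langle\phi|) = \sqrt{1-\F^2}$, an \emph{upper} bound on $\Delta$. In your concluding chain ``$1-\F \le \sqrt{1-\F^2} \le \Delta$'' the second inequality is precisely the opposite of what you just derived. Purifying and using contractivity of trace distance can only push $\Delta$ down, never up, so this route cannot give a lower bound on $\Delta$. The standard proof of $1-\F \le \Delta$ goes the other direction: one uses that there is a measurement whose induced classical distributions $p,q$ satisfy $\F(p,q)=\F(\rho,\sigma)$, together with the classical inequality $1-\F(p,q)\le \Delta(p,q)$ and monotonicity $\Delta(p,q)\le \Delta(\rho,\sigma)$. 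The paper sidesteps all of this by simply citing the Fuchs--van~de~Graaf inequalities \cite{FvdG99,Wat16} and then doing the one-line estimate $1-\F^2 \le 2-2\F$; if you want a self-contained argument you need to supply the measurement-based proof for the lower half, not the purification-based one.
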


\begin{proof} The Fuchs-van de Graaf inequalities~\cite{FvdG99,Wat16} state that 
\begin{align*}
&\quad 1-\F(\rho_A,\sigma_A) \leq \Delta(\rho_A,\sigma_A) \leq \sqrt{1-\F^2(\rho_A,\sigma_A)}. 
\end{align*}
Our fact follows from this and the relation $1-\F^2(\rho_A,\sigma_A) \leq 2-2\F(\rho_A,\sigma_A)$.
\end{proof}

A fundamental fact about quantum states is Uhlmann's theorem~\cite{uhlmann76}.
\begin{fact}[Uhlmann's theorem]
\label{fact:uhlmann}
Let $\rho_A,\sigma_A\in \D(A)$. Let $\rho_{AB}\in \D(AB)$ be a purification of $\rho_A$ and $\sigma_{AB}\in\D(AB)$ be a purification of $\sigma_A$ with. There exists a unitary $\U: \H_B \rightarrow \H_B$ such that
 $$\F(\ketbra{\theta}_{AB}, \ketbra{\rho}_{AB}) = \F(\rho_A,\sigma_A) ,$$
 where $\ket{\theta}_{AB} = (\id_A \otimes \U) \ket{\sigma}_{AB}$.
Trivially, the same holds for the Bures metric $\BR$ as well. 
\end{fact}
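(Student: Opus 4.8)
The plan is to reduce Uhlmann's theorem to the variational characterization of the trace norm, namely $\max_{W\in\U(B)}|\Tr(WA)| = \onenorm{A}$ for any operator $A$. First I would recall the standard fact that any two purifications of the same state on a common register $AB$ differ by a unitary on $B$ alone; hence it suffices to establish the claim for one convenient pair of purifications, with the general case following by absorbing the connecting unitaries into $\U$. Then, fixing an orthonormal basis $\{\ket{i}\}$ of $\H_A$ and (enlarging $\H_B$ if necessary so that $\dim\H_B\ge\dim\H_A$) writing the unnormalized maximally entangled vector $\ket{\Omega}=\sum_i\ket{i}_A\ket{i}_B$, I would use that every purification of $\rho_A$ on $AB$ has the form $\ket{\rho}_{AB}=(M_\rho\otimes\id_B)\ket{\Omega}$ with $M_\rho M_\rho^{\dagger}=\rho_A$; in particular one may take $M_\rho=\sqrt{\rho_A}$ and $M_\sigma=\sqrt{\sigma_A}$, composed with the embedding of $\H_A$ into $\H_B$.

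The key computation is then $\<\rho|\theta\>_{AB}=\Tr\!\bigl(M_\rho^{\dagger}M_\sigma\,U^{T}\bigr)$ for $\ket{\theta}_{AB}=(\id_A\otimes U)\ket{\sigma}_{AB}$, obtained from the identities $\<\Omega|(X\otimes\id_B)|\Omega\>=\Tr(X)$ and $(\id_A\otimes U)\ket{\Omega}=(U^{T}\otimes\id_B)\ket{\Omega}$, the transpose taken in the chosen basis. Since $U\mapsto U^{T}$ is a bijection of $\U(B)$, maximizing over $U$ gives $\max_U|\<\rho|\theta\>|=\max_{W\in\U(B)}|\Tr(M_\rho^{\dagger}M_\sigma W)|=\onenorm{M_\rho^{\dagger}M_\sigma}=\onenorm{\sqrt{\rho_A}\sqrt{\sigma_A}}=\F(\rho_A,\sigma_A)$, where the trace-norm formula is proved by taking the singular value decomposition of $M_\rho^{\dagger}M_\sigma$ and choosing $W$ to align its singular vectors. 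Recalling that for pure states fidelity equals the absolute value of the overlap, this produces a unitary $\U$ attaining $\F(\ketbra{\theta}_{AB},\rho_{AB})=\F(\rho_A,\sigma_A)$, which is exactly the assertion. If one also wants the inequality in the other direction for arbitrary $U$, it is immediate from monotonicity of fidelity under the partial trace $\Tr_B$. The statement for the Bures metric $\BR$ follows since $\BR$ is a fixed monotone function of $\F$.

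The only real obstacle here is bookkeeping: keeping track of the transposes and of the dimension padding in the maximally-entangled-state formalism, and stating the first reduction step for purifications living on a common register $B$. There is no conceptual difficulty; the mathematical content is entirely contained in the trace-norm variational principle.
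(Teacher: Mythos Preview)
The paper does not prove this fact at all: it is stated as a known result with a citation to Uhlmann's original paper and used as a black box. Your proposal is therefore not comparable to any argument in the paper, but it is a correct and standard proof of Uhlmann's theorem via the maximally-entangled-state parametrization of purifications and the variational formula $\onenorm{A}=\max_{W\in\U(B)}|\Tr(WA)|$; the bookkeeping you flag (dimension padding, transposes, reduction to canonical purifications) is handled appropriately.
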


We now review some properties of the Bures metric that we use in our proofs.

\begin{fact}[Facts about $\BR$]\label{fact:relation-inf} For all quantum states $\rho_A, \rho'_A, \sigma_A, \sigma'_A \in \D(A)$, we have the following.
\end{fact}
\begin{factpart}[Triangle inequality \cite{Bures69}]\label{fact:triangle} The following triangle inequality and a weak triangle inequality hold for the Bures metric and the square of the Bures metric. 
\begin{enumerate}
\item $\BR(\rho_A,\sigma_A) \leq \BR(\rho_A,\tau_A) + \BR(\tau_A,\sigma_A).$
\item $\BR^2(\rho_A^1, \rho_A^{t+1}) \le t \cdot \sum_{i=1}^t \BR^2(\rho_A^i, \rho_A^{i+1}).$
\end{enumerate}
\end{factpart}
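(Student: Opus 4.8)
The plan is to prove both inequalities from properties of fidelity, using Uhlmann's theorem (\fct{uhlmann}) to reduce the mixed-state statements to statements about pure states, where $\BR$ becomes a genuine metric inherited from the Hilbert-space angle. For part (1), the key fact is that $\sqrt{1-\F}$ satisfies the triangle inequality. I would first recall that for pure states $|\psi\>,|\phi\>$ one has $\F(\psi,\phi)=|\<\psi|\phi\>|=\cos\theta(\psi,\phi)$ where $\theta$ is the angle between the rays, and that $\sqrt{1-\cos\theta}=\sqrt{2}\,\sin(\theta/2)$ up to the constant is (proportional to) the chordal/angular distance on the unit sphere, hence a metric; more carefully, $\BR(\psi,\phi)=\sqrt{1-|\<\psi|\phi\>|}$ is the so-called Bures angle's cousin, and one can check directly that it obeys the triangle inequality on pure states. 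To lift to mixed states $\rho_A,\sigma_A,\tau_A$: fix a purification $|\tau\>_{AB}$ of $\tau_A$; by Uhlmann's theorem choose purifications $|\rho\>_{AB}$ and $|\sigma\>_{AB}$ (on the same $AB$) with $\F(\rho_A,\tau_A)=|\<\rho|\tau\>_{AB}|$ and $\F(\sigma_A,\tau_A)=|\<\sigma|\tau\>_{AB}|$. Then $\F(\rho_A,\sigma_A)\ge |\<\rho|\sigma\>_{AB}|$ since fidelity is non-decreasing under partial trace (equivalently, the Uhlmann maximization over purifications), so $\BR(\rho_A,\sigma_A)\le \BR(|\rho\>_{AB},|\sigma\>_{AB})$, and now apply the pure-state triangle inequality together with $\BR(|\rho\>_{AB},|\tau\>_{AB})=\BR(\rho_A,\tau_A)$ and $\BR(|\sigma\>_{AB},|\tau\>_{AB})=\BR(\sigma_A,\tau_A)$.

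For part (2), the weak triangle inequality for $\BR^2$ over a chain $\rho_A^1,\dots,\rho_A^{t+1}$, I would apply part (1) iteratively to get $\BR(\rho_A^1,\rho_A^{t+1})\le\sum_{i=1}^t \BR(\rho_A^i,\rho_A^{i+1})$, then square and invoke the Cauchy--Schwarz (power-mean) inequality $\bigl(\sum_{i=1}^t a_i\bigr)^2 \le t\sum_{i=1}^t a_i^2$ with $a_i=\BR(\rho_A^i,\rho_A^{i+1})$. This gives exactly $\BR^2(\rho_A^1,\rho_A^{t+1})\le t\sum_{i=1}^t \BR^2(\rho_A^i,\rho_A^{i+1})$.

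The only genuinely delicate step is verifying the pure-state triangle inequality for $f(\psi,\phi)=\sqrt{1-|\<\psi|\phi\>|}$: the absolute value (as opposed to working with the Bures angle $\arccos|\<\psi|\phi\>|$ directly) requires a little care, since one wants to choose global phases so that the inner products are nonnegative reals and then compare with the chordal metric $\sqrt{2-2\,\mathrm{Re}\<\psi|\phi\>}$ on the sphere. In fact it is cleanest to note $\BR(\psi,\phi)^2 = 1-|\<\psi|\phi\>| \le 1 - \mathrm{Re}\<\psi|\phi\> = \tfrac12\||\psi\>-|\phi\>\|^2$ after fixing phases to make $\<\psi|\phi\>\ge 0$, and conversely that one can always realize the minimum of $\||\psi\>-e^{i\alpha}|\phi\>\|$ over $\alpha$, reducing to the Euclidean triangle inequality. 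Since this is a standard fact (it is exactly the statement cited to \cite{Bures69}), I would state it, give the one-line phase-optimization reduction to the Euclidean triangle inequality, and move on; the Uhlmann lifting and the Cauchy--Schwarz squaring are then routine.
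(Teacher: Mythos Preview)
Your proposal is correct. For part~2 you take exactly the same route as the paper: iterate part~1 to get $\BR(\rho_A^1,\rho_A^{t+1})\le\sum_{i=1}^t \BR(\rho_A^i,\rho_A^{i+1})$ and then square, using the elementary inequality $\bigl(\sum_i a_i\bigr)^2\le t\sum_i a_i^2$ (you phrase it as Cauchy--Schwarz, the paper expands and uses $2a_ia_j\le a_i^2+a_j^2$; same thing).

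The only difference is in part~1: the paper does not actually prove the triangle inequality for $\BR$ but simply takes it as the cited result of \cite{Bures69}. You instead sketch a proof via the standard Uhlmann-lifting argument, reducing to the pure-state case and then to the Euclidean triangle inequality after phase optimization. That sketch is sound (the ``delicate step'' you flag---that $\sqrt{1-|\langle\psi|\phi\rangle|}$ is a metric on projective space---is indeed the content of the cited reference and can be handled as you indicate), so you end up with strictly more than the paper provides here.
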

\begin{factpart}[Product states]\label{fact:prod} 
$\BR(\rho_A\otimes \sigma_A, \rho'_A \otimes \sigma'_A)   \leq  \BR(\rho_A, \rho'_A) + \BR(\sigma_A, \sigma'_A).$
Additionally, if $\sigma_A=\sigma'_A$ then $\BR(\rho_A\otimes \sigma_A, \rho'_A \otimes \sigma'_A)  = \BR(\rho_A, \rho'_A)$. 
\end{factpart}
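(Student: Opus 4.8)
The statement to prove is \fct{prod}: the Bures metric is subadditive under tensor products, with equality when the second factor is held fixed.

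\textbf{Proof plan.}
The plan is to reduce everything to the multiplicativity of fidelity under tensor products, namely the identity $\F(\rho_A\otimes\sigma_A,\rho'_A\otimes\sigma'_A)=\F(\rho_A,\rho'_A)\cdot\F(\sigma_A,\sigma'_A)$. This identity follows directly from the definition $\F(\cdot,\cdot)=\onenorm{\sqrt{\cdot}\sqrt{\cdot}}$ together with the fact that $\sqrt{\rho\otimes\sigma}=\sqrt{\rho}\otimes\sqrt{\sigma}$ (as both sides are positive semidefinite squaring to $\rho\otimes\sigma$) and the multiplicativity of the trace norm under tensor products, $\onenorm{M\otimes N}=\onenorm{M}\cdot\onenorm{N}$. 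So $\F(\rho_A\otimes\sigma_A,\rho'_A\otimes\sigma'_A)=\onenorm{(\sqrt{\rho_A}\sqrt{\rho'_A})\otimes(\sqrt{\sigma_A}\sqrt{\sigma'_A})}=\F(\rho_A,\rho'_A)\F(\sigma_A,\sigma'_A)$.

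Granting this, write $p=\F(\rho_A,\rho'_A)\in[0,1]$ and $q=\F(\sigma_A,\sigma'_A)\in[0,1]$. Then $\BR^2(\rho_A\otimes\sigma_A,\rho'_A\otimes\sigma'_A)=1-pq$, while $\bigl(\BR(\rho_A,\rho'_A)+\BR(\sigma_A,\sigma'_A)\bigr)^2=(\sqrt{1-p}+\sqrt{1-q})^2=(1-p)+(1-q)+2\sqrt{(1-p)(1-q)}$. So the inequality reduces to the elementary claim that $1-pq\le 2-p-q+2\sqrt{(1-p)(1-q)}$ for all $p,q\in[0,1]$. Rearranging, this is $p+q-1-pq\le 2\sqrt{(1-p)(1-q)}$, i.e. $-(1-p)(1-q)\le 2\sqrt{(1-p)(1-q)}$, which is immediate since the left side is $\le 0$ and the right side is $\ge 0$. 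This proves the first inequality. For the ``additionally'' clause, if $\sigma_A=\sigma'_A$ then $q=\F(\sigma_A,\sigma_A)=\onenorm{\sigma_A}=\Tr(\sigma_A)=1$, so $\BR^2(\rho_A\otimes\sigma_A,\rho'_A\otimes\sigma_A)=1-p\cdot 1=1-p=\BR^2(\rho_A,\rho'_A)$, and taking square roots gives equality.

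\textbf{Main obstacle.} There is no real obstacle here; the only point requiring a little care is justifying the multiplicativity of fidelity under tensor products, i.e. verifying $\sqrt{\rho\otimes\sigma}=\sqrt{\rho}\otimes\sqrt{\sigma}$ and $\onenorm{M\otimes N}=\onenorm{M}\onenorm{N}$. The former holds because $\sqrt{\rho}\otimes\sqrt{\sigma}$ is positive semidefinite and its square is $\rho\otimes\sigma$, and positive square roots are unique; the latter follows because the singular values of $M\otimes N$ are the pairwise products of the singular values of $M$ and of $N$. Both are standard and can be cited or dispatched in a line. Everything after that is the elementary inequality above.
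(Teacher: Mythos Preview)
Your proof is correct. It differs from the paper's, which dispatches the fact in one line: ``Follows easily from the triangle inequality.'' The intended argument there is
\[
\BR(\rho_A\otimes\sigma_A,\rho'_A\otimes\sigma'_A)\le \BR(\rho_A\otimes\sigma_A,\rho'_A\otimes\sigma_A)+\BR(\rho'_A\otimes\sigma_A,\rho'_A\otimes\sigma'_A)=\BR(\rho_A,\rho'_A)+\BR(\sigma_A,\sigma'_A),
\]
where the last step uses the equality clause with one factor fixed (which itself comes from fidelity multiplicativity with $q=1$, exactly as you note). So the paper first establishes the equality case and then feeds it into the triangle inequality (\fct{triangle}) to get the subadditivity. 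You instead bypass the triangle inequality entirely, going straight from $\F(\rho\otimes\sigma,\rho'\otimes\sigma')=pq$ to the elementary real-variable inequality $1-pq\le(\sqrt{1-p}+\sqrt{1-q})^2$. Both routes are short and rest on the same underlying fact (multiplicativity of fidelity); the paper's is terser given that the triangle inequality is already on hand, while yours is more self-contained and makes no appeal to other parts of \autoref{fact:relation-inf}.
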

\begin{factpart}[Partial measurement]\label{fact:subadd} For classical-quantum states $\theta_{XB},\theta'_{XB}$ with same probability distribution on the classical part, we have
$$\BR^2(\theta_{XB},\theta'_{XB}) = \expec_{x\leftarrow X} [\BR^2(\theta^x_B, \theta'^x_B)].$$
\end{factpart}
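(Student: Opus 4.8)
The plan is to prove \fct{subadd}, the statement that for classical–quantum states $\theta_{XB}=\sum_x \mu(x)\ketbra{x}_X\otimes\theta^x_B$ and $\theta'_{XB}=\sum_x \mu(x)\ketbra{x}_X\otimes\theta'^x_B$ sharing the \emph{same} marginal $\mu$ on the classical register, we have $\BR^2(\theta_{XB},\theta'_{XB})=\expec_{x\leftarrow X}[\BR^2(\theta^x_B,\theta'^x_B)]$. Since $\BR^2(\rho,\sigma)=1-\F(\rho,\sigma)$ by definition, this is equivalent to the fidelity identity $\F(\theta_{XB},\theta'_{XB})=\sum_x\mu(x)\,\F(\theta^x_B,\theta'^x_B)$, so the whole task reduces to computing the fidelity of two classical–quantum states that are block-diagonal in the same classical basis.

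First I would observe that because $\{\ket{x}\}_x$ is an orthonormal basis for $\H_X$, both $\theta_{XB}$ and $\theta'_{XB}$ are block diagonal with respect to the decomposition $\H_X\otimes\H_B=\bigoplus_x \H_B$: the $x$-th block of $\theta_{XB}$ is $\mu(x)\theta^x_B$ and the $x$-th block of $\theta'_{XB}$ is $\mu(x)\theta'^x_B$. Square roots respect this block structure, so $\sqrt{\theta_{XB}}=\bigoplus_x\sqrt{\mu(x)}\sqrt{\theta^x_B}$ and likewise for $\theta'_{XB}$; hence the product $\sqrt{\theta_{XB}}\sqrt{\theta'_{XB}}$ is block diagonal with $x$-th block $\mu(x)\sqrt{\theta^x_B}\sqrt{\theta'^x_B}$. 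The trace norm of a block-diagonal operator is the sum of the trace norms of its blocks (the singular values of a direct sum are the union of the singular values of the summands), so
\[
\F(\theta_{XB},\theta'_{XB})=\onenorm{\sqrt{\theta_{XB}}\sqrt{\theta'_{XB}}}=\sum_x \mu(x)\,\onenorm{\sqrt{\theta^x_B}\sqrt{\theta'^x_B}}=\sum_x\mu(x)\,\F(\theta^x_B,\theta'^x_B).
\]
Then I would subtract from $1=\sum_x\mu(x)$ to get $\BR^2(\theta_{XB},\theta'_{XB})=1-\F(\theta_{XB},\theta'_{XB})=\sum_x\mu(x)(1-\F(\theta^x_B,\theta'^x_B))=\expec_{x\leftarrow X}[\BR^2(\theta^x_B,\theta'^x_B)]$, which is the claim.

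There is no real obstacle here; the one point that needs a sentence of care is the claim that the operator square root and the trace norm interact correctly with the orthogonal direct-sum structure induced by the shared classical basis. This is where the hypothesis that the two states have the \emph{same} distribution $\mu$ on the classical part is used: it guarantees that the two operators are simultaneously block diagonal in \emph{the same} decomposition of $\H_X\otimes\H_B$, which is exactly what lets the square roots and the product be taken blockwise. (If the classical marginals differed one could still block-diagonalize, but the cross terms $\ketbra{x}{x'}$ with $x\neq x'$ would vanish only because of orthogonality of the $\ket{x}$, not because of any matching of weights; the weight-matching is what makes the final expression a genuine expectation rather than just a sum.) I would state the block-diagonal/trace-norm fact explicitly as the key step and then the computation above is a two-line chain.
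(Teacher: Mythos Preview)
Your proof is correct and is essentially the same as the paper's: both reduce the identity to showing $\F(\theta_{XB},\theta'_{XB})=\sum_x\mu(x)\,\F(\theta^x_B,\theta'^x_B)$ by exploiting the block-diagonal structure induced by the shared classical basis, then subtract from~$1$. The only cosmetic difference is that the paper carries out the computation using the equivalent form $\F(\rho,\sigma)=\Tr\sqrt{\sqrt{\rho}\,\sigma\,\sqrt{\rho}}$, whereas you work directly with the trace-norm definition $\F(\rho,\sigma)=\onenorm{\sqrt{\rho}\sqrt{\sigma}}$ stated in the paper; the underlying argument is identical.
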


\begin{proof}
These facts are proved as follows.
\renewcommand{\theenumi}{\bfseries \Alph{enumi}}
\begin{enumerate}
\item 
Proof of part $2$ follows from triangle inequality and the fact that for positive reals $a_1,a_2,\ldots a_t$, 
$$\Bigl(\sum_i a_i\Bigr)^2 = \sum_i a_i^2+2\sum_{i<j}a_i\cdot a_j \leq \sum_i a_i^2 + \sum_{i<j} \Bigl(a^2_i + a^2_j\Bigr) \leq t\Bigl(\sum_i a_i^2\Bigr).$$

\item 
Follows easily from the triangle inequality. 

\item 
Let $\theta_{XB} = \sum_x p(x) \ketbra{x}\otimes \theta^x_B$ and $\theta'_{XB} = \sum_x p(x) \ketbra{x}\otimes \theta'^x_B$. Then 
\begin{align*}
\F(\theta_{XB},\theta'_{XB})  &=    \Tr\(\sqrt{\sum_x p^2(x) \ketbra{x}\otimes \sqrt{\theta^x_B}\theta'^x_B\sqrt{\theta^x_B}}\)\\ 
&=  \Tr\(\sum_x p(x) \ketbra{x}\otimes \sqrt{\sqrt{\theta^x_B}\theta'^x_B\sqrt{\theta^x_B}}\) \\
&= \sum_x p(x)\F(\theta^x_B,\theta'^x_B) \\
&= \expec_{x\leftarrow X}[\F(\theta^x_B,\theta'^x_B)],
 \end{align*}
which proves the fact.\qedhere
\end{enumerate}
\end{proof}

Finally, an important property of both these distance measures is monotonicity under quantum operations \cite{lindblad75,barnum96}.

\begin{fact}[Monotonicity under quantum operations]
\label{fact:monotonicitydistance}
For quantum states $\rho_A$, $\sigma_A \in \D(A)$, and a quantum operation $\E(\cdot):\L(A)\rightarrow \L(B)$, it holds that
\begin{align*}
	\Delta(\E(\rho) , \E(\sigma)) \leq \Delta(\rho_A,\sigma_A) \quad \mbox{and} \quad \BR(\E(\rho_A),\E(\sigma_A)) \leq \BR(\rho_A,\sigma_A),
\end{align*}
with equality if $\E$ is unitary. 
In particular, for bipartite states $\rho_{AB},\sigma_{AB}\in \D(AB)$, it holds that
\begin{align*}
	\Delta(\rho_{AB},\sigma_{AB}) \geq \Delta(\rho_A,\sigma_A) \quad \mbox{and} \quad \BR(\rho_{AB},\sigma_{AB}) \geq \BR(\rho_A,\sigma_A).
\end{align*}
\end{fact}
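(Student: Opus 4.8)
The plan is to handle the two distance measures separately and then read off the bipartite statement as the special case of a partial trace. For the trace distance, I would start from the variational characterization
\[
\Delta(\rho_A,\sigma_A) \;=\; \max\set{\Tr(M(\rho_A-\sigma_A)) \,:\, 0\preceq M\preceq \id_A},
\]
which follows by diagonalizing the Hermitian operator $\rho_A-\sigma_A$ and taking $M$ to be the projector onto its nonnegative eigenspace. Given a channel $\E:A\to B$, let $\E^*$ denote its adjoint (Heisenberg-picture) map; because $\E$ is completely positive and trace preserving, $\E^*$ is completely positive and unital, so $0\preceq \E^*(M)\preceq \id_A$ whenever $0\preceq M\preceq \id_B$. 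Then for every such $M$ we have $\Tr\bigl(M(\E(\rho_A)-\E(\sigma_A))\bigr)=\Tr\bigl(\E^*(M)(\rho_A-\sigma_A)\bigr)\le\Delta(\rho_A,\sigma_A)$, and maximizing over $M$ gives $\Delta(\E(\rho_A),\E(\sigma_A))\le\Delta(\rho_A,\sigma_A)$.

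For the Bures metric, since $\BR=\sqrt{1-\F}$ is a decreasing function of the fidelity, it suffices to prove $\F(\E(\rho_A),\E(\sigma_A))\ge\F(\rho_A,\sigma_A)$. Here I would invoke Uhlmann's theorem (\fct{uhlmann}): choose purifications $|\psi\rangle$ of $\rho_A$ and $|\phi\rangle$ of $\sigma_A$ in a common Hilbert space with $|\langle\psi|\phi\rangle|=\F(\rho_A,\sigma_A)$, and a Stinespring dilation $\E(\cdot)=\Tr_C\bigl(V\cdot V^\dagger\bigr)$ with $V:\H_A\to\H_B\otimes\H_C$ an isometry. Then $(V\otimes\id)|\psi\rangle$ and $(V\otimes\id)|\phi\rangle$ are purifications of $\E(\rho_A)$ and $\E(\sigma_A)$, and since $V$ is an isometry their overlap is still $|\langle\psi|\phi\rangle|=\F(\rho_A,\sigma_A)$; applying Uhlmann's theorem again, $\F(\E(\rho_A),\E(\sigma_A))$ is at least this overlap, which proves the fidelity bound and hence $\BR(\E(\rho_A),\E(\sigma_A))\le\BR(\rho_A,\sigma_A)$.

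For the equality clause when $\E$ is unitary, note that a unitary channel $\rho\mapsto U\rho U^\dagger$ is invertible with inverse $\rho\mapsto U^\dagger\rho U$, which is again a channel, so applying monotonicity in both directions forces equality; alternatively one checks directly that $\onenorm{U(\rho-\sigma)U^\dagger}=\onenorm{\rho-\sigma}$ and $\onenorm{\sqrt{U\rho U^\dagger}\sqrt{U\sigma U^\dagger}}=\onenorm{U\sqrt{\rho}\sqrt{\sigma}\,U^\dagger}=\onenorm{\sqrt{\rho}\sqrt{\sigma}}$. Finally, the ``in particular'' statement is just the case $\E=\Tr_B:\D(AB)\to\D(A)$ applied to the two inequalities above. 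The only genuinely nontrivial ingredient is the monotonicity of fidelity, which really does require Uhlmann's theorem (or an equivalent Stinespring argument); everything else is bookkeeping. The one point of care is to enlarge the dilation space enough that $(V\otimes\id)|\psi\rangle$ lives in a Hilbert space that can support a purification of $\E(\rho_A)$, so that the second application of Uhlmann's theorem is legitimate — but this is routine.
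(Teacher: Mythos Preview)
Your proof is correct and complete. The paper, however, does not prove this fact at all: it simply states it with citations to Lindblad (1975) and Barnum et al.\ (1996) as a standard result from the literature. Your argument via the variational formula for $\Delta$ plus the dual channel, and via Uhlmann's theorem plus a Stinespring dilation for $\F$, is the standard textbook route and is entirely sound; you have supplied strictly more than the paper does here.
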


\subsubsection{Mutual information and relative entropy}

We start with the following fundamental information theoretic quantities. We refer the reader to the excellent sources for quantum information theory \cite{Wil12, Wat16} for further study.

\begin{definition}\label{def:relentropy}
Let $\rho_A \in \D(A)$ be a quantum state and $\sigma_A \in \D(A)$ be another quantum state on the same space with  $\text{supp}(\rho_A) \subset \text{supp}(\sigma_A)$. We then define the following.
\begin{align*}
\text{von Neumann entropy:}& \quad \textrm{S}(\rho_A) \defeq - \Tr(\rho_A\log\rho_A) .\\
\text{Relative entropy:}& \quad \relent{\rho_A}{\sigma_A} \defeq \Tr(\rho_A\log\rho_A) - \Tr(\rho_A\log\sigma_A) .
\end{align*}
\end{definition}

We now define mutual information and conditional mutual information.

\begin{definition}[Mutual information]
\label{def:entropy}
Let  $\rho_{ABC}\in\D(ABC)$ be a quantum state. We define the following measures.
\begin{align*}
\text{Mutual information:}& \quad \I(A:B)_{\rho}\defeq \ent{\rho_A} + \ent{\rho_B}-\ent{\rho_{AB}} = \relent{\rho_{AB}}{\rho_A\otimes\rho_B} .\\
\text{Conditional mutual information:}& \quad \I(A:B~|~C)_{\rho}\defeq \I(A:BC)_{\rho}-\I(A:C)_{\rho}.
\end{align*}
\end{definition}

We will need the following basic properties. 

\begin{fact}[Properties of S and $\I$]
Let $\rho_{ABC}\in\D(ABC)$ be a quantum state. We have the following.
\end{fact}
\begin{factpart}[Nonnegativity]\label{fact:nonneg}
\begin{align*}
\relent{A}{B}_{\rho}  \geq 0 &\text{ and } |A| \ge \ent{A}_\rho  \geq 0\\
\I(A:B)_{\rho}  \geq 0 &\text{ and }\I(A:B~|~C)_{\rho}  \geq 0.  
\end{align*}
\end{factpart}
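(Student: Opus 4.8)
The plan is to reduce all four inequalities to a single source, the nonnegativity of quantum relative entropy (Klein's inequality), and to cite one deep external result for the last one.

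\emph{Relative entropy.} First I would establish $\relent{\rho_A}{\sigma_A}\ge 0$ whenever $\text{supp}(\rho_A)\subseteq\text{supp}(\sigma_A)$. Diagonalize $\rho_A=\sum_i p_i\ketbra{e_i}$ and $\sigma_A=\sum_j q_j\ketbra{f_j}$ in their eigenbases and set $r_{ij}\defeq|\langle e_i|f_j\rangle|^2$; the support hypothesis makes $(r_{ij})$ a doubly stochastic matrix on the relevant indices. Expanding \defn{relentropy} gives $\relent{\rho_A}{\sigma_A}=\sum_i p_i\log p_i-\sum_i p_i\sum_j r_{ij}\log q_j$, and concavity of $\log$ (Jensen) yields $\sum_j r_{ij}\log q_j\le\log\tilde q_i$ with $\tilde q_i\defeq\sum_j r_{ij}q_j$. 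Double stochasticity makes $(\tilde q_i)_i$ a probability vector, so the expression is bounded below by the classical relative entropy $\sum_i p_i\log(p_i/\tilde q_i)\ge 0$, which is itself Jensen's inequality applied to $-\log$. (Alternatively one may simply quote Klein's inequality from the operator convexity of $t\mapsto t\log t$; a third route is monotonicity of relative entropy under the pinching channel onto the eigenbasis of $\sigma_A$, reducing again to the classical case.)

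\emph{Entropy bounds and mutual information.} The bound $\ent{\rho_A}\ge 0$ is immediate since each eigenvalue $p$ of $\rho_A$ lies in $[0,1]$, whence $-p\log p\ge 0$. For $\ent{\rho_A}\le|A|$, apply the previous step with $\sigma_A$ the maximally mixed state $\id_A/\dim\H_A$: then $0\le\relent{\rho_A}{\id_A/\dim\H_A}=-\ent{\rho_A}+\log\dim\H_A=-\ent{\rho_A}+|A|$. Next, $\I(A:B)_\rho\ge 0$ is a direct instance of relative entropy nonnegativity, since by \defn{entropy} $\I(A:B)_\rho=\relent{\rho_{AB}}{\rho_A\otimes\rho_B}$ and $\text{supp}(\rho_{AB})\subseteq\text{supp}(\rho_A)\otimes\text{supp}(\rho_B)=\text{supp}(\rho_A\otimes\rho_B)$ holds automatically.

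\emph{Conditional mutual information.} The remaining inequality $\I(A:B\,|\,C)_\rho\ge 0$ is the strong subadditivity of von Neumann entropy, and this is the one step I would not reprove. I would record that it follows from monotonicity of relative entropy under the partial-trace channel $\Tr_B$ via the identity $\I(A:B\,|\,C)_\rho=\relent{\rho_{ABC}}{(\id_A/\dim\H_A)\otimes\rho_{BC}}-\relent{\rho_{AC}}{(\id_A/\dim\H_A)\otimes\rho_C}$, where the two $\log\dim\H_A$ contributions cancel and $\Tr_B$ sends $\rho_{ABC}\mapsto\rho_{AC}$ and $(\id_A/\dim\H_A)\otimes\rho_{BC}\mapsto(\id_A/\dim\H_A)\otimes\rho_C$, so the claim becomes the data-processing inequality for relative entropy. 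The honest remark, and the main obstacle, lies exactly here: data processing (equivalently strong subadditivity) is a genuinely deep theorem (Lieb--Ruskai, via Lieb's concavity theorem or operator interpolation) that does not belong in a preliminaries section, so in the writeup it should simply be cited, e.g.\ to \cite{Wil12,Wat16}.
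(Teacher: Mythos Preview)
Your proposal is correct. The paper's own ``proof'' of this fact is simply a list of citations: Theorem~11.7 of Nielsen--Chuang for relative-entropy nonnegativity, and Theorems~11.6.1 and~11.7.1 of Wilde for mutual information and conditional mutual information. It does not even separately address the bounds $0\le\ent{A}_\rho\le|A|$.

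Your write-up therefore goes substantially further than the paper does. You actually prove Klein's inequality from Jensen and double stochasticity, derive the entropy bounds from it (via the maximally mixed reference state), recognize $\I(A:B)_\rho\ge 0$ as the special case $\relent{\rho_{AB}}{\rho_A\otimes\rho_B}\ge 0$, and then isolate the genuinely deep step: your reduction of $\I(A:B\,|\,C)_\rho\ge 0$ to data processing for relative entropy under $\Tr_B$ is correct and honest about what must be cited. The only practical difference is that the paper opts to cite everything, whereas you cite only the Lieb--Ruskai step; your version is more self-contained and makes clear exactly where the nontrivial input lies, at the cost of a longer preliminaries section.
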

\begin{factpart}[Partial measurement]\label{fact:relsubadd} For classical-quantum states, $\theta_{XB},\theta'_{XB}$ with same classical distribution on register $X$:
$$\relent{\theta_{XB}}{\theta'_{XB}} = \expec_{x\leftarrow X} [\relent{\theta^x_B}{\theta'^x_B}].$$
\end{factpart}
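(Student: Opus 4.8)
The statement to prove is \fct{relsubadd}: for classical-quantum states $\theta_{XB} = \sum_x p(x)\ketbra{x}\otimes\theta^x_B$ and $\theta'_{XB} = \sum_x p(x)\ketbra{x}\otimes\theta'^x_B$ sharing the same classical distribution $p$ on $X$, the relative entropy decomposes as $\relent{\theta_{XB}}{\theta'_{XB}} = \expec_{x\leftarrow X}[\relent{\theta^x_B}{\theta'^x_B}]$.

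My plan is to compute both sides directly from \defn{relentropy} using the block-diagonal structure. First I would observe that since $X$ is classical and the $\ketbra{x}$ are orthogonal projectors, $\theta_{XB}$ is block-diagonal with blocks $p(x)\theta^x_B$, so $\log\theta_{XB} = \sum_x \ketbra{x}\otimes\log\bigl(p(x)\theta^x_B\bigr) = \sum_x \ketbra{x}\otimes\bigl(\log p(x)\cdot\id + \log\theta^x_B\bigr)$, and similarly $\log\theta'_{XB} = \sum_x \ketbra{x}\otimes\bigl(\log p(x)\cdot\id + \log\theta'^x_B\bigr)$ (using that $\theta'_{XB}$ has the \emph{same} classical marginal $p$; the support condition from \defn{relentropy} ensures $\log\theta'^x_B$ is well-defined on $\mathrm{supp}(\theta^x_B)$ for each $x$). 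Then $\Tr(\theta_{XB}\log\theta_{XB}) = \sum_x p(x)\Tr\bigl(\theta^x_B(\log p(x)\cdot\id + \log\theta^x_B)\bigr) = \sum_x p(x)\log p(x) + \sum_x p(x)\Tr(\theta^x_B\log\theta^x_B)$, and likewise $\Tr(\theta_{XB}\log\theta'_{XB}) = \sum_x p(x)\log p(x) + \sum_x p(x)\Tr(\theta^x_B\log\theta'^x_B)$.

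Subtracting, the $\sum_x p(x)\log p(x)$ terms cancel exactly — this is the crucial point and the whole reason the classical marginals must agree — leaving $\relent{\theta_{XB}}{\theta'_{XB}} = \sum_x p(x)\bigl(\Tr(\theta^x_B\log\theta^x_B) - \Tr(\theta^x_B\log\theta'^x_B)\bigr) = \sum_x p(x)\relent{\theta^x_B}{\theta'^x_B} = \expec_{x\leftarrow X}[\relent{\theta^x_B}{\theta'^x_B}]$, which is the claim.

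I do not anticipate a genuine obstacle here; this is a routine block-diagonal computation. The only points requiring a little care are: (i) checking that the functional calculus really does act blockwise for a block-diagonal operator (true because any polynomial, hence any continuous function via spectral calculus, of a block-diagonal matrix is block-diagonal with the function applied to each block); (ii) being slightly careful about supports so that $\log$ is defined where needed — but this is exactly guaranteed by the hypothesis $\mathrm{supp}(\rho_A)\subset\mathrm{supp}(\sigma_A)$ built into \defn{relentropy}, applied blockwise; and (iii) noting that the cross term $\log p(x)\cdot\id$ contributes $\Tr(\theta^x_B)\log p(x) = \log p(x)$ since $\theta^x_B$ is a density matrix — this is what makes the cancellation clean. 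Alternatively, one could derive this from \fct{relsubadd}'s analogue for fidelity or invoke the fact that relative entropy against a classical system splits, but the direct spectral computation above is the shortest route.
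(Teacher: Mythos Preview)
Your proposal is correct and follows essentially the same approach as the paper: both exploit the block-diagonal structure to compute $\log\theta_{XB}$ and $\log\theta'_{XB}$ blockwise and observe that the $\log p(x)$ contributions cancel because the two states share the same classical marginal. The paper's version is marginally more streamlined in that it subtracts first (writing $\log(p(x)\theta^x_B)-\log(p(x)\theta'^x_B)=\log\theta^x_B-\log\theta'^x_B$ directly) rather than expanding each term and cancelling, but the content is identical.
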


\begin{factpart}[Chain rule]\label{fact:chain-rule}
$\I(A:BC)_{\rho} = \I(A:C)_{\rho} + \I(A:B~|~C)_{\rho}  = \I(A:B)_{\rho} + \I(A:C~|~B)_{\rho}.$
\end{factpart}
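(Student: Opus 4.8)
The plan is to read both equalities off the definition of conditional mutual information given in \defn{entropy}. By that definition, $\I(A:B\mid C)_\rho \defeq \I(A:BC)_\rho - \I(A:C)_\rho$, and rearranging this is exactly the first equality $\I(A:BC)_\rho = \I(A:C)_\rho + \I(A:B\mid C)_\rho$, so the first half of the claim requires no work at all.

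For the second equality, I would use that the register $BC$ is associated with the Hilbert space $\H_B\otimes\H_C$, which is the same space as $\H_C\otimes\H_B$, so that $\I(A:BC)_\rho = \I(A:CB)_\rho$. Applying \defn{entropy} with the roles of $B$ and $C$ interchanged then gives $\I(A:C\mid B)_\rho = \I(A:CB)_\rho - \I(A:B)_\rho = \I(A:BC)_\rho - \I(A:B)_\rho$, and rearranging yields $\I(A:BC)_\rho = \I(A:B)_\rho + \I(A:C\mid B)_\rho$. Combining this with the first identity shows that all three expressions appearing in the statement coincide.

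There is essentially no obstacle here: the identity is pure bookkeeping once conditional mutual information has been defined. If one prefers a derivation that does not invoke the definition symbolically, one can instead substitute $\I(X:Y)_\rho = \ent{\rho_X} + \ent{\rho_Y} - \ent{\rho_{XY}}$ into each term together with the definition of conditional mutual information; after cancellation, each of the three expressions reduces to $\ent{\rho_A} + \ent{\rho_{BC}} - \ent{\rho_{ABC}}$, which gives the claim directly. The only point that deserves a moment of care is the (trivial) identification $\H_{BC} \cong \H_{CB}$ that underlies the symmetry used in the second equality.
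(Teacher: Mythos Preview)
Your proposal is correct and matches the paper's approach: the paper's proof is simply ``Follows from direct calculation,'' and what you have written is exactly that calculation spelled out from \defn{entropy}. You have merely provided the details (rearranging the defining equation $\I(A:B\mid C)_\rho \defeq \I(A:BC)_\rho - \I(A:C)_\rho$ and using the symmetry $BC=CB$) that the paper leaves implicit.
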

\begin{factpart}[Monotonicity]\label{fact:mono}  For a quantum operation $\E(\cdot):\L(A)\rightarrow \L(B)$,
$\I(A:\E(B)) \le I(A:B)$ with equality when $\E$ is unitary. In particular $\I(A:BC)_{\rho} \ge  \I(A:B)_{\rho} .$
\end{factpart}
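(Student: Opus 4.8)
The plan is to treat the four parts in turn; all are standard, and once the relative-entropy identity $\I(A:B)_\rho = \relent{\rho_{AB}}{\rho_A\otimes\rho_B}$ from \defn{entropy} is in hand, the arguments are mostly bookkeeping. For \emph{Nonnegativity}: the bound $\ent{A}_\rho \ge 0$ holds because every eigenvalue $\lambda$ of $\rho_A$ lies in $[0,1]$, so $-\lambda\log\lambda \ge 0$; the bound $\ent{A}_\rho \le |A|$ follows from concavity of $t\mapsto -t\log t$, equivalently from the fact that the maximally mixed state maximizes von Neumann entropy on $\H_A$; $\relent{A}{B}_\rho \ge 0$ is Klein's inequality (see, e.g., \cite{Wat16}); and $\I(A:B)_\rho \ge 0$ is then immediate from the identity above. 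The one remaining inequality, $\I(A:B~|~C)_\rho \ge 0$, is strong subadditivity of von Neumann entropy; I would deduce it from the \emph{Monotonicity} part below by taking the channel there to be the partial trace over $B$, which yields $\I(A:BC)_\rho \ge \I(A:C)_\rho$, i.e.\ $\I(A:B~|~C)_\rho \ge 0$. This is the only genuinely nonelementary point in the statement, and it is precisely where Lindblad's monotonicity theorem enters.

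For the \emph{Partial measurement} identity, write $\theta_{XB} = \sum_x p(x)\,\ketbra{x}_X\otimes\theta^x_B$ and $\theta'_{XB} = \sum_x p(x)\,\ketbra{x}_X\otimes\theta'^x_B$. Both operators are block diagonal in the $X$ basis, so $\log\theta_{XB} = \sum_x \ketbra{x}_X\otimes\bigl(\log p(x)\cdot\id_B + \log\theta^x_B\bigr)$, and likewise for $\theta'_{XB}$, where the hypothesis $\text{supp}(\theta_{XB})\subseteq\text{supp}(\theta'_{XB})$ ensures these logarithms are defined on the relevant subspace. Substituting into $\relent{\theta_{XB}}{\theta'_{XB}} = \Tr(\theta_{XB}\log\theta_{XB}) - \Tr(\theta_{XB}\log\theta'_{XB})$, the $\log p(x)\cdot\id_B$ contributions cancel, and the block structure collapses the two traces to $\sum_x p(x)\bigl[\Tr(\theta^x_B\log\theta^x_B) - \Tr(\theta^x_B\log\theta'^x_B)\bigr] = \expec_{x\leftarrow X}\bigl[\relent{\theta^x_B}{\theta'^x_B}\bigr]$; this mirrors the computation in the proof of \fct{subadd}. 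For the \emph{Chain rule}, the first equality is just the defining equation of conditional mutual information from \defn{entropy}, rewritten; for the second equality, note that the registers $BC$ and $CB$ coincide, so $\I(A:BC)_\rho = \I(A:CB)_\rho$, and applying the definition with the roles of $B$ and $C$ interchanged gives $\I(A:CB)_\rho = \I(A:B)_\rho + \I(A:C~|~B)_\rho$, which combined with the first equality proves the claim. No further input is needed here.

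Finally, for \emph{Monotonicity} I would combine monotonicity of the relative entropy under CPTP maps \cite{lindblad75} with the relative-entropy form of mutual information. Apply the channel $\id_A\otimes\E$ to both arguments of $\relent{\rho_{AB}}{\rho_A\otimes\rho_B}$: the first becomes the joint state of $A$ with $\E(B)$, the second becomes $\rho_A\otimes\E(\rho_B)$ (the product of the respective marginals), so monotonicity reads $\I(A:\E(B)) \le \I(A:B)$, with equality when $\E$ is unitary because relative entropy is invariant under unitary conjugation. Specializing $\E$ to the partial trace $\Tr_C$, a CPTP map, yields the ``in particular'' clause $\I(A:BC)_\rho \ge \I(A:B)_\rho$; the same specialization with $B$ and $C$ exchanged supplies the strong subadditivity inequality invoked above. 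Thus the only real obstacle in the entire statement is importing Lindblad's monotonicity theorem --- everything else is routine manipulation of definitions.
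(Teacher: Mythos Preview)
Your proposal is correct, though you have addressed more than was asked: the statement in question is only the Monotonicity factpart (part D), whereas you also supplied arguments for Nonnegativity, Partial measurement, and Chain rule (parts A--C). For the Monotonicity part itself, the paper's ``proof'' is simply a citation to \cite[Theorem~11.15]{NC00}; your argument --- applying Lindblad's monotonicity of relative entropy to $\id_A\otimes\E$ acting on both arguments of $\relent{\rho_{AB}}{\rho_A\otimes\rho_B}$, then specializing to $\E=\Tr_C$ --- is precisely the standard derivation that reference encapsulates, so there is no substantive difference in approach.
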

\begin{factpart}[Bar hopping]\label{fact:barhopping}
$\I(A:BC)_{\rho}\geq \I(A:B~|~C)_{\rho}$, where equality holds if $\I(A:C)_{\rho}=0$.
\end{factpart}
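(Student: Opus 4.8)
The final statement to prove is \textbf{Bar hopping}: $\I(A:BC)_\rho \geq \I(A:B \mid C)_\rho$, with equality when $\I(A:C)_\rho = 0$.

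\textbf{Approach.} The plan is to derive both claims directly from the chain rule (\fct{chain-rule}) together with nonnegativity of (conditional) mutual information (\fct{nonneg}). These are the only tools one needs, and the argument is a two-line manipulation.

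\textbf{Key steps.} First I would invoke the chain rule in the form $\I(A:BC)_\rho = \I(A:C)_\rho + \I(A:B \mid C)_\rho$. Since $\I(A:C)_\rho \geq 0$ by nonnegativity of mutual information, we immediately get $\I(A:BC)_\rho \geq \I(A:B \mid C)_\rho$, which is the inequality. For the equality condition: if $\I(A:C)_\rho = 0$, then the same chain rule identity gives $\I(A:BC)_\rho = 0 + \I(A:B \mid C)_\rho = \I(A:B \mid C)_\rho$, so equality holds. (If one also wanted the converse direction of the equality characterization, note that equality in $\I(A:BC)_\rho \geq \I(A:B\mid C)_\rho$ forces $\I(A:C)_\rho = 0$ by the same identity, since the two sides differ by exactly $\I(A:C)_\rho$; but as stated we only need the stated direction.)

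\textbf{Main obstacle.} There is essentially no obstacle here: the statement is a formal consequence of facts already established in the excerpt, and the ``hard'' work (the chain rule, which rests on the definition of conditional mutual information, and nonnegativity, which rests on strong subadditivity / Lieb's theorem) has been assumed. The only thing to be careful about is applying the chain rule in the correct grouping — namely conditioning on $C$ rather than on $B$ — so that the leftover term $\I(A:C)_\rho$ is the one whose vanishing is hypothesized in the equality case.
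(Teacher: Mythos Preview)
Your proposal is correct and matches the paper's own proof, which simply states that the fact follows from the chain rule (\fct{chain-rule}) and nonnegativity (\fct{nonneg}). Your careful note about applying the chain rule with the correct grouping (conditioning on $C$) is exactly the right observation.
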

\begin{factpart}[Independence]\label{fact:infoind}
If $\I(B:C)_{\rho}=0$, then $\I(A:BC)_{\rho} \geq \I(A:B)_{\rho} + \I(A:C)_{\rho}$.
\end{factpart}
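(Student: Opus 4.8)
The plan is to derive all six parts, \fct{nonneg} through \fct{infoind}, from three standard results of quantum information theory — positivity of the quantum relative entropy (Klein's inequality), monotonicity of relative entropy under CPTP maps (data processing), and strong subadditivity of the von Neumann entropy — which I would simply quote from \cite{Wil12,Wat16}; everything else is elementary manipulation of the formulas in \defn{relentropy} and \defn{entropy}.

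I would go part by part. For \fct{nonneg}: $\relent{A}{B}_\rho \ge 0$ is Klein's inequality, valid precisely because of the support condition built into \defn{relentropy}; $\ent{A}_\rho \ge 0$ holds since the eigenvalues of $\rho_A$ lie in $[0,1]$, so $-\lambda\log\lambda \ge 0$; $\ent{A}_\rho \le |A|$ follows from $\ent{A}_\rho = \log\dim(\H_A) - \relent{\rho_A}{\id_A/\dim(\H_A)}$, the identity $|A| = \log\dim(\H_A)$, and positivity of relative entropy; $\I(A:B)_\rho \ge 0$ is then immediate from the identity $\I(A:B)_\rho = \relent{\rho_{AB}}{\rho_A\otimes\rho_B}$; and $\I(A:B~|~C)_\rho \ge 0$ is exactly strong subadditivity. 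For \fct{relsubadd} I would write $\theta_{XB}$ and $\theta'_{XB}$ in block-diagonal form over the classical register $X$, observe that $\log$ and products act blockwise, and read off $\relent{\theta_{XB}}{\theta'_{XB}} = \expec_{x\leftarrow X}\relent{\theta^x_B}{\theta'^x_B}$, mirroring the computation already done for \fct{subadd}. \fct{chain-rule} is a one-line unfolding of the definition $\I(A:B~|~C)_\rho \defeq \I(A:BC)_\rho - \I(A:C)_\rho$, the second equality coming from the fact that $\I(A:BC)_\rho = \ent{A}_\rho + \ent{BC}_\rho - \ent{ABC}_\rho$ does not care how the joint register is split. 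For \fct{mono} I would apply data processing to the channel $\id_A\otimes\E$ inside $\I(A:B)_\rho = \relent{\rho_{AB}}{\rho_A\otimes\rho_B}$, noting this channel sends $\rho_{AB}$ to the state on $A\,\E(B)$ and sends $\rho_A\otimes\rho_B$ to $\rho_A\otimes\E(\rho_B)$; equality for unitary $\E$ is reversibility, and the ``in particular'' is the case $\E=\Tr_C$. \fct{barhopping} is then immediate from \fct{chain-rule} together with $\I(A:C)_\rho\ge 0$.

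The only part requiring genuine thought is \fct{infoind}. Here I would use the always-true identity $\I(A:C~|~B)_\rho - \I(A:C)_\rho = \I(B:C~|~A)_\rho - \I(B:C)_\rho$, which one checks by expanding both sides into von Neumann entropies; when $\I(B:C)_\rho = 0$ its right-hand side is $\I(B:C~|~A)_\rho \ge 0$ by strong subadditivity, so $\I(A:C~|~B)_\rho \ge \I(A:C)_\rho$, and combining with \fct{chain-rule} in the form $\I(A:BC)_\rho = \I(A:B)_\rho + \I(A:C~|~B)_\rho$ gives the claim. The ``main obstacle'' is therefore not really a difficulty at all: the heavy lifting (strong subadditivity, data processing) is cited as a black box, and the only thing to watch is getting the entropy bookkeeping in \fct{infoind} exactly right, so I would write that identity out in full rather than leave it to the reader.
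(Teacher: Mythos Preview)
Your argument for \fct{infoind} is correct and essentially the same as the paper's: both start from the chain rule $\I(A:BC)_\rho = \I(A:B)_\rho + \I(A:C~|~B)_\rho$ and then show $\I(A:C~|~B)_\rho \ge \I(A:C)_\rho$ under the hypothesis $\I(B:C)_\rho = 0$; the paper does this via $\I(A:C~|~B)_\rho = \I(AB:C)_\rho - \I(B:C)_\rho$ followed by monotonicity $\I(AB:C)_\rho \ge \I(A:C)_\rho$, which is equivalent to your route through $\I(B:C~|~A)_\rho \ge 0$ by one more application of the chain rule. Your treatments of \fct{nonneg} through \fct{barhopping} likewise match the paper's.
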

\begin{factpart}[Araki-Lieb inequality]\label{fact:araki}
$|\ent{\rho_{AB}} - \ent{\rho_B}| \le \ent{\rho_A}.$
\end{factpart}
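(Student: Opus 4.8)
The plan is to prove the two one-sided bounds hidden inside $\lvert \ent{\rho_{AB}} - \ent{\rho_B}\rvert \le \ent{\rho_A}$ separately. The easy direction, $\ent{\rho_{AB}} - \ent{\rho_B} \le \ent{\rho_A}$, is just subadditivity of von Neumann entropy, and this is already available in the excerpt: unwinding \defn{entropy}, $\I(A:B)_\rho = \ent{\rho_A} + \ent{\rho_B} - \ent{\rho_{AB}}$, and \fct{nonneg} gives $\I(A:B)_\rho \ge 0$, so $\ent{\rho_{AB}} \le \ent{\rho_A} + \ent{\rho_B}$, which rearranges to the claim.

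For the reverse direction, $\ent{\rho_B} - \ent{\rho_{AB}} \le \ent{\rho_A}$, I would pass to a purification. Let $\rho_{ABC} \in \D(ABC)$ be a purification of $\rho_{AB}$, which exists by the discussion of purifications in \sec{qinf}. Since the global state $\rho_{ABC}$ is pure, the Schmidt decomposition across any bipartition equates the entropy of a subsystem with that of its complement; in particular $\ent{\rho_{AB}} = \ent{\rho_C}$ and $\ent{\rho_B} = \ent{\rho_{AC}}$. Now apply the subadditivity bound from the previous paragraph to the pair of registers $A,C$ in the state $\rho_{ABC}$ (using $\I(A:C)_\rho \ge 0$ from \fct{nonneg}): $\ent{\rho_{AC}} \le \ent{\rho_A} + \ent{\rho_C}$. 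Substituting the two identities above turns this into $\ent{\rho_B} \le \ent{\rho_A} + \ent{\rho_{AB}}$, i.e.\ $\ent{\rho_B} - \ent{\rho_{AB}} \le \ent{\rho_A}$.

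Taking the maximum of the two one-sided inequalities gives $\lvert \ent{\rho_{AB}} - \ent{\rho_B}\rvert \le \ent{\rho_A}$. I do not expect a genuine obstacle here: the only ingredient beyond bookkeeping is the standard fact that a pure global state has equal entropies on complementary subsystems (immediate from the Schmidt decomposition), and everything else is a one-line consequence of the nonnegativity of mutual information, which the excerpt has already recorded. The companion items grouped in this same fact (chain rule, bar hopping, independence, monotonicity, partial measurement) are likewise textbook statements that follow directly from the definitions together with strong subadditivity of von Neumann entropy, and for those I would simply refer to \cite{Wil12, Wat16}.
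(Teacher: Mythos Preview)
Your argument is correct and is precisely the standard purification-plus-subadditivity proof of the Araki--Lieb inequality. The paper itself does not spell out any argument for this item; it simply refers to \cite{NC00}, Section~11.3.4, whose proof is exactly the one you outlined. So there is no meaningful difference to report.
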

\begin{factpart}[Information bound]\label{fact:inf-bound}
\begin{align*}
\I(A:BC)_{\rho}  \leq  \I(A:C)_{\rho} + 2 \ent{\rho_B}. 
\end{align*}
\end{factpart}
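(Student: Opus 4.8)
The final statement to prove is Fact~\ref{fact:inf-bound}, the \emph{Information bound}: $\I(A:BC)_\rho \le \I(A:C)_\rho + 2\,\ent{\rho_B}$.

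\medskip

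\textbf{Plan of proof.} The natural route is to start from the chain rule (\fct{chain-rule}), which gives $\I(A:BC)_\rho = \I(A:C)_\rho + \I(A:B\mid C)_\rho$. So it suffices to show that the conditional mutual information obeys $\I(A:B\mid C)_\rho \le 2\,\ent{\rho_B}$. I would expand $\I(A:B\mid C)_\rho$ in terms of von Neumann entropies: $\I(A:B\mid C)_\rho = \ent{\rho_{AC}} + \ent{\rho_{BC}} - \ent{\rho_{ABC}} - \ent{\rho_C}$. Now I pair these four terms into two differences, $\bigl(\ent{\rho_{BC}} - \ent{\rho_C}\bigr) + \bigl(\ent{\rho_{AC}} - \ent{\rho_{ABC}}\bigr)$, and bound each by $\ent{\rho_B}$ using the Araki--Lieb inequality (\fct{araki}): the first difference is $\le \ent{\rho_B}$ by applying Araki--Lieb with the bipartition $(B)(C)$, and the second difference is $\ent{\rho_{AC}} - \ent{\rho_{(AC)B}} \le \ent{\rho_B}$ by applying Araki--Lieb with the bipartition $(AC)(B)$. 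Summing the two bounds gives $\I(A:B\mid C)_\rho \le 2\,\ent{\rho_B}$, and combining with the chain rule finishes the proof.

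\medskip

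\textbf{Key steps, in order.} First, invoke \fct{chain-rule} to peel off $\I(A:C)_\rho$. Second, rewrite $\I(A:B\mid C)_\rho$ as the alternating sum of four von Neumann entropies. Third, group terms and apply \fct{araki} twice (once per grouping). Fourth, add up and substitute back. All steps are one-liners given the facts already established in the excerpt; there is essentially no computation to grind through.

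\medskip

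\textbf{Main obstacle.} There is no serious obstacle here — the only subtlety is bookkeeping: making sure the two applications of Araki--Lieb use the correct bipartitions and the correct sign of the inequality (Araki--Lieb as stated, $|\ent{\rho_{AB}} - \ent{\rho_B}| \le \ent{\rho_A}$, gives both $\ent{\rho_{AB}} - \ent{\rho_B} \le \ent{\rho_A}$ and $\ent{\rho_B} - \ent{\rho_{AB}} \le \ent{\rho_A}$, and I want the first form in both applications, with the roles of $A$ and $B$ chosen appropriately). One should also note that $\I(A:B\mid C)_\rho \ge 0$ by \fct{nonneg}, so the bound is meaningful, though that is not needed for the inequality itself. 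As a sanity check, when $C$ is trivial the claim reduces to $\I(A:B)_\rho \le 2\ent{\rho_B}$, which is the familiar bound $\ent{\rho_A} + \ent{\rho_B} - \ent{\rho_{AB}} \le 2\ent{\rho_B}$, i.e. $\ent{\rho_A} - \ent{\rho_{AB}} \le \ent{\rho_B}$, exactly one instance of Araki--Lieb.
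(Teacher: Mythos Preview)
Your proof is correct and essentially the same as the paper's. The paper writes $\I(A:BC)_\rho = \I(A:C)_\rho + \I(CA:B)_\rho - \I(B:C)_\rho$, drops the nonnegative term $\I(B:C)_\rho$, then expands $\I(CA:B)_\rho = \ent{\rho_B} + \ent{\rho_{CA}} - \ent{\rho_{CAB}}$ and applies \fct{araki} once to bound $\ent{\rho_{CA}} - \ent{\rho_{CAB}} \le \ent{\rho_B}$; unwinding, this is exactly your two inequalities (your bound $\ent{\rho_{BC}} - \ent{\rho_C} \le \ent{\rho_B}$ is equivalent to $\I(B:C)_\rho \ge 0$), just organized via the intermediate quantity $\I(CA:B)_\rho$ rather than the four-entropy expansion of $\I(A:B\mid C)_\rho$.
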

\begin{factpart}[Stronger version of Pinsker's inequality] \label{fact:SvsB} For quantum states $\rho$ and $\sigma$:  
 $$
\relent{\rho}{\sigma} \ge 1 -  \F(\rho, \sigma) = \BR^2(\rho, \sigma) .
$$
\end{factpart}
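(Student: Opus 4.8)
The plan is to deduce the stated inequality from the well-known fact that relative entropy dominates minus the logarithm of the fidelity squared, namely
\[
\relent{\rho}{\sigma} \;\ge\; -\log\F^2(\rho,\sigma),
\]
and then finish with a one-line scalar estimate. If $\text{supp}(\rho)\not\subseteq\text{supp}(\sigma)$ the left-hand side is $+\infty$ and there is nothing to prove, so we may assume the supports are nested. To obtain the displayed bound in a self-contained way I would pass to a classical distribution: by the Fuchs--Caves theorem there is a POVM $\{M_x\}_x$ such that, writing $P(x)=\Tr(M_x\rho)$ and $Q(x)=\Tr(M_x\sigma)$ for the induced distributions, one has $\sum_x\sqrt{P(x)Q(x)}=\F(\rho,\sigma)$. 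Monotonicity (data processing) of relative entropy under the measurement channel gives $\relent{\rho}{\sigma}\ge\relent{P}{Q}$, and a single application of Jensen's inequality to the concave map $\log$ gives $\relent{P}{Q}=-2\sum_x P(x)\log\sqrt{Q(x)/P(x)}\ge -2\log\sum_x\sqrt{P(x)Q(x)}=-\log\F^2(\rho,\sigma)$. Alternatively this is precisely the statement that the order-$1/2$ Petz--R\'enyi relative entropy lower bounds the relative entropy, which is standard (see e.g.~\cite{Wil12}).

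Given this, the remaining steps are elementary. Since $0\le\F(\rho,\sigma)\le 1$ we have $\F^2\le\F$, hence $-\log\F^2(\rho,\sigma)\ge-\log\F(\rho,\sigma)$. By the definition of the Bures metric $\F(\rho,\sigma)=1-\BR^2(\rho,\sigma)$, and the elementary inequality $-\log(1-t)\ge t$ for $t\in[0,1)$ (valid for the natural logarithm, and a fortiori for $\log_2$) yields $-\log\F(\rho,\sigma)=-\log\bigl(1-\BR^2(\rho,\sigma)\bigr)\ge\BR^2(\rho,\sigma)=1-\F(\rho,\sigma)$. Chaining the inequalities gives $\relent{\rho}{\sigma}\ge1-\F(\rho,\sigma)=\BR^2(\rho,\sigma)$, which is the claim.

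The only non-routine ingredient is the bound $\relent{\rho}{\sigma}\ge-\log\F^2(\rho,\sigma)$; everything after it is a two-line computation. The main decision — and mild obstacle — is how much of that bound to reprove versus cite: the Fuchs--Caves minimax characterization of fidelity and data processing for relative entropy are standard but not reproved in this paper, so I would state the bound as known with a reference to a standard text (e.g.~\cite{Wil12,Wat16}) and present only the short finishing argument in full. It is worth noting that the ordinary Pinsker inequality combined with \fct{deltabures} only gives $\relent{\rho}{\sigma}\ge2(1-\F(\rho,\sigma))^2$, which is weaker than $1-\F(\rho,\sigma)$ exactly in the regime $\F(\rho,\sigma)\to1$ that is relevant for the information-complexity applications, so the R\'enyi-type strengthening is genuinely needed.
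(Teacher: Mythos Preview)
Your proposal is correct and follows essentially the same route as the paper: both arguments first obtain $\relent{\rho}{\sigma}\ge -\log\F^2(\rho,\sigma)$ (the paper cites Tomamichel; you sketch the Fuchs--Caves plus data-processing derivation), and then finish with an elementary scalar inequality. Your finishing step $-\log(1-t)\ge t$ is slightly different from the paper's $2^x>2x^2$ but equivalent for the purpose, and your added remark on why ordinary Pinsker is insufficient is a nice touch.
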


\begin{factpart}\label{fact:IvsB} For classical-quantum state (register $X$ is classical) $\rho_{XAB}$:
\begin{align*}
\I(A;B|X)_{\rho} &= \mathbb{E}_{x \leftarrow X}  \relent{\rho_{AB}^x}{\rho^x_A \otimes \rho^x_B}  \ge \mathbb{E}_{x \leftarrow X} \BR^2 \left( \rho_{AB}^x, \rho^x_A \otimes \rho^x_B\right) . \\
\I(X;A) & = \relent{\rho_{XA}}{\rho_X\otimes\rho_A}  = \mathbb{E}_{x \leftarrow X}  \relent{\rho^x_A}{\rho_A} . \\
\I(X;A) &= \I(f(X)X;A), \mbox{ where $f$ is any function.}
\end{align*}
\end{factpart}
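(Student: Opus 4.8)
The plan is to derive all three lines directly from earlier facts, with no genuinely new ingredient: the partial-measurement identity for relative entropy (\fct{relsubadd}), the definitional identity $\I(A:B)_\sigma = \relent{\sigma_{AB}}{\sigma_A\otimes\sigma_B}$ (\defn{entropy}), and the stronger Pinsker inequality (\fct{SvsB}). For the first line, write the classical-quantum state as $\rho_{XAB} = \sum_x \mu(x)\,\ketbra{x}_X\otimes\rho^x_{AB}$, start from the definition $\I(A:B~|~X)_\rho = \I(A:BX)_\rho - \I(A:X)_\rho$ (\defn{entropy}), and expand each mutual information into von Neumann entropies (\defn{relentropy}). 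Using the standard identity $\ent{\rho_{XC}} = \ent{\rho_X} + \expec_{x\leftarrow X}\ent{\rho^x_C}$ for classical-quantum states, the $\ent{\rho_X}$ contributions cancel and one is left with $\I(A:B~|~X)_\rho = \expec_{x\leftarrow X}\bigl(\ent{\rho^x_A}+\ent{\rho^x_B}-\ent{\rho^x_{AB}}\bigr) = \expec_{x\leftarrow X}\I(A:B)_{\rho^x}$, which equals $\expec_{x\leftarrow X}\relent{\rho^x_{AB}}{\rho^x_A\otimes\rho^x_B}$ by \defn{entropy}. Equivalently, one may introduce the classical-quantum state $\sigma_{XAB} = \sum_x \mu(x)\,\ketbra{x}_X\otimes\rho^x_A\otimes\rho^x_B$; then \fct{relsubadd} gives $\relent{\rho_{XAB}}{\sigma_{XAB}} = \expec_{x\leftarrow X}\relent{\rho^x_{AB}}{\rho^x_A\otimes\rho^x_B}$, and the same entropy bookkeeping identifies this with $\I(A:B~|~X)_\rho$. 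The inequality then follows by applying \fct{SvsB} termwise, $\relent{\rho^x_{AB}}{\rho^x_A\otimes\rho^x_B}\ge\BR^2(\rho^x_{AB},\rho^x_A\otimes\rho^x_B)$, and averaging over $x\leftarrow X$.

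For the second line, the first equality is exactly \defn{entropy}. For the second equality, note that $\rho_{XA}$ and $\rho_X\otimes\rho_A$ are both classical-quantum on $X$ with the common classical marginal $\rho_X=\sum_x\mu(x)\ketbra{x}_X$, their conditional states given $X=x$ being $\rho^x_A$ and $\rho_A$ respectively; \fct{relsubadd} then directly yields $\relent{\rho_{XA}}{\rho_X\otimes\rho_A} = \expec_{x\leftarrow X}\relent{\rho^x_A}{\rho_A}$.

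For the third line, I would argue that appending $f(X)$ to the classical register $X$ is a reversible classical operation. Let $\E$ be the channel sending $\ketbra{x}_X\mapsto\ketbra{f(x)}\otimes\ketbra{x}$ (copy $X$, then apply $f$ to the copy) and let $\E'$ be the channel that traces out the $f(X)$ register; then $\E'\circ\E$ is the identity on $\text{supp}(\rho_X)$. Applying monotonicity of mutual information (\fct{mono}) to $\E$ gives $\I(f(X)X:A)\le\I(X:A)$, and applying it to $\E'$ gives $\I(X:A)\le\I(f(X)X:A)$, hence equality. The only step that needs a moment's care is the cancellation of the $\ent{\rho_X}$ terms in the first line; every other step is a direct appeal to an already-established fact, so I do not anticipate a real obstacle.
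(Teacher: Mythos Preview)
Your proposal is correct and follows essentially the same route as the paper: for the first line the paper also starts from $\I(A:B\mid X)=\I(A:BX)-\I(A:X)$, uses \fct{relsubadd} to decompose over $x$, and reduces to $\expec_{x\leftarrow X}\relent{\rho^x_{AB}}{\rho^x_A\otimes\rho^x_B}$ by entropy bookkeeping before applying Pinsker; for the second and third lines the paper likewise invokes \fct{relsubadd} and monotonicity under the ``append $f(X)$'' map together with partial trace, exactly as you do.
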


\begin{proof} These facts are proved as follows.
\renewcommand{\theenumi}{\bfseries \Alph{enumi}}
\begin{enumerate}
\item For nonnegativity of relative entropy, see \cite[Theorem 11.7]{NC00}. For nonnegativity of mutual information and conditional mutual information, see \cite[Theorem 11.6.1]{Wil12} and \cite[Theorem 11.7.1]{Wil12}.
\item 
Let $\theta_{XB} = \sum_x p(x) \ketbra{x}\otimes \theta^x_B$ and $\theta'_{XB} = \sum_x p(x) \ketbra{x}\otimes \theta'^x_B$. Then 
\begin{align*}
\relent{\theta_{XB}}{\theta'_{XB}} 
&=  \sum_x\Tr(p(x)\ketbra{x}\otimes \theta^x_B(\log \theta_{XB} - \log \theta'_{XB}))\\ 
&= \sum_xp(x)\Tr(\theta^x_B(\log(p(x)\theta^x_B) - \log(p(x)\theta'^x_B))) \\
&= \sum_xp(x)\Tr(\theta^x_B(\log\theta^x_B - \log\theta'^x_B))\\
&= \expec_{x\leftarrow X} \relent{\theta^x_B}{\theta'^x_B},
\end{align*}
which proves the fact.
\item 
Follows from direct calculation.
\item See~\cite{NC00} [Theorem $11.15$].

\item 
Follows from Chain rule (\fct{chain-rule}) and Non-negativity (\fct{nonneg}).
\item 
Consider the following relations that use chain rule:
\begin{align*}
\I(A:BC)_{\rho} 
&=  \I(A:B)_{\rho} + \I(A:C~|~B)_{\rho} \\ 
&=  \I(A:B)_{\rho} + \I(AB:C)_{\rho} - \I(B:C)_{\rho}\\ 
&\ge  \I(A:B)_{\rho} + \I(A:C)_{\rho}.
\end{align*}
The last line uses $\I(B:C)_{\rho} =0$ and monotonicity (\fct{mono}).

\item See~\cite{NC00} [Section $11.3.4$]. 

\item Consider,
\begin{align*}
\I(A:BC)_{\rho} &= \I(A:C)_{\rho} + \I(CA:B)_{\rho} - \I(B:C) \\
& \leq  \I(A:C)_{\rho} + \I(CA:B)_{\rho} \\
& \leq  \I(A:C)_{\rho} + \ent{B}  + \ent{CA} - \ent{CAB} \\
& \leq  \I(A:C)_{\rho} + 2 \ent{B} . & \mbox{(\fct{araki})}
\end{align*}

\item 
Using Corollary $4.2$ and Proposition $4.5$ in \cite{Tomamichel16}, we find that 
$$\relent{\rho}{\sigma}\geq -2\log \F(\rho,\sigma).$$ 
The fact now follows since for any positive $x<1$, $2^x > 2\cdot x^2$.

\item 
For the first relation, we proceed as follows, and then use Pinsker's inequality.
\begin{align*}
\I(A:B~|~X)_{\rho} 
& =  \I(A:BX)_{\rho} - \I(A:X)_{\rho} \\
& = \relent{\rho_{ABX}}{\rho_A\otimes\rho_{BX}} - \relent{\rho_{AX}}{\rho_A\otimes \rho_X}\\
& =  \expec_{x\leftarrow X} [\relent{\rho^x_{AB}}{\rho_A\otimes\rho^x_B} - \relent{\rho^x_A}{\rho_A}] \\
& =  \expec_{x\leftarrow X} [-S(\rho^x_{AB}) - \Tr(\rho^x_A\log\rho_A) + S(\rho^x_B)+ S(\rho^x_A)+\Tr(\rho^x_A\log\rho_A)] \\
& =  \expec_{x\leftarrow X} [-S(\rho^x_{AB}) + S(\rho^x_B)+ S(\rho^x_A)] = \expec_{x\leftarrow X} [\relent{\rho^x_{AB}}{\rho^x_A\otimes \rho^x_B}],
\end{align*}
where in third line, we have used Fact \ref{fact:relsubadd}. 
The second relation follows by direct calculation and Fact \ref{fact:subadd}. 
The third relation follows by monotonicity under the maps $\ketbra{x} \rightarrow \ketbra{x}\otimes \ketbra{f(x)}$ and partial trace. \qedhere
\end{enumerate}
\end{proof}

We will need the following relation between $\I$ and $\Delta$ for binary classical-quantum states (see also~\cite{JainN06}).
\begin{claim}\label{clm:I2}
	Let $\rho_{AB} \in \D(AB)$ be a classical quantum state of the form $\rho_{AB} = p\ketbra{0}_A\otimes \rho^0_B+(1-p)\ketbra{1}\otimes\rho^1_B$ . Then
	\[\I(A:B)_{\rho}\leq  2\log(2) \cdot  \Delta(p\rho^0_B,(1-p)\rho^1_B).\] 
\end{claim}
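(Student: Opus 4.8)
The plan is to bound the mutual information $\I(A:B)_\rho$ for the binary classical-quantum state $\rho_{AB} = p\ketbra{0}_A\otimes\rho^0_B + (1-p)\ketbra{1}_A\otimes\rho^1_B$ directly in terms of relative entropies, then convert those to trace distance. By the alternate expression for mutual information, $\I(A:B)_\rho = \relent{\rho_{AB}}{\rho_A\otimes\rho_B}$, and since the classical register is binary this equals $p\,\relent{\rho^0_B}{\rho_B} + (1-p)\,\relent{\rho^1_B}{\rho_B}$, where $\rho_B = p\rho^0_B + (1-p)\rho^1_B$. I would first observe that this quantity is exactly the Holevo information / Jensen–Shannon-type divergence of the two states $\rho^0_B,\rho^1_B$ with weights $p,1-p$, and that it is upper bounded by the $\log$ of the number of classical values, i.e.\ $\I(A:B)_\rho \le \ent{\rho_A} \le \log 2$; but we need the sharper bound scaling with the trace distance of the \emph{unnormalized} states $p\rho^0_B$ and $(1-p)\rho^1_B$.

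The key step is the following: let $q = \onenorm{p\rho^0_B - (1-p)\rho^1_B}$, so that $\Delta(p\rho^0_B,(1-p)\rho^1_B)$ in the statement should be read (consistently with the earlier definition $\Delta(\cdot,\cdot)=\tfrac12\onenorm{\cdot}$ extended to unnormalized operators) as $\tfrac12 q$, and the claimed bound is $\I(A:B)_\rho \le \log(2)\cdot q$. I would prove this by decomposing $p\rho^0_B - (1-p)\rho^1_B$ into its positive and negative parts via the Jordan–Hahn decomposition, writing $p\rho^0_B = \Lambda + \tau_+$ and $(1-p)\rho^1_B = \Lambda + \tau_-$ where $\Lambda \succeq 0$ is the ``common part'', $\tau_\pm \succeq 0$ are orthogonal, and $\Tr\tau_+ + \Tr\tau_- = q$ (indeed $\Tr\tau_+ = \Tr\tau_- = q/2$ since both unnormalized states have trace... no — $p$ and $1-p$ differ, so one must be careful: $\Tr\tau_+ - \Tr\tau_- = p - (1-p) = 2p-1$, while $\Tr\tau_+ + \Tr\tau_- = q$). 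This gives $\rho_B = \Lambda + \tfrac12(\tau_+ + \tau_-)$, and I can then express each $\relent{\rho^i_B}{\rho_B}$ and use joint convexity / operator-monotonicity of $x\mapsto x\log x$ to bound the mutual information by a weighted combination of entropies of the small-trace pieces $\tau_\pm$. Since a subnormalized state of trace $t$ has von Neumann entropy at most $t\log(\dim) + $ (binary entropy of $t$)... this would introduce a dimension dependence, which is bad.

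A cleaner route, which I expect is the intended one, is to use the operator inequality for the binary case: parametrize via the two-outcome structure and invoke the bound $\I(A:B)_\rho \le h(p) - p\,h(p^0) - (1-p)h(p^1) + (\text{off-diagonal terms})$... Actually the slickest argument: apply \textbf{Fact \ref{fact:inf-bound}} (information bound) with a carefully chosen register. Write $\rho^0_B$ and $\rho^1_B$ and introduce the optimal ``shared'' preparation: there is a state $\omega_B$ and subnormalized $\sigma_\pm$ with $p\rho^0_B = (1{-}\tfrac q2)\omega_B' + \tfrac q2 \sigma_+$-type decompositions. The honest plan: use the identity $\I(A:B)_\rho = \relent{\rho_{AB}}{\rho_A\otimes\rho_B}$, bound $\relent{\rho^i_B}{\rho_B} \le \relent{\rho^i_B}{\rho^{1-i}_B}$ is false in general, so instead use that $\rho_B \succeq (1-p)\rho^1_B$ and $\rho_B\succeq p\rho^0_B$, hence $-\Tr(\rho^0_B\log\rho_B)\le -\Tr(\rho^0_B\log(p\rho^0_B))$ giving $\relent{\rho^0_B}{\rho_B}\le \log(1/p)$, and symmetrically $\le\log(1/(1-p))$; combined, $\I(A:B)_\rho \le p\log(1/p)+(1-p)\log(1/(1-p)) = h(p)$. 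Then finish by the elementary inequality $h(p) \le 2\log(2)\cdot\min(p,1-p)$ for $p\in[0,1]$, and note $\min(p,1-p) \le \Delta(p\rho^0_B,(1-p)\rho^1_B)$ since $\onenorm{p\rho^0_B-(1-p)\rho^1_B}\ge |\Tr(p\rho^0_B)-\Tr((1-p)\rho^1_B)|$... that gives $|2p-1|$, not $2\min(p,1-p)$ — the wrong direction.

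So the genuine content must be that $q = \onenorm{p\rho^0_B-(1-p)\rho^1_B} \ge$ something like $2p$ when $\rho^0_B\perp\rho^1_B$-ish fails; in the worst case $\rho^0_B=\rho^1_B$ gives $q=|2p-1|$ and then $\I(A:B)_\rho=0$, consistent. I therefore expect the proof to go: let $\sigma = p\rho^0_B$, $\tau=(1-p)\rho^1_B$, both PSD with $\Tr\sigma=p$, $\Tr\tau=1-p$; then $\I(A:B)_\rho = \Tr(\sigma\log\sigma)+\Tr(\tau\log\tau) - \Tr((\sigma+\tau)\log(\sigma+\tau)) + h(p)$ — no. The right identity: $\I(A:B)_\rho = \ent{\rho_B} - p\ent{\rho^0_B}-(1-p)\ent{\rho^1_B}$. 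Using $\ent{\sigma}= -\Tr\sigma\log\sigma$ for subnormalized $\sigma$, we get $p\ent{\rho^0_B} = \ent{\sigma} - p\log(1/p)$ etc., so $\I(A:B)_\rho = \ent{\rho_B} - \ent{\sigma}-\ent{\tau} + h(p)$. Now $\ent{\rho_B}-\ent{\sigma}-\ent{\tau}$ is $-\Tr(\text{mixing})$ and one shows $\ent{\rho_B}\le \ent{\sigma}+\ent{\tau} + (\text{function of }q)$; combined with $h(p)\le \log 2$ one targets $\le \log(2)q$. The \textbf{main obstacle} is establishing the sharp subadditivity-type estimate $\ent{\sigma+\tau} - \ent{\sigma}-\ent{\tau} + h(p)\le \log(2)\,\onenorm{\sigma-\tau}$ for PSD operators $\sigma,\tau$ with $\Tr\sigma+\Tr\tau=1$; I would attack it by reducing (via the joint block-diagonal structure on supports, or via a dephasing channel) to the commuting/classical case, where it becomes a one-line convexity estimate on probabilities, and then argue the general case is dominated by the classical one using operator concavity of entropy and monotonicity of relative entropy (Fact \ref{fact:mono}).
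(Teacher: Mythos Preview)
Your proposal does not arrive at a proof. After several exploratory dead ends (which you correctly abandon), you land on the right target inequality
\[
\ent{\sigma+\tau}-\ent{\sigma}-\ent{\tau}+h(p)\;\le\;\log(2)\,\onenorm{\sigma-\tau},\qquad \sigma=p\rho^0_B,\ \tau=(1-p)\rho^1_B,
\]
but your plan to ``reduce to the commuting/classical case via a dephasing channel and monotonicity of relative entropy'' does not work. Under any dephasing $\E$ on $B$, data processing gives $\I(A:\E(B))\le \I(A:B)$ and $\onenorm{\E(\sigma)-\E(\tau)}\le\onenorm{\sigma-\tau}$; so even if you establish the classical bound $\I(A:\E(B))\le \log(2)\,\onenorm{\E(\sigma)-\E(\tau)}$, you only get a bound on the \emph{smaller} quantity $\I(A:\E(B))$, not on $\I(A:B)$. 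Choosing the dephasing basis to diagonalize $\sigma-\tau$ preserves the right-hand side but still shrinks the left, so the reduction goes the wrong way. Your appeal to ``operator concavity of entropy'' does not reverse this: all three entropy terms $\ent{\sigma},\ent{\tau},\ent{\sigma+\tau}$ increase under dephasing, with no control on the sign of their combination.

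The paper's proof avoids this obstacle entirely. It performs the same algebraic step you partly carry out, rewriting
\[
\I(A:B)_\rho \;\le\; \relent{p\rho^0}{\tfrac{1}{2}\rho_{av}}+\relent{(1-p)\rho^1}{\tfrac{1}{2}\rho_{av}}
\]
(using $h(p)\le\log 2$), and then invokes a genuinely quantum inequality, \cite[Theorem~9]{Audenaert14}, which bounds each of these subnormalized relative entropies directly by $\log(2)\,\Delta(p\rho^0,(1-p)\rho^1)$. That external lemma is the missing ingredient in your proposal; it does not follow from the classical case by the kind of monotonicity argument you suggest.
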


\begin{proof}
We drop the register index from $\rho^0_B,\rho^1_B$. Let $\rho_{av}=p\rho^0+(1-p)\rho^1$. Consider
\begin{align*}
\I(A:B)_{\rho} 
&= p\relent{\rho^0}{\rho_{av}} + (1-p)\relent{\rho^1}{\rho_{av}} & (\text{\fct{IvsB}})\\ 
&= \relent{p\rho^0}{\frac{1}{2}\rho_{av}} + \relent{(1-p)\rho^1}{\frac{1}{2}\rho_{av}}- p\text{log}(2) - (1-p)\text{log}(2) + \ent{p}\\ 
&\leq \relent{p\rho^0}{\frac{1}{2}\rho_{av}} + \relent{(1-p)\rho^1}{\frac{1}{2}\rho_{av}}.
\end{align*}

The last inequality follows from $\ent{p} \leq \log(2)$.
Now, using ~\cite[Theorem~9]{Audenaert14}, which states that $$\relent{p\rho^0}{\frac{1}{2}\rho_{av}} \leq \log(2)\Delta(p\rho^0,(1-p)\rho^1)\quad \text{and}\quad \relent{(1-p)\rho^1}{\frac{1}{2}\rho_{av}}\leq \log(2)\Delta(p\rho^0,(1-p)\rho^1),$$ the claim follows.
\end{proof}

Our next claim gives us a way to use high mutual information between two registers in a classical quantum state to make a prediction about the classical part using measurement on the quantum part.

\begin{claim}[Information $\Rightarrow$ prediction] \label{clm:maxlike}
Let $\rho_{AB} \in \D(AB)$ be a classical quantum state of the form $\rho_{AB} = p\ketbra{0}_A\otimes \rho^0_B+(1-p)\ketbra{1}\otimes\rho^1_B$ .
The probability of predicting $A$ by a measurement on $B$ is  at least
\[\frac{1}{2}+\frac{\I(A:B)}{2\log 2}.\]
\end{claim}
\begin{proof}
We drop the register label $B$. Let $M$ be a projector on the support of positive eigenvectors of the state $p\rho^0-(1-p)\rho^1$. Let the measurement be $\{M,\id-M\}$ and first outcome imply $0$ in register $A$ and second outcome imply $1$. Then probability of success is
\begin{align*}
p\Tr(\rho^0M) + (1-p)\Tr(\rho^1(\id-M)) 
&= (1-p)+\Tr((p\rho^0 - (1-p)\rho^1)M) & \\
&= (1-p) + \frac{1}{2}(\|p\rho^0 - (1-p)\rho^1\|_1 + \Tr(p\rho^0 - (1-p)\rho^1)) & \\
&= (1-p) + \frac{1}{2}(\|p\rho^0 - (1-p)\rho^1\|_1 + 2p-1) & \\
&=\frac{1}{2} + \frac{1}{2}\|p\rho^0 - (1-p)\rho^1\|_1 & \\
&=\frac{1}{2} + \Delta(p\rho^0,(1-p)\rho^1)  .
\end{align*} 
From \clm{I2}, we know that $\Delta(p\rho^0,(1-p)\rho^1) \geq {\I(A:B)}/(2\log 2)$. 
\end{proof}

\subsection{Quantum Communication complexity}
\label{sec:comm}

In quantum communication complexity, two players wish to compute a classical function $F\colon\X \times \Y \to \{0,1\}$ for some finite sets $\X$ and $\Y$. The inputs $x\in \X$ and $y\in\Y$ are given to two players Alice and Bob, and the goal is to minimize the quantum communication between them required to compute the function. 

While the players have classical inputs, the players are allowed to exchange quantum messages. Depending on whether or not we allow the players arbitrary shared entanglement, we get $\mathrm{Q}(F)$, bounded-error quantum communication complexity without shared enganglement and $\Q(F)$, for the same measure with shared entanglement. Obviously $\Q(F) \leq \mathrm{Q}(F)$.
In this paper we will only work with $\Q(F)$, which makes our results stronger since we prove lower bounds in this work. 

Let $F\colon \X \times \Y \rightarrow \{0,1,*\}$ be a partial function, with $\dom(F) \defeq \{(x,y)\in \X \times \Y: F(x,y) \neq *\}$, and let $\eps \in (0,1/2)$.

 An entanglement assisted quantum communication protocol $\Pi$ for this function is as follows. Alice and Bob start with a preshared entanglement. Upon receiving inputs $(x,y)$, where Alice gets $x$ and Bob gets $y$, they exchange quantum states and then Alice applies a measurement on her qubits to output $1$ or $0$. Let $O(x,y)$ be the random variable output by Alice in $\Pi$, given input $(x,y)$. Let $\mu$ be a distribution over $\dom(F)$. 

Let inputs to Alice and Bob be given in registers $X$ and $Y$ in the state $$\sum_{x,y}\mu(x,y)\ketbra{x}_X\otimes\ketbra{y}_Y.$$ Let these registers be purified by $R_X$ and $R_Y$ respectively, which are not accessible to either players. Let Alice and Bob initially hold register $A_0,B_0$ with shared entanglement $\Theta_{0,A_0B_0}$. Then the initial state is 
\begin{eqnarray*}
\label{eq:initialstate}
\ket{\Psi_0}_{XYR_XR_YA_0B_0} \defeq \sum_{x,y}\sqrt{\mu(x,y)}\ket{xxyy}_{XR_XYR_Y}\ket{\Theta_0}_{A_0B_0}
\end{eqnarray*}

Alice applies a unitary $U^1: XA_0\rightarrow XA_1C_1$ such that the unitary acts on $A_0$ conditioned on $X$.  She sends $C_1$ to Bob. Let $B_1\equiv B_0$ be a relabelling of Bob's register $B_0$. He applies $U^2: YC_1B_1\rightarrow YC_2B_2$ such that the unitary acts on $C_1B_0$ conditioned on $Y$. He sends $C_2$ to Alice. Players proceed in this fashion till end of the protocol. At any round $r$, let the registers be $A_rC_rB_r$, where $C_r$ is the message register, $A_r$ is Alice's register and $B_r$ is Bob's register. If $r$ is odd, then $B_r \equiv B_{r-1}$ and if $r$ is even, then $A_r\equiv A_{r-1}$. Let the joint state in registers $A_rC_rB_r$ be $\Theta_{r,A_rC_rB_r}$. Then the global state at round $r$ is
\begin{eqnarray*}
\label{eq:roundrstate}
\ket{\Psi_r}_{XYR_XR_YA_rC_rB_r} \defeq \sum_{x,y}\sqrt{\mu(x,y)}\ket{xxyy}_{XR_XYR_Y}\ket{\Theta_r}_{A_rC_rB_r}.
\end{eqnarray*}

We define the following quantities.
\begin{align*}
\textrm{Worst-case error:} & \quad \err(\Pi) \defeq \max_{(x,y)\in \dom(F)} \{ \Pr[O(x,y) \neq F(x,y)] \} .\\
\textrm{Distributional error:} &\quad  \err^\mu(\Pi) \defeq   \expec_{(x,y) \leftarrow \mu} \Pr[O(x,y) \neq F(x,y)]. \\ 
\textrm{Quantum CC of a protocol:}&\quad  \QCC(\Pi) \defeq \sum_i |C_i| . \\
\textrm{Quantum CC of $F$:}&\quad  \Q_\eps(F) \defeq \min_{\Pi: \err(\Pi) \leq \eps} \QCC(\Pi).  \\
\end{align*}


Our first fact justifies using $\eps=1/3$ by default since the exact constant does not matter since the success probability of a protocol can be boosted for QCC.

\begin{fact}[Error reduction] \label{fact:boost}
Let $0 < \delta < \eps < 1/2$. Let $\Pi$ be a protocol for $F$ with $\err(\Pi) \leq \eps$. There exists protocol $\Pi'$ for $F$ such that $\err(\Pi') \leq \delta$ and   
\begin{align*}
\QCC(\Pi') \leq O\left(\frac{\log(1/\delta)}{\big(\frac{1}{2}-\eps\big)^2}\cdot  \QCC(\Pi)\right). 
\end{align*}
\end{fact}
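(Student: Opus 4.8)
The plan is the standard majority-vote amplification argument, adapted to the entanglement-assisted quantum communication model. Given a protocol $\Pi$ with $\err(\Pi)\le\eps$, I would run $k$ independent copies of $\Pi$ in parallel (using $k$ independent copies of the preshared entanglement) and have Alice output the majority of the $k$ individual outputs. Concretely, the combined protocol $\Pi'$ uses message registers $C_1^{(1)},\dots,C_1^{(k)},C_2^{(1)},\dots$ obtained by tensoring the corresponding registers of the $k$ copies, so $\QCC(\Pi') = k\cdot\QCC(\Pi)$; at the end Alice holds $k$ output bits, measures each, and outputs their majority. Since the copies are run on independent entanglement and the messages of distinct copies never interact, the $k$ output bits are independent $\{0,1\}$ random variables, each equal to $F(x,y)$ with probability at least $1/2+(\frac12-\eps)$ for every $(x,y)\in\dom(F)$. (One may first apply \emph{one} extra layer of amplification by a constant number of repetitions, or simply note directly that $\frac12-\eps>0$ is the bias.)

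Next I would invoke a Chernoff/Hoeffding bound on these $k$ i.i.d.\ biased bits: the probability that the majority is wrong is at most $\exp\bigl(-c\,k\,(\tfrac12-\eps)^2\bigr)$ for an absolute constant $c>0$. Setting this $\le\delta$ forces $k = O\bigl(\log(1/\delta)/(\tfrac12-\eps)^2\bigr)$, which gives
\[
\QCC(\Pi') = k\cdot\QCC(\Pi) = O\!\left(\frac{\log(1/\delta)}{\big(\tfrac12-\eps\big)^2}\cdot\QCC(\Pi)\right),
\]
as claimed, and $\err(\Pi')\le\delta$ holds in the worst case over $\dom(F)$ because the Chernoff bound was applied pointwise. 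The measurement at the end can be done entirely by Alice since she holds all $k$ output registers, so no extra communication is incurred beyond the $k$ parallel runs.

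There is essentially no hard step here; the only points requiring a sentence of care are (i) that running the copies in parallel rather than sequentially keeps the total communication exactly $k\cdot\QCC(\Pi)$ while preserving independence of the outputs — this uses that each copy has its own fresh shared entanglement and its own message registers, so the global unitary is a tensor product across copies and Alice's final measurement factorizes; and (ii) bookkeeping the constants so that the stated form of the bound, with the $(\tfrac12-\eps)^2$ in the denominator, comes out correctly from the Chernoff exponent. I expect the main (very minor) obstacle to be merely notational: formally describing the parallel-repeated protocol in the round-by-round register formalism set up just above the statement, rather than any real mathematical difficulty.
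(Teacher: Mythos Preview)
Your proposal is correct and follows essentially the same approach as the paper: repeat the protocol independently and take the majority vote, with a Chernoff-type bound controlling the error. The paper's proof sketch phrases this as a two-stage amplification (first $O((\tfrac12-\eps)^{-2})$ repetitions to reach constant error, then $O(\log(1/\delta))$ more to reach $\delta$), whereas you apply the Chernoff bound in one shot, but the argument and the resulting blow-up factor are the same.
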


This fact is proved by simply repeating the protocol sufficiently many times and taking the majority vote of the outputs. If the error $\epsilon$ is close to $1/2$, we can first reduce the error to a constant by using $O(\frac{1}{(1/2-\eps)^2})$ repetitions. Then $O\left(\log(1/\delta)\right)$ repetitions suffice to reduce the error down to $\delta$. Hence the quantum communication only increases by a factor of $O\left(\frac{\log(1/\delta)}{(1/2-\eps)^2}\right)$.

We have the following relation between worst-case and average-case error quantum communication complexities. It follows for example from standard application of Sion's minimax theorem~\cite{Sion58}.
\begin{fact}[Minimax principle] \label{fact:equiv}
Let $F\colon\X \times \Y \rightarrow \{0,1,*\}$ be a partial function. Fix an error parameter $\eps \in (0,1/2)$ and a quantum communication bound $q \geq 0$. Suppose $\mathcal{F}$ is a family of protocols such that for every distribution $\mu$ on $\dom(F)$ there exists a protocol $\Pi \in \mathcal{F}$ such that $$\err^\mu(\Pi) \leq \eps \quad \mathrm{and} \quad \QCC(\Pi)\leq q.$$
Then there exists a protocol $\Pi'$ such that $$\err(\Pi')\leq \eps \quad \mathrm{and} \quad \QCC(\Pi') \leq q.$$
\end{fact}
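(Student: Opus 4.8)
The plan is to cast the statement as a two-player zero-sum game and apply Sion's minimax theorem~\cite{Sion58}. The minimizing player picks a quantum protocol of communication cost at most $q$; the maximizing player picks an input distribution $\mu$ on $\dom(F)$; the payoff is the distributional error $\err^\mu(\Pi)$. The maximizer's strategy set is easy: since $\X,\Y$ are finite, $\dom(F)$ is finite and the set of distributions on it is a standard simplex, a compact convex subset of a finite-dimensional real vector space, and $\mu\mapsto\err^\mu(\Pi)$ is affine (hence quasiconcave) and continuous on it. The minimizer's side requires more care. The idea is to restrict, without loss of generality, to protocols in a \emph{normal form} with a bounded number of rounds and bounded-dimensional registers: assuming every message carries at least one qubit the number of rounds is at most $q$, and (this is the delicate point, discussed below) the shared-entanglement register and each player's workspace may be taken to have dimension bounded by a function of $q,|\X|,|\Y|$ only. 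Such a protocol is specified by a finite tuple of unitaries on fixed finite-dimensional spaces, so the set $K$ of normal-form protocols of cost at most $q$ is a continuous image of a product of compact unitary groups, hence compact; and $\Pi\mapsto\err^\mu(\Pi)$ is continuous on $K$ for each fixed $\mu$.

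I would then take the minimizer's strategy set to be $X$, the set of Borel probability measures on $K$ in the weak-$*$ topology --- i.e.\ ``mixed protocols'' that use shared randomness (free, given shared entanglement) to decide which element of $K$ to run. This $X$ is convex and weak-$*$ compact, and for $f(\nu,\mu)\defeq\int_K \err^\mu(\Pi)\,d\nu(\Pi)$ --- the error of the mixed protocol $\nu$ on input distribution $\mu$ --- $f$ is affine and continuous in each argument separately. Thus all hypotheses of Sion's theorem hold, and
\[
\min_{\nu\in X}\ \max_{\mu}\ f(\nu,\mu)\;=\;\max_{\mu}\ \min_{\nu\in X}\ f(\nu,\mu).
\]
For the right-hand side, the hypothesis gives, for each $\mu$, a protocol $\Pi\in\mathcal{F}$ with $\err^\mu(\Pi)\le\eps$ and $\QCC(\Pi)\le q$; after trimming redundant workspace this $\Pi$ lies in $K$, so its point mass witnesses $\min_\nu f(\nu,\mu)\le\eps$, and hence the right-hand side is at most $\eps$. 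Therefore there is $\nu^*\in X$ with $\max_\mu f(\nu^*,\mu)\le\eps$. Letting $\mu$ range over point masses, $\max_\mu f(\nu^*,\mu)=\err(\nu^*)$, the worst-case error; and since every protocol in the support of $\nu^*$ has cost at most $q$, the mixed protocol $\Pi'$ realizing $\nu^*$ (run via shared randomness) also has $\QCC(\Pi')\le q$. Thus $\Pi'$ satisfies $\err(\Pi')\le\eps$ and $\QCC(\Pi')\le q$.

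The main obstacle is the normal-form / compactness step: a priori a cost-$q$ protocol may use an arbitrarily large shared-entanglement register and arbitrarily large private workspaces, so the naive space of all cost-$\le q$ protocols is not compact, and without compactness Sion only gives a protocol of error at most $\eps+\delta$ for every $\delta>0$ rather than exactly $\eps$. The fix is a standard purification / Schmidt-rank argument: only a bounded-dimensional (in terms of $q,|\X|,|\Y|$) portion of the entanglement and of each workspace can influence the $q$ communicated qubits and Alice's final two-outcome measurement, so these registers may be truncated to bounded dimension without changing the output distribution on any input. Once that is in hand, everything is finite-dimensional and the compactness and continuity claims used above are routine, and the rest is the bookkeeping sketched in the previous paragraphs.
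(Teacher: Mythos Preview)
Your approach is exactly what the paper indicates: the paper does not give a proof of this fact at all, merely stating that ``it follows for example from standard application of Sion's minimax theorem~\cite{Sion58}.'' You have supplied the details the paper omits, correctly identifying the compactness of the protocol space as the only nontrivial point and sketching the standard normal-form/dimension-bounding resolution. So your proposal and the paper's (implicit) argument coincide; you have simply been more careful about where the work lies.
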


Our next claim shows that having some information about the output of a Boolean function $F$ allows us to predict the output of $F$ with some probability greater than $1/2$.

\begin{claim} \label{clm:lowinf}
Let $F\colon\X \times \Y \rightarrow \{0,1,*\}$ be a partial function and $\mu$ be a distribution over $\dom(F)$. Let $XY$ be registers with the state $\sum_{x,y}\mu(x,y)\ketbra{x}\otimes\ketbra{y}$ and define a register $F$ that contains the value of $F(x,y)$. Let $\Pi$ be a quantum communication protocol with registers $X,Y$ input to Alice and Bob respectively and number of rounds $r$ (which is even). There either
\begin{itemize}
\item There exists a quantum communication protocol  $\Pi'$ for $F$ with $r$ rounds, with input $(X,Y)$ to Alice and Bob respectively, such that 
$$\QCC(\Pi') = \QCC(\Pi) + 1, \quad\text{and}\quad \err^\mu(\Pi') < \frac{1}{2} - \frac{\I(F:A_rC_r~|~X)_{\Psi_r}}{2\log(2)} . $$
\item Or, there exists a  quantum communication protocol  $\Pi'$ for $F$ with $r$ rounds, with input $(X,Y)$ to Alice and Bob respectively, such that 
$$\QCC(\Pi') \leq \QCC(\Pi), \quad\text{and}\quad \err^\mu(\Pi') < \frac{1}{2} - \frac{\I(F:B_{r}C_r~|~Y)_{\Psi_r}}{2\log(2)} . $$
\end{itemize}
\end{claim}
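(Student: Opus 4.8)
The plan is to build $\Pi'$ out of $\Pi$ by having whichever player ends up holding more information about $F$ (conditioned on their own input) make a prediction using the measurement of \clm{maxlike}. Recall that $\Psi_r$ is the global state after the $r$ rounds of $\Pi$, augmented with the register $F$ holding $F(X,Y)$ (which is untouched by $\Pi$, exactly like the purifying registers $R_X,R_Y$). Since $r$ is even, the round-$r$ message was sent by Bob to Alice, so after round $r$ Alice holds $A_rC_r$ and Bob holds $B_r$; equivalently, one step earlier --- right after Bob applies $U^r$ but before he transmits $C_r$ --- Bob holds $B_rC_r$ and Alice holds $A_r$, and the global state is still $\Psi_r$. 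The first snapshot gives the $A_rC_r$-bound, the second the $B_rC_r$-bound.

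For the first bullet: let $\Pi'$ run all $r$ rounds of $\Pi$ and then have Alice perform a local measurement on $A_rC_r$ (depending on her input $x$) and output the result. For a fixed $x$, the marginal of $\Psi_r$ on $(F,A_rC_r)$ conditioned on $X=x$ has the form $p^x\ketbra{0}_F\otimes\tau^{x,0}_{A_rC_r}+(1-p^x)\ketbra{1}_F\otimes\tau^{x,1}_{A_rC_r}$, a binary classical--quantum state since $\mu$ is supported on $\dom(F)$ so $F(x,y)\in\B$. Applying \clm{maxlike} with $A\leftarrow F$ and $B\leftarrow A_rC_r$, Alice has a measurement on $A_rC_r$ whose outcome predicts $F$ correctly with probability at least $\tfrac12+\I(F:A_rC_r)/(2\log 2)$ on this conditional state; she outputs it. Averaging over $x\leftarrow X$ and using that for a classical register $X$ one has $\I(F:A_rC_r~|~X)_{\Psi_r}=\expec_{x\leftarrow X}\I(F:A_rC_r)$ of the conditional state (as in \fct{IvsB}), we get $\err^\mu(\Pi')\le \tfrac12-\I(F:A_rC_r~|~X)_{\Psi_r}/(2\log 2)$. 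The measurement is local and adds no round, so $\Pi'$ has $r$ rounds and $\QCC(\Pi')\le \QCC(\Pi)+1$ (indeed $=\QCC(\Pi)$).

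The second bullet is symmetric with the roles of Alice/Bob and $X/Y$ swapped, using the earlier snapshot: $\Pi'$ runs rounds $1,\dots,r-1$ of $\Pi$, then Bob applies $U^r$ locally (no communication), so Bob holds $B_rC_r$ and the global state is $\Psi_r$. Bob measures $B_rC_r$ conditioned on his input $y$ via \clm{maxlike} to predict $F$, transmits the one-bit prediction to Alice as the round-$r$ message, and Alice outputs it. This gives $r$ rounds and communication $\sum_{i<r}|C_i|+1=\QCC(\Pi)-|C_r|+1\le\QCC(\Pi)$ when $|C_r|\ge 1$ (if $|C_r|=0$ the last message of $\Pi$ is empty and the first bullet already applies with $\QCC(\Pi')=\QCC(\Pi)$), and averaging over $y\leftarrow Y$ yields $\err^\mu(\Pi')\le \tfrac12-\I(F:B_rC_r~|~Y)_{\Psi_r}/(2\log 2)$.

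Both bullets in fact hold simultaneously when $|C_r|\ge 1$, so the ``either/or'' is only for convenience. The step to be careful about is the register bookkeeping --- verifying that the predicting player physically holds the registers they measure (so that $\Pi'$ is a legal protocol) and that the round and communication counts come out as claimed --- together with the routine passage from the per-input bound of \clm{maxlike} to the conditional mutual information via the classical-conditioning identity in \fct{IvsB}; there is no conceptual difficulty beyond \clm{maxlike} itself. The inequalities above are non-strict; the strict inequality in the statement follows from the strict slack ($2^x>2x^2$) used in the proof of \clm{I2} whenever the relevant mutual information is positive, and when it vanishes the bound just reads $\err^\mu(\Pi')\le 1/2$, which is immediate.
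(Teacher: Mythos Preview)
Your proof is correct and follows essentially the same approach as the paper: run $\Pi$, have the appropriate player apply the measurement from \clm{maxlike} to her/his local registers (conditioned on the held input), and average over that input to turn the per-input mutual information into the conditional mutual information via \fct{IvsB}. Your register bookkeeping (Alice holds $A_rC_r$ after round $r$; Bob holds $B_rC_r$ after applying $U^r$ but before sending $C_r$) matches the paper's construction exactly, and your observation that the first bullet actually needs no extra communication is fine---the paper's $+1$ is harmless slack; note however that your attempted justification of the \emph{strict} inequality via the slack in \clm{I2} does not quite work when the mutual information vanishes (the paper's own proof also only yields $\leq$, and the strictness is never used downstream).
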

\begin{proof}

We first prove the first case. In $\Pi'$, Alice and Bob run the protocol $\Pi$, after which Alice proceeds as follows. Consider the state $\Psi_{r,XFA_rC_r}$ in registers $XFA_rC_r$ (note that we have added a new register $F$ to the state $\Psi_r$, which can be done naturally). Let $$\Psi_{r,XFA_rC_r} = \sum_x \mu(x)\ketbra{x}_X\otimes \Psi^x_{r,FA_rC_r}$$ be the decomposition of $\Psi_{r,XFA_rC_r}$, which is possible since $X$ is classical. Note that $\Psi^x_{r,FA_rC_r}$ is a classical quantum state between the registers $F$ and $A_rC_r$.  Alice,  essentially applying \clm{maxlike} makes a prediction about the content of register $F$. Then she outputs the prediction. Clearly,
\[\QCC(\Pi') = \QCC(\Pi) + 1 . \]
For every input $x$ for Alice, her prediction is successful with probability at least ${1}/{2}+{\I(F:A_rC_r)_{\Psi^x_r}}/{2\log(2)}$
by \clm{maxlike}. Hence the overall success probability of $\Pi'$ is at least
\[\expec_{x\leftarrow X}\left[\frac{1}{2}+\frac{\I(F:A_rC_r)_{\Psi^x_r}}{2\log(2)}\right]
=\frac{1}{2}+\frac{\I(F:A_rC_r|X)_{\Psi_r}}{2\log(2)}.\qedhere \]

Second case follows with same argument, but applied on Bob' side before he sends $C_r$ to Alice. Bob then sends the outcome to Alice instead of $C_r$. 
\end{proof}

The following claim is used in our proof to handle the easy case of a biased input distribution. 

\begin{claim} \label{clm:biasmu}
Let $F\colon\X \times \Y \rightarrow \{0,1,*\}$ be a partial function and let $\mu$ be a distribution over $\dom(F)$.  Let $\epsilon \in (0,1/2)$ and $c \geq 1$ be a positive integer.    For $i \in [c]$, let $X_i,Y_i$ be registers with the state $\sum_{x,y}\mu(x,y)\ketbra{x}_{X_i}\otimes\ketbra{y}_{Y_i}$ and define register $L_i$ that holds the value $F(x_i,y_i)$. Define $X \defeq X_1\ldots X_c$, $Y \defeq Y_1\ldots Y_c$, and $L \defeq L_1\ldots L_c$. Let $\Psi_{XYL}$ be the joint state in registers $X,Y,L$. Then 
either
\begin{enumerate}
\item[(a)] There exists a protocol $\Pi$ for $F$ such that $\QCC(\Pi) = 1$, and $\err^\mu(\Pi) \leq \frac{1}{2} - \epsilon$, or
\item[(b)] $\Delta(\Psi_{XL}, \Psi_X \otimes W_{L_1}\otimes\ldots W_{L_c})  \leq c\epsilon $,
where $W_{L_i}$ is the maximally mixed state in register $L_i$.
\end{enumerate}
\end{claim}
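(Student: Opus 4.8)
The plan is to collapse the whole dichotomy onto a single number: the bias of the best ``guess the value of $F$ from your own input'' strategy. Since $\mu$ is supported on $\dom(F)$, for each $x\in\X$ the quantity $p_x\defeq\Pr[F(x,Y_i)=1\mid X_i=x]$ is well defined, and tracing $Y_i$ out of the joint state gives
\[
\Psi_{X_iL_i}=\sum_x\mu_X(x)\,\ketbra{x}_{X_i}\otimes\bigl((1-p_x)\ketbra{0}+p_x\ketbra{1}\bigr)_{L_i},
\]
where $\mu_X$ is the $\X$-marginal of $\mu$; this is the same state for every $i$. Set $\beta\defeq\expec_{x\leftarrow X_i}\bigl|p_x-\tfrac12\bigr|$. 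I will show that case (a) holds when $\beta>\epsilon$ and case (b) holds when $\beta\le\epsilon$.

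For the first case, consider the protocol $\Pi$ in which Alice, on input $x$, locally computes the majority guess $g(x)\defeq\mathbbm{1}[p_x\ge\tfrac12]$ into an output register, sends one dummy qubit to Bob so that $\QCC(\Pi)=1$, and then outputs $g(x)$. Its distributional error is $\err^\mu(\Pi)=\expec_{(x,y)\leftarrow\mu}\Pr[g(x)\neq F(x,y)]=\expec_x\min(p_x,1-p_x)=\tfrac12-\beta<\tfrac12-\epsilon$, which is exactly (a). (Symmetrically one could let Bob guess from $y$ and send the bit to Alice; either works.) For the second case, note that $\Psi_{X_iL_i}$ and $\Psi_{X_i}\otimes W_{L_i}$ are both diagonal in the computational basis of $X_iL_i$, so $\Delta(\Psi_{X_iL_i},\Psi_{X_i}\otimes W_{L_i})$ is just the total variation distance between the distribution sending $(x,1)\mapsto\mu_X(x)p_x$ and $(x,0)\mapsto\mu_X(x)(1-p_x)$ and the distribution sending $(x,b)\mapsto\mu_X(x)/2$, which equals $\expec_x|p_x-\tfrac12|=\beta\le\epsilon$. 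Since the $c$ pairs $(X_i,Y_i)$ are mutually independent, $\Psi_{XL}=\bigotimes_{i=1}^c\Psi_{X_iL_i}$ and $\Psi_X\otimes W_{L_1}\otimes\cdots\otimes W_{L_c}=\bigotimes_{i=1}^c(\Psi_{X_i}\otimes W_{L_i})$, so by the standard hybrid argument (triangle inequality for $\|\cdot\|_1$ together with its multiplicativity under tensor products, i.e.\ subadditivity of trace distance on product states),
\[
\Delta\bigl(\Psi_{XL},\ \Psi_X\otimes W_{L_1}\otimes\cdots\otimes W_{L_c}\bigr)\le\sum_{i=1}^c\Delta(\Psi_{X_iL_i},\Psi_{X_i}\otimes W_{L_i})\le c\epsilon,
\]
which is (b).

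I do not expect a genuine obstacle: the argument is a one-line hybrid bound sitting on top of the observation that beating error $\tfrac12-\epsilon$ per coordinate with zero communication is precisely the failure of each $\Psi_{X_iL_i}$ to be $\epsilon$-close to $\Psi_{X_i}\otimes W_{L_i}$. The only things needing care are routine identifications — that the reduced state $\Psi_{X_iL_i}$ has the stated form, and that trace distance between these diagonal states is ordinary total variation distance — together with the cosmetic point of padding the trivial guessing strategy up to exactly one qubit of communication so that it is a legitimate protocol in the model of \sec{comm}.
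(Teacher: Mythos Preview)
Your proof is correct and essentially identical to the paper's: both define the single-copy bias $\beta=\expec_x|p_x-\tfrac12|$, show that $\beta>\epsilon$ yields the one-bit guessing protocol for (a), and that $\beta\le\epsilon$ gives $\Delta(\Psi_{X_iL_i},\Psi_{X_i}\otimes W_{L_i})=\beta\le\epsilon$ and hence (b) via subadditivity of trace distance over the $c$ independent copies. The paper is terser (it cites \fct{prod} for the tensor step), but the argument is the same.
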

\begin{proof}
Define, $q^{x_1} \defeq \Pr[F=0~|~X_1=x_1]$. Assume $\expec_{x_1\leftarrow X_1} \left[\left| \frac{1}{2} - q^{x_1}\right| \right] \geq \epsilon$.  Let $\Pi$ be a protocol where Alice, on input $x_1$, outputs $0$ if $q^{x_1} \geq 1/2$ and $1$ otherwise. Then, 
$$\err^\mu(\Pi)  =  \frac{1}{2} - \expec_{x_1\leftarrow X_1}\left| \frac{1}{2} - q^{x_1}\right| \leq \frac{1}{2} - \epsilon. $$
Assume otherwise $\expec_{x_1\leftarrow X_1}\left| \frac{1}{2} - q^{x_1}\right|  < \epsilon$. This implies
\begin{align*}
 \Delta(\Psi_{XL}, \Psi_X \otimes W_{L_1}\otimes\ldots W_{L_c}) &\leq c \cdot \Delta(\Psi_{X_1L_1}, \Psi_{X_1} \otimes W_{L_1}) = c \cdot  \expec_{x_1\leftarrow X_1}\left| \frac{1}{2} - q^{x_1}\right| < c\epsilon,
\end{align*}
where the first inequality follows from \fct{prod}.
\end{proof}

In below, let $A'_r, B'_r$ represent Alice and Bob's registers at round $r$. That is, at even round $r$, $A'_r = A_rC_r, B'_r=B_r$ and at odd $r$, $A'_r=A_r, B'_r=B_rC_r$. We will need the following version of quantum-cut-and-paste lemma from \cite{NT16} (also see \cite{JRS03} for a similar argument, where it is used to lower bound quantum communication complexity of disjointness). This is a special case of \cite[Lemma 7]{NT16} and we have rephrased it using our notation. 

\begin{lemma} [Quantum cut-and-paste] \label{lem:quantum_cut_paste} Let $\Pi$ be a quantum protocol with classical inputs and consider distinct inputs $u,u'$ for Alice and $v,v'$ for Bob. Let $\ket{\Psi_{0,A_0 B_0}}$ be the initial shared state between Alice and Bob. Also let $\ket{\Psi_{k,A'_k B'_k}^{u'',v''}}$ be the shared state after round $k$ of the protocol when the inputs to Alice and Bob are $(u'',v'')$ respectively. For $k$ odd, let
$$
h_k = \BR \left( \Psi_{k,B'_k}^{u,v}, \Psi_{k,B'_k}^{u',v}\right)
$$
and for even $k$, let
$$
h_k = \BR \left( \Psi_{k,A'_k}^{u,v}, \Psi_{k,A'_k}^{u,v'}\right) .
$$
Then 
$$
\BR \left( \Psi_{r,A'_r}^{u',v},  \Psi_{r,A'_r}^{u',v'}\right) \le h_r + h_{r-1} + 2 \sum_{k=1}^{r-2} h_k .
$$
\end{lemma}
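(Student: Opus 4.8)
The plan is to prove the statement by a round-by-round hybrid argument, tracking how ``pasting'' one player's message chain onto a different input affects the global state in the Bures metric, and then invoking Uhlmann's theorem (\fct{uhlmann}) at each step to realize the required local transformations by unitaries on the non-communicating part of a register. Since this is stated as a special case of \cite[Lemma 7]{NT16}, I would mostly be unpacking that argument in the present notation. The central idea: in a quantum protocol the only thing that propagates between the two parties is the message register $C_k$; Alice's unitary $U^k$ at an odd-then-even step depends only on $X$ and acts on $A_{k-1}C_{k-1}$, and similarly for Bob. So if I take the transcript of Alice's actions on input $u'$ and Bob's actions on input $v$, I can ``graft'' Alice's $u'$-behaviour onto a run where Bob plays $v'$ provided the message register looks the same — and the cost of the discrepancy is exactly controlled by the $h_k$'s.

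Concretely, I would set up $r$ hybrid states $\Phi^{(0)},\Phi^{(1)},\dots,\Phi^{(r)}$ interpolating between the real run with inputs $(u',v)$ and the real run with inputs $(u',v')$: in $\Phi^{(j)}$ the first $j$ message rounds are generated as in the $(u',v)$ run and the remaining rounds as in the $(u',v')$ run, with the appropriate local unitary from Uhlmann's theorem inserted at the splice point so that the marginals on the register held by the party who is about to act agree. Using monotonicity of $\BR$ under quantum operations (\fct{monotonicitydistance}) — a channel only decreases Bures distance, and the subsequent unitaries of the protocol leave it unchanged — the increment $\BR(\Phi^{(j)},\Phi^{(j+1)})$ is bounded by the distance between the two candidate reduced states at round $j$ on the acting player's side, i.e.\ by $h_j$ (or $h_{j\pm1}$ after a standard shift by one round, which is the source of the $h_r + h_{r-1} + 2\sum_{k=1}^{r-2} h_k$ form rather than a plain $\sum_k h_k$: each interior hybrid step gets ``charged'' to a distance that itself was generated by a mismatch one round earlier, so it can be counted twice, while the two endpoint rounds are charged once). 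Summing the increments via the triangle inequality for $\BR$ (\fct{triangle}, part 1) telescopes to the claimed bound on $\BR(\Psi_{r,A'_r}^{u',v}, \Psi_{r,A'_r}^{u',v'})$.

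The step I expect to be the main obstacle is the bookkeeping at the splice points: one must be careful that Uhlmann's theorem is applied to the correct bipartition (the acting player's local register versus everything else, including the other player's register and the purifying registers $R_X,R_Y$), that the inserted unitary acts only on the part of the register that does not get communicated, and that after the splice the remaining rounds of the protocol — which are fixed unitaries independent of the hybrid index — can be applied on both hybrids simultaneously so that monotonicity gives the increment bound. Getting the round-parity accounting exactly right (which $h_k$ controls which hybrid increment, and why the interior terms pick up the factor $2$) is the delicate combinatorial part; everything else is a routine application of the triangle inequality and monotonicity of the Bures metric. Rather than reproduce the full argument of \cite{NT16}, I would state precisely which bipartitions and which Uhlmann unitaries are used, exhibit the hybrid sequence, and then cite monotonicity and the triangle inequality to conclude.
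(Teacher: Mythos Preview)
The paper does not actually prove this lemma: it is stated with the remark that it is a special case of \cite[Lemma~7]{NT16} (with \cite{JRS03} cited for the original argument), and no proof is given in the paper itself. Your sketch---a round-by-round hybrid interpolating between the $(u',v)$ and $(u',v')$ runs, with Uhlmann unitaries inserted at each splice point and the increments bounded by the $h_k$ via monotonicity and the triangle inequality for $\BR$---is precisely the standard argument underlying those references, so your approach is the ``right'' one; there is simply no in-paper proof to compare it against.
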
 

The following lemma (see also~\cite{CleveDNT97}) formalizes the following intuition: In a quantum protocol with communication $q$, the amount of information that Bob has about Alice's input at any time point is at most $2q$ (note that the factor of $2$ is necessary because of super-dense coding.).

\begin{lemma}\label{lem:qic<=2qcc} Let $\Pi$ be a quantum protocol with the inputs of Alice and Bob $(X,Y)$ being jointly distributed. Alice has an additional input $U$ which is independent of both $(X,Y)$. Let $\mu$ denote the distribution of inputs so that $\mu(x,u,y) = \mu(x,y) \mu(u)$. Let the total pure state after the $k^{\text{th}}$ round of the protocol be 
$$
\ket{\Psi_{k}}_{X \tX Y \tY A'_k B'_k} = \sum_{x,y} \sqrt{\mu(x,y) \mu(u)} \ket{xxuu}_{X \tX U \tU} \ket{yy}_{Y \tY} \ket{\Theta^{x,u, y}_{k}}_{A'_k B'_k} .
$$
Then 
$$
\I(B'_k Y \tY: U|X)_{\Psi_k} \le 2q_k .
$$
Here $q_k$ is communication cost up to round $k$. A similar statement holds by reversing the roles of Alice and Bob. 
\end{lemma}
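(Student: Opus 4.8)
The plan is to prove the bound by induction on the round number $k$, tracking the quantity $J_k \defeq \I(B'_k Y\tY : U \mid X)_{\Psi_k}$, the information that Bob's registers together with his input carry about Alice's auxiliary input $U$ conditioned on $X$. The base case is $J_0 = 0$: after round $0$, $B'_0 = B_0$ holds only the input-independent shared entanglement, and since $U$ is independent of $(X,Y)$ and of $B_0$, for every value $X = x$ the reduced state on $U B_0 Y \tY$ factors as the marginal on $U$ tensored with the marginal on $B_0 Y \tY$; hence $\I(U : B_0 Y\tY)_{\Psi_0^x} = 0$ for all $x$, and $J_0 = \expec_{x \leftarrow X} \I(U : B_0 Y\tY)_{\Psi_0^x} = 0$.

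The inductive step bounds the change in $J_k$ in one round. In a Bob-to-Alice round ($k$ even, so $B'_{k-1} = B_{k-1}C_{k-1}$) Bob applies a $Y$-controlled unitary on $B_{k-1}C_{k-1}$ producing $B_kC_k$, sends $C_k$ to Alice, and keeps $B'_k = B_k$. Conditioned on each value of $X$ this is a unitary on the joint subsystem containing $Y$ and Bob's registers, so by \fct{mono} (an equality for unitaries) $\I(U : B_kC_kY\tY \mid X)_{\Psi_k} = \I(U : B_{k-1}C_{k-1}Y\tY \mid X)_{\Psi_{k-1}} = J_{k-1}$, and then discarding $C_k$ cannot increase conditional mutual information, giving $J_k \le J_{k-1}$. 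In an Alice-to-Bob round ($k$ odd, so $B'_{k-1} = B_{k-1}$) Alice applies an $X$-controlled unitary on her own registers, carving out a message register $C_k$ that she sends to Bob so that $B'_k = B_{k-1}C_k$ (with $B_k \equiv B_{k-1}$). This operation is local to Alice and controlled by $X$, so conditioned on each value of $X$ it leaves the reduced state on $B_{k-1}Y\tY U$ unchanged; hence $\I(U : B_{k-1}Y\tY \mid X)$ has the same value in $\Psi_k$ as in $\Psi_{k-1}$, which is $J_{k-1}$. The chain rule \fct{chain-rule} then gives
\[
J_k \;=\; \I(U : B_{k-1}Y\tY \mid X)_{\Psi_k} + \I(U : C_k \mid X B_{k-1}Y\tY)_{\Psi_k} \;=\; J_{k-1} + \I(U : C_k \mid X B_{k-1}Y\tY)_{\Psi_k}.
\]
Writing $W = X B_{k-1}Y\tY$, \fct{inf-bound} gives $\I(U : C_k W) \le \I(U : W) + 2\,\ent{C_k}$, hence $\I(U : C_k \mid W) = \I(U:C_kW) - \I(U:W) \le 2\,\ent{C_k} \le 2|C_k|$ by \fct{nonneg}. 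So $J_k \le J_{k-1} + 2|C_k|$. Summing over rounds $1,\dots,k$ (only the odd rounds contribute, and $\sum_{k'\le k,\ k'\ \mathrm{odd}} |C_{k'}| \le q_k$) yields $J_k \le 2q_k$; the version with Alice and Bob swapped is symmetric.

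The information-theoretic content is a single line: the factor of $2$ --- the superdense-coding phenomenon --- is exactly the $2\,\ent{C_k}$ term in \fct{inf-bound}. What requires care is the bookkeeping: correctly tracking which registers each player holds before and after each round under the paper's $A_r/C_r/B_r$ versus $A'_r/B'_r$ conventions, and justifying that a player's input-controlled local operation leaves the relevant conditional mutual information unchanged --- which is done by conditioning on the classical control register first and only then appealing to unitary-invariance and monotonicity from \fct{mono}. One should also check the harmless point that $\tX,\tU,\tY$ and the inaccessible purifications are untouched throughout and simply ride along in the conditioning.
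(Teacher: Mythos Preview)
Your proof is correct and follows essentially the same approach as the paper's: induction on the round number, with Bob's rounds handled by monotonicity (unitary invariance plus tracing out the outgoing message) and Alice's rounds handled by \fct{inf-bound} to pick up the $2|C_k|$ increment. The only cosmetic difference is that you first split off $\I(U:C_k\mid XB_{k-1}Y\tY)$ via the chain rule and then bound it, whereas the paper applies \fct{inf-bound} directly (conditioned on the classical $X$) to $\I(C_kB_kY\tY:U\mid X)$; these are the same computation.
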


\begin{proof}
We prove the first inequality by induction on $k$.  The inequality holds trivially for $k=0$. First suppose $k$ is even, so that Bob sent the last message. Then,
\begin{align*}
\I(B'_k Y \tY: U|X)_{\Psi_k} &\le \I(B'_{k-1} Y \tY: U|X)_{\Psi_{k-1}} & \mbox{(\fct{mono})} \\
&\le 2q_{k-1} \le 2 q_k,
\end{align*}

where the first inequality follows by induction step.

Now suppose $k$ is odd, so that Alice sent the last message. By our notation, $B'_k \equiv C_k B_k$ where $C_k$ is Alice's message. Then,
\begin{align*}
\I(B'_{k} Y \tY: U|X)_{\Psi_{k}} &= \I(C_{k}B_{k}Y \tY: U|X)_{\Psi_{k}} \\
& \le \I(B_{k}Y \tY: U|X)_{\Psi_{k}} + 2\ent{C_k|X} & \mbox{(\fct{inf-bound})}\\
& = \I(B'_{k-1}Y \tY: U|X)_{\Psi_{k-1}} + 2 \ent{C_k|X} \\
& \le 2 q_{k-1} + 2 |C_k| = 2 q_k ,
\end{align*}
where last inequality follows from induction step.
\end{proof}
 
\section{Separation}
\label{sec:separation}

In this section we establish the main result, a nearly quadratic separation between quantum communication complexity and the logarithm of approximate rank, which we restate below.

\separation*


Our proof is organized as follows. In \sec{lookup} we define lookup functions, which we will use to construct the function achieving the separation in \thm{separation}. Then in \sec{sep} we prove \thm{separation} using results from later sections. More precisely, we prove the upper bound on our function's approximate rank using \thm{upper}, proved in \sec{upper}. We prove the lower bound using \cor{lower}, which follows from \thm{qcs} in \sec{lower}.
\thm{upper} and \cor{lower} provide a black-box way of using the results of \sec{upper} and \sec{lower} without delving into their proofs. 


\subsection{Lookup functions}
\label{sec:lookup}

We define a simpler version of lookup functions than the ones used in \cite{ABB+16a}, since we only deal with total functions in this paper. This is only for simplicity, and the lower bound shown in this paper also applies to the more general lookup functions for partial functions defined in \cite{ABB+16a}.

First, for any function $F\colon\X \times \Y \to \B$ and integer $c>0$, we can define a new function $F^c \colon \X^c \times \Y^c \rightarrow \B^c$ as
$F^c((x_1, \ldots, x_c), (y_1, \ldots, y_c)) = (F(x_1, y_1), \ldots, F(x_c,y_c))$, which takes $c$ inputs to $F$ and outputs the answers to all $c$ inputs. $F^c$ is simply the problem of computing $F$ on $c$ independent inputs and outputting all $c$ answers.

An $(F, \mathcal{G})$-lookup function, denoted $F_\mathcal{G}$, is defined by a  function $F\colon \X \times \Y \to \{0,1\}$ and a family $\mathcal{G}=\{G_0, \ldots, G_{2^c-1}\}$ of functions, where each $G_i \colon (\X^c \times \{0,1\}^m) \times (\Y^c \times \{0,1\}^m) \to \B$.  It can be viewed as a generalization of the address function.  Alice receives input $\x=(x_1, \ldots, x_c) \in \X^c$ and $\fu=(u_0, \ldots, u_{2^c-1}) \in \{0,1\}^{m 2^c}$ 
and likewise Bob receives input $\y=(y_1, \ldots, y_c) \in \Y^c$ and $\fv=(v_0, \ldots, v_{2^c-1}) \in \{0,1\}^{m2^c}$. We refer to the inputs $(\x,\y)$ as the ``address part'' of the input and the inputs $(\fu,\fv)$ as the ``array part'' of the input.
We will refer to $u_i$ and $v_i$ as a ``cell'' of the array.
The \emph{address}, $\ell$, is determined by the evaluation of $F$ on $(x_1, y_1), \ldots, (x_c,y_c)$, that is 
 $\ell = F^c(\x,\y) \in \{0,1\}^c$.  This address (interpreted as an integer in $\{0,\ldots, 2^c-1\}$) then determines which function, out of the $2^c$ functions $G_i$, the  players should evaluate and which pair of cells, out of the $2^c$ possible pairs $(u_i,v_i)$, of the array are relevant to the output of the function.  The goal of the players is to output 
 $G_\ell(\x, u_\ell, \y, v_\ell)$. The formal definition is the following.

\begin{definition}[$(F, \G)$-lookup function for total $F$]
\label{def:lookup}
Let $F\colon \X \times \Y \to \{0,1\}$ be a function and $\mathcal{G}=\{G_0, \ldots, G_{2^c-1}\}$ a 
family of functions, where each $G_i \colon (\X^c \times \{0,1\}^m) \times (\Y^c \times \{0,1\}^m) \to \B$.  
An $(F, \mathcal{G})$-lookup function, denoted $F_\mathcal{G}$, is a function 
$$F_\G \colon (\X^c \times \B^{m2^c}) \times \Y^c \times \B^{m2^c}\to \B$$
defined as follows.
Let $\x=(x_1, \ldots, x_c) \in \X^c$, $\y =(y_1, \ldots, y_c) \in \Y^c$, $
\fu=(u_0, \ldots, u_{2^c-1}) \in \B^{m2^c}$, and $\fv=(v_0, \ldots, v_{2^c-1}) \in \B^{m2^c}$.  Then
\[
F_{\G}(\x, \fu, \y, \fv) = 
G_\ell(\mathbf{x}, u_\ell, \mathbf{y}, v_\ell), 
\]
where $\ell = F^c(\x,\y)$. 
\end{definition}

Since we only deal with total functions $F$, we will not need to impose a consistency condition for instances where some input to $F$ is outside its domain. (In \cite{ABB+16a}, this condition was called ``consistency outside $F$.'')

In order to show lower bounds on the communication complexity of $F_\G$ (\thm{qcs}) we add two constraints on the family $\G$ as in \cite{ABB+16a}.  

\begin{definition}[Nontrivial XOR family]
\label{def:xor}
Let $\mathcal{G}=\{G_0, \ldots, G_{2^c-1}\}$ a 
family of communication functions, where each $G_i \colon (\X^c \times \{0,1\}^m) \times (\Y^c \times \{0,1\}^m) \to \B$. We 
say that $\G$ is a nontrivial XOR family if the following conditions hold.
\begin{enumerate}
\item (Nontriviality) For all $\x=(x_1, \ldots, x_c) \in \X^c$ and $\y =(y_1, \ldots, y_c) \in \Y^c$, if we have 
$\ell = F^c(\x,\y) \in \B^c$ 
then there exist $u,v,u', v' \in \B^m$ such that 
$G_\ell(\x, u, \y, v) \ne G_\ell(\x, u', \y, v')$.
\item (XOR function) For all $i \in \{0,\ldots,2^c-1\}, u,u',v,v' \in \B^m$ and $\x=(x_1, \ldots, x_c) \in \X^c, 
\y=(y_1, \ldots, y_c) \in \Y^c$ 
if $u \oplus v = u' \oplus v'$ then $G_i(\x, u, \y, v) = G_i(\x, u', \y, v')$.
\end{enumerate}
\end{definition}

The first condition simply enforces that the content of the correct part of the array, i.e., $(u_\ell,v_\ell)$, is relevant to the output of the function in the sense that there is some setting of these bits that makes the function true and another setting that makes it false.

The second condition enforces that the output of the function only depends on $u_\ell \oplus v_\ell$, and not $u_\ell$ and $v_\ell$ individually. This is just one way of combining the arrays of Alice and Bob to form one virtual array that contains $2^c$ cells. Other combining functions are also possible.

\subsection{Separation}
\label{sec:sep}

We can now prove the separation using results from \sec{upper} and \sec{lower}.
Our proof strategy is depicted in \fig{separation}.  

\setlength\tabcolsep{2ex}
\def\arraystretch{2}
\begin{figure}
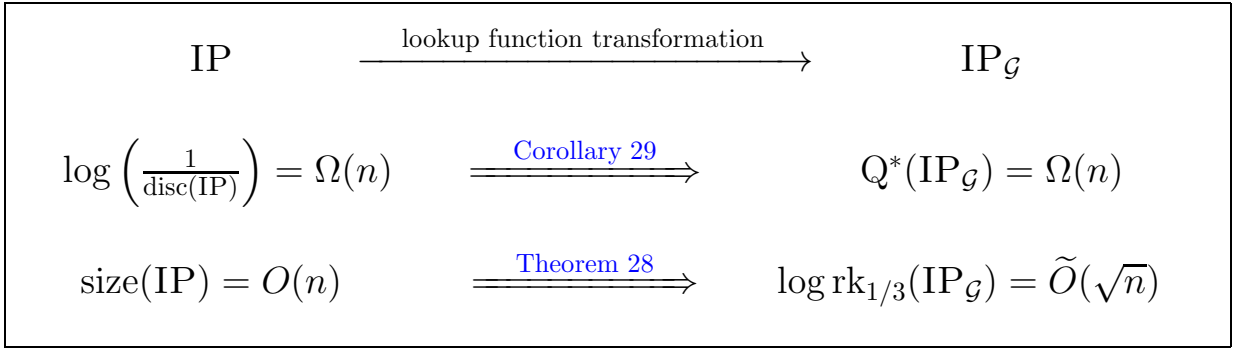

\centering \Large
\begin{tabular}{|c c c|}
\hline
\makebox[7em]{$\IP$} & \hspace{-6ex} $\xrightarrow{\quad\textrm{lookup function transformation}\quad}$ \hspace{-6ex} & \makebox[7em]{$\IP_\G$}\\[1ex]
\hspace{2ex}$\log\left(\frac{1}{\mathrm{disc}(\IP)}\right) = \Omega(n)$ & $\xRightarrow{\quad\textrm{\cor{lower}}\quad}$ & $\Q(\IP_\G) = \Omega(n)$\\[1ex]
$\Size(\IP) = O(n)$ & $\xRightarrow{\quad\textrm{\thm{upper}}\quad}$ & $\log \arank(\IP_\G) = \tO(\sqrt{n})$ \hspace{2ex} \\[1.5ex]
\hline
\end{tabular}
\caption{High-level overview of our separation. Here $\IP:\B^n\times \B^n \to \B$ is the inner product function, $\disc$ is the discrepancy, and $\Size$ is the
circuit size.
\label{fig:separation}}
\end{figure}

The separating function is going to be a lookup function $F_\mathcal{G}$ defined by a 
function $F\colon \X \times \Y \to \{0,1\}$ and a function family $\mathcal{G}=\{G_0, \ldots, G_{2^c-1}\}$.
We will choose $F$ to be the well-known inner product function $\IP:\B^n \times \B^n \to \B$ defined as
\[\IP(x,y) = \mathop{\bigoplus}_{i=1}^n (x_i \wedge y_i).\]

The communication complexity of the inner product function is well understood and is $\Theta(n)$ in all the models discussed in this paper. In fact, even $\log \textrm{sign-rank}(F) = \Theta(n)$~\cite{For02}, where $\textrm{sign-rank}(F)$ is defined as the minimum rank of a matrix $G$ such that $\ell_\infty(F-G)<1/2$. 

To define our function family $\G$, we use the following theorem proved in \sec{upper}.

\begin{restatable}{theorem}{thmupper}\label{thm:upper}
	Let $F$ be a total function with circuit size $\Size(F)$.
	Then for all $c>0$, there exists a nontrivial family of XOR functions $\mathcal{G}=\{G_0,G_1,\ldots,G_{2^c-1}\}$, such that
	\[\log \arank(F_\G) = \tO(c^{3/2} \sqrt{\Size(F)}).\]
\end{restatable}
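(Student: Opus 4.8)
\textbf{Proof proposal for \thm{upper}.}
The plan is to construct the family $\G$ so that each cell of the array is supposed to hold a full transcript of the wire values of a fixed Boolean circuit $\mathcal{C}$ of size $s=\Size(F)$ computing $F$, on each of the $c$ inputs $(x_1,y_1),\dots,(x_c,y_c)$. Concretely, fix $\mathcal{C}$ once and for all; set $m$ to be (roughly) $c\cdot s$ bits, so that a string $w\in\B^m$ describes, for each $i\in[c]$, a claimed value for every wire of $\mathcal{C}$ when run on $(x_i,y_i)$. Define $G_\ell(\x,u_\ell,\y,v_\ell)=1$ iff, writing $w=u_\ell\oplus v_\ell$, (i) $w$ is a locally consistent computation of $\mathcal{C}$ on each $(x_i,y_i)$ — i.e. the input wires of $\mathcal{C}$ agree with the bits of $x_i$ that Alice holds and the bits of $y_i$ that Bob holds, and every gate's output wire in $w$ equals the gate applied to its input wires in $w$ — and (ii) the output wire of the $i$-th copy, read off from $w$, equals $\ell_i$ for all $i$. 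Since $\mathcal{C}$ is correct, for the true $\ell=F^c(\x,\y)$ there is a unique such $w$ (the honest transcript) and it satisfies both conditions; if the players also need to output a bit, one can designate one output-wire value as the answer bit — but for the total lookup function as defined here it suffices that $G_\ell$ just checks consistency, with the XOR and nontriviality conditions arranged by an extra "flag" bit in each cell. I would verify that this $\G$ is a nontrivial XOR family: the XOR property is immediate from the definition ($G_\ell$ depends on $u_\ell\oplus v_\ell$ only); nontriviality holds because flipping the flag bit in $w$ flips the output of $G_\ell$, and the honest value of the flag makes $G_\ell=1$.

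Next I would bound $\arank(F_\G)$ by exhibiting a low-cost \emph{unambiguous} quantum certification protocol and then converting it to an approximate-rank bound. Here is the certification idea: suppose a prover announces the alleged address $\ell\in\B^c$. Alice and Bob want to verify, with one-sided error, that $F^c(\x,\y)=\ell$ \emph{and} that the $\ell$-th cell is honest. Using the cell contents $u_\ell$ (Alice) and $v_\ell$ (Bob) and the address bits of $x,y$, the verification reduces to checking that none of the $O(cs)$ local constraints (input-wire constraints and gate constraints, for all $c$ copies) is violated, plus the $c$ output constraints. Each single constraint is checkable with $O(\log(cs))$ bits of communication by exchanging the relevant few wire bits of $u_\ell,v_\ell$ and the relevant input bits. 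So the task is an OR over $N=O(cs)$ easily-checkable predicates, and by the distributed Grover search of \cite{Gro96,BCW98} Alice and Bob can detect whether any constraint fails using $\tO(\sqrt{N}\cdot\log(cs)) = \tO(\sqrt{c s})\cdot\mathrm{polylog}$ qubits. (The announced $\ell$ is $c$ bits, negligible; I should also have them verify, again by Grover over the $2^c$ cells — no, that is too expensive, so instead the protocol is that \emph{for each candidate $\ell$} there is a cheap verifier, and a dishonest $\ell$ is caught because either $F^c(\x,\y)\neq\ell$ forces an output-constraint violation, or the honest cell for the true address is the only accepting one.) This yields a "nondeterministic-plus-verification" quantum protocol of cost $q=\tO(c^{3/2}\sqrt{s})$ — the extra $\sqrt c$ absorbs the $c$-fold repetition overhead and the $c$ bits of $\ell$ appearing inside the polylog.

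Then I would turn an unambiguous quantum verification bound into an approximate-rank bound. The standard route (as in \cite{ABK16} and the $\gamma_2$/approximate-rank literature): a quantum verification protocol of cost $q$ that, on input $(\x,\fu,\y,\fv)$, accepts some announced $\ell$ with probability $\ge 2/3$ when $F_\G=1$ and all announcements are rejected with probability $\le 1/3$ when $F_\G=0$, gives (by Yao/Kremer-style decomposition of the verifier's accept operator into a sum of $2^{O(q)}$ rank-one terms, and an OR over the $2^c$ announced values) a matrix of rank $2^{O(q)}\cdot 2^c$ that $1/3$-approximates $F_\G$ entrywise. Taking logs, $\log\arank(F_\G)=O(q+c)=\tO(c^{3/2}\sqrt{\Size(F)})$, as claimed. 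I would cite the exact form of the Yao–Kremer decomposition for approximate rank used in the paper's \fct{arankQ}-surrounding discussion, noting we need the approximating matrix to have entries in $[0,1]$ even on inputs where $F_\G$ is (trivially, here) $0$ — which is automatic since it arises as an acceptance probability.

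\textbf{Main obstacle.} The delicate point is the interface between "unambiguous/one-sided-error verification with a prover" and "bounded two-sided-error approximate rank": I must make sure the OR over the $2^c$ announced addresses does not blow up the error (amplify each verifier to error $2^{-\Omega(c)}$ at a multiplicative $O(c)$ cost, which the $c^{3/2}$ already accounts for), and that for a \emph{true} input exactly the honest cell makes its verifier accept while \emph{every} wrong $\ell$ is rejected with high probability — this is where the output-wire constraints $w_{\text{out},i}=\ell_i$ are essential, since they pin $\ell$ to $F^c(\x,\y)$. Getting the Grover-search step to respect the distributed setting (Alice and Bob jointly evaluating an index-$j$ constraint that depends on a few bits of both their inputs, with $O(\log N)$ communication per query and $O(\sqrt N)$ queries) is routine via \cite{BCW98} but needs to be stated carefully so the polylog factors are honest; that, and the precise statement of the rank decomposition, are the parts I expect to require the most care rather than new ideas.
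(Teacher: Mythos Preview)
Your proposal is correct and follows essentially the same route as the paper. The paper factors the argument into two lemmas: first (\lem{logrank_cheatsheet}) it observes that for any \emph{unambiguous} lookup function (one where $G_\ell(\x,u,\y,v)=1\Rightarrow F^c(\x,\y)=\ell$) the communication matrix of $F_\G$ is literally the sum $\sum_i G_i$, so $\arank(F_\G)\le\sum_i\rank_\eps(G_i)$ with $\eps=2^{-c}/3$, whence $\log\arank(F_\G)=O(c\cdot\max_i\Q(G_i))$ via \fct{arankQ} and error reduction; second (\lem{show_your_work}) it constructs $\G$ exactly as you do, from circuit transcripts, and bounds $\Q(G_i)=\tO(\sqrt{c\,\Size(F)})$ by distributed Grover over the $O(c\,\Size(F))$ gate/input constraints. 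Your ``prover announces $\ell$, verifier runs Grover, OR over all $\ell$ with amplification to $2^{-\Omega(c)}$'' is the same argument phrased in a nondeterministic idiom; the OR-over-$\ell$ step is precisely the matrix sum $F_\G=\sum_i G_i$, and your amplification step is precisely the paper's choice $\eps=2^{-c}/3$. One small simplification: the extra ``flag bit'' you introduce for nontriviality is unnecessary --- the paper's $G_\ell$ is already nontrivial because for $\ell=F^c(\x,\y)$ the honest transcript makes $G_\ell=1$ while any corrupted transcript makes $G_\ell=0$.
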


This theorem gives us a function family $\G$ and proves that for this family we have
\begin{equation}
\label{eq:upper}
\log \arank(\IP_\G) = \tO(c^{3/2} \sqrt{\Size(\IP)}) = \tO(c^{3/2} \sqrt{n}),
\end{equation}
where we use the fact that $\Size(\IP) = O(n)$. This follows because $\IP$ is a parity of size $n$ composed with an $\AND$ function on two bits, and has a circuit of size $O(n)$ consisting of a $\log n$-depth tree of fanin-2 $\XOR$ gates with fanin-2 $\AND$ gates at the bottom.

To show the lower bound, we use the following corollary of \thm{qcs}.

\begin{restatable}{corollary}{corlower}\label{cor:lower}
Let $F_\G$ be an $(F,\G)$-lookup function for a function $F$ and a nontrivial family of XOR functions $\mathcal{G}=\{G_0,G_1,\ldots,G_{2^c-1}\}$ with $c = \Theta (\log(\Q(F)))$. Then
\[\Q(F_\G) = \Omega(\log({1}/{\mathrm{disc}(F)})).\]
\end{restatable}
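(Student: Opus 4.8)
The plan is to derive \cor{lower} as a clean packaging of \thm{qcs}, the main quantum cheat sheet theorem. The statement of \thm{qcs} (which I have not seen, but whose shape is dictated by the informal \thm{CSinformal} and the discussion in the introduction) should say that a quantum protocol with $q$ qubits of communication and $r$ rounds for $F_\G$ yields a $q$-qubit, $r$-round protocol for $F$ whose bias is at least something like $2^{-O(r)}$, provided $\G$ is a nontrivial XOR family and $c$ is large enough relative to $q$ — say $c \ge C \log q$ for a suitable constant $C$. So the first step is to take a protocol $\Pi$ for $F_\G$ with $\Q(F_\G)$ qubits of communication; without loss of generality by \fct{boost} we may assume its error is some small constant, and its number of rounds is at most its communication cost, $r \le \Q(F_\G)$. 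We need $c = \Theta(\log \Q(F))$ to be large enough for \thm{qcs} to apply, i.e. we need $c \gtrsim \log \Q(F_\G)$; here we use that $\Q(F_\G) \le \poly(\Q(F))$ for lookup functions (or more simply that if $\Q(F_\G)$ were already $\ge \poly(\Q(F))$ we would be done, since $\log(1/\disc(F)) \le O(\Q(F))$ by \thm{discQ}), so that $\log \Q(F_\G) = \Theta(\log \Q(F)) = \Theta(c)$ and the hypothesis of \thm{qcs} is met.

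The second step is to feed the resulting low-bias protocol for $F$ into \thm{discQ}. If \thm{qcs} produces a protocol for $F$ with $q = \Q(F_\G)$ qubits and error $\tfrac12 - \eps$ where $\eps \ge 2^{-O(r)} = 2^{-O(\Q(F_\G))}$, then \thm{discQ} gives
\[
\Q(F_\G) \;\ge\; \Q_{1/2 - \eps}(F) \;=\; \Omega\!\left(\log \frac{2\eps}{\disc(F)}\right) \;=\; \Omega\!\left(\log\frac{1}{\disc(F)}\right) - O(\Q(F_\G)).
\]
That last inequality looks circular, so the real content is that the round dependence is mild enough: one solves the resulting inequality $\Q(F_\G) \ge \Omega(\log(1/\disc(F))) - O(\Q(F_\G))$ for $\Q(F_\G)$ and concludes $\Q(F_\G) = \Omega(\log(1/\disc(F)))$, absorbing the additive $O(\Q(F_\G))$ into the left side with a change of constant. (If instead \thm{qcs} is stated already in the "either $\Q(F_\G)$ is large, or there is a biased protocol for $F$" dichotomy form, then the two cases combine directly.)

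The main obstacle I anticipate is not in \cor{lower} itself — which should be a short deduction — but in making sure the quantitative form of \thm{qcs} is compatible with \thm{discQ}. Specifically, \thm{discQ} only lower-bounds quantum communication when the error is exponentially close to $1/2$ \emph{in terms of $\disc(F)$}; so the bias $\eps$ we extract from \thm{qcs} must satisfy $\eps \gg \disc(F)$, i.e. the $2^{-O(r)}$ loss must be dominated by the discrepancy of the inner product function, which is $2^{-\Omega(n)}$. This is exactly why the statement needs $c = \Theta(\log \Q(F))$ (so that $c$, and hence the number of rounds one must worry about, stays logarithmic) and why the application in \fig{separation} chooses $F = \IP$ with its extremely small discrepancy: the exponential-in-rounds bias decay is swamped by the exponential-in-$n$ discrepancy. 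So the crux of writing this corollary cleanly is bookkeeping the round count $r \le \Q(F_\G)$, verifying $2^{-O(r)} \ge \disc(F)^{1-\Omega(1)}$ in the regime where the bound is nontrivial, and invoking \thm{discQ} and \thm{qcs} in the right order; everything else is routine.
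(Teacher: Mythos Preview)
Your approach is essentially the paper's: apply \thm{qcs} to an optimal protocol for $F_\G$, feed the resulting low-bias protocol for $F$ into \thm{discQ}, and absorb the additive loss into the left-hand side. Two small corrections to your guesses about \thm{qcs}. First, the bias it produces is only \emph{polynomially} small in the round count, namely $\delta/3$ with $\delta = 1/(10^9 c r^2)$, not $2^{-O(r)}$; so the additive loss is $\log(1/\delta) = O(\log c + \log r) = O(\log \Q(F_\G))$, which is even easier to absorb than the $O(\Q(F_\G))$ you budgeted for. Second, the hypothesis on $c$ is $c \ge 5\log \Q(F)$, not $c \gtrsim \log \QCC(\Pi)$, so it is satisfied immediately by the assumption $c = \Theta(\log \Q(F))$ and you do not need the detour through $\Q(F_\G) \le \poly(\Q(F))$. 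With those two adjustments your sketch matches the paper's proof line for line.
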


Here $\mathrm{disc}(F)$ is the discrepancy of $F$ (\defn{disc}). Since $\log({1}/{\mathrm{disc}(\IP)})=\Omega(n)$~\cite[Example 3.19]{KN06}, using \thm{qcs} we have
\begin{equation}
\label{eq:lower}
\Q(\IP_\G) = \Omega(\log({1}/{\mathrm{disc}(\IP)}))= \Omega(n).
\end{equation}
We can now choose $c=\Theta(\log n)$ to satisfy the conditions of \cor{lower}. Thus \eq{upper} yields
\begin{equation*}
\log \arank(\IP_\G) = \tO(\sqrt{n}),
\end{equation*}
which together with \eq{lower} gives us $\Q(\IP_\G)=\tOmega(\log^2(\arank(\IP_\G)))$, proving \thm{separation}.



\section{Upper bound on approximate rank of lookup functions}
\label{sec:upper}

The aim of this section is to prove \thm{upper}.
\thmupper*
Proving this will require some work and we will  need to carefully choose our function family $\mathcal{G}=\{G_0, \ldots, G_{2^c-1}\}$. To do this, we first introduce the concept of an \emph{unambiguous} lookup function.

\begin{definition}
\label{def:unamb}
	Let $F_\mathcal{G}$ be an $(F,\mathcal{G})$-lookup function for a function $F:\X \times \Y \to \B$ and a
	function family $\mathcal{G}=\{G_0,G_1,\ldots,G_{2^c-1}\}$. We say that $F_\mathcal{G}$ is an \emph{unambiguous} lookup function if $G_\ell$ evaluating to $1$ certifies that $F^c(\x,\y) = \ell$. That is, for all $\mathbf{x}, u, \mathbf{y}, v$, $G_\ell(\mathbf{x}, u, \mathbf{y}, v) = 1\Rightarrow F^c(\x, \y)=\ell$.
\end{definition}

Note that not all lookup functions are unambiguous even if we enforce the nontrivial XOR family condition (\defn{xor}), since the condition for when $G_i$ evaluates to $1$ need not even depend on $\x$ and $\y$. For example, $G_i(\x,u,\y,v)$ could simply be some nonconstant function of the string $u \oplus v$.
However, the condition of unambiguity is quite natural, and the lookup functions used in prior work are unambiguous lookup functions (or can be slightly modified to be unambiguous).

The advantage of unambiguous lookup functions is that we can upper bound their approximate rank as follows.

\begin{lemma}\label{lem:logrank_cheatsheet}
	Let $F_\G$ be an unambiguous $(F,\mathcal{G})$-lookup function.
	Then we have 
$$\log \arank(F_\G) = O(c\cdot \max_i \Q(G_i)).$$
\end{lemma}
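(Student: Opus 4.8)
\textbf{Proof plan for \lem{logrank\_cheatsheet}.} The strategy is to use \fct{arankQ} in reverse: since $\log \arank$ lower-bounds quantum communication complexity, it suffices to exhibit a quantum protocol for $F_\G$ with communication $O(c \cdot \max_i \Q(G_i))$ and then argue the matching bound on approximate rank directly, because the gap in \fct{arankQ} (the additive $\log\log$ term) goes the wrong way. So instead I would build the low-rank approximating matrix for $F_\G$ more explicitly, guided by the protocol intuition. The key observation is that, by unambiguity, $F_\G(\x,\fu,\y,\fv)=1$ if and only if there exists an index $i\in\{0,\dots,2^c-1\}$ with $G_i(\x,u_i,\y,v_i)=1$: indeed if $G_i$ evaluates to $1$ then unambiguity forces $i=F^c(\x,\y)=\ell$, and the output is exactly $G_\ell(\x,u_\ell,\y,v_\ell)$. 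Thus $F_\G = \bigvee_{i=0}^{2^c-1} G_i'$, where $G_i'(\x,\fu,\y,\fv) \defeq G_i(\x,u_i,\y,v_i)$ is the composition of $G_i$ with the projection picking out the $i$-th cell. Approximate rank behaves well under both OR and under restricting/embedding coordinates, so the plan is: (1) bound $\arank$ of each $G_i'$, (2) combine the $2^c$ of them via an OR, paying a $\poly(2^c)$ factor, i.e.\ $O(c)$ in the logarithm.

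For step (1): $G_i'$ is obtained from $G_i$ by fixing no inputs but renaming coordinates — as a communication matrix, $G_i'$ is a ``blow-up'' of the matrix $G_i$, where each row $(\x,\fu)$ is mapped to the $G_i$-row $(\x,u_i)$ (ignoring the other $u_j$'s) and similarly for columns. Such coordinate-projection maps do not increase rank, so $\arank(G_i') \le \arank(G_i)$. Then by \fct{arankQ} applied to $G_i$ — or more directly by the Buhrman–de Wolf / LSS construction that a $q$-qubit quantum protocol yields an approximating matrix of rank $2^{O(q)}$ (possibly times a $\poly$ factor in the dimensions) — we get $\log\arank(G_i') = O(\Q(G_i)) + O(\log\log(\cdot))$. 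Here I should be a little careful: $\Q(G_i)$ is the entanglement-assisted bounded-error complexity, and the clean statement ``$q$-qubit protocol $\Rightarrow$ rank-$2^{O(q)}$ approximant'' holds for the \emph{entanglement-free} measure $\Qn$. But by \fct{arankQ} the two are equivalent up to the additive $\log\log$ term, and since in our application $\max_i \Q(G_i)$ will be at least polylogarithmic in the input size, the additive term is absorbed; alternatively one invokes the version of this implication known to hold for $\Q$ directly. I would state this as $\log \arank(G_i') = O(\Q(G_i))$, noting the additive term is dominated.

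For step (2): I would use the standard fact that for Boolean matrices $M_1,\dots,M_k$ on the same index sets, $\arank\bigl(\bigvee_j M_j\bigr) \le \poly(k) \cdot \prod_j \arank(M_j)$ — more precisely, one amplifies each $M_j$'s approximating matrix to error $\ll 1/k$ (which squares the rank a constant number of times, via the tensoring/error-reduction trick for approximate rank from \cite{She12}), then takes a suitable polynomial combination (e.g.\ $1 - \prod_j (1-\tilde M_j)$ computed approximately, whose rank is at most $\prod_j (1+\rank \tilde M_j)$ expanded out, and rounding gives a $1/3$-approximant to the OR). Taking logs, $\log\arank(F_\G) \le \sum_{i=0}^{2^c-1}\bigl(O(\Q(G_i)) + O(\log k)\bigr) = O\bigl(2^c\cdot \max_i \Q(G_i)\bigr)$. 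That is worse than claimed by a factor $2^c$ rather than $c$, so I have overcounted — the right move is not to OR all $2^c$ cells blindly but to observe that the quantum protocol only needs to look at \emph{one} cell (the address $\ell$, found by computing $F^c$), so the correct combination is: a protocol that with $c\cdot\Q(F)$-ish communication would compute $\ell$ — but that is not available cheaply. \textbf{The main obstacle}, and the step I expect to require the real idea, is exactly this: getting $O(c)$ rather than $O(2^c)$ in front of $\max_i \Q(G_i)$. The resolution (as hinted by the paper's ``distributed Grover'' remark and as in \cite{ABK16}) is a quantum-search / unambiguous-certification argument: there is an $O(c\cdot\max_i\Q(G_i))$-qubit protocol for $F_\G$ that searches over the $2^c$ cells using amplitude amplification, invoking each $\Q(G_i)$-subroutine coherently, with only a $\sqrt{2^c}\cdot \poly(c)$ overhead — and crucially, because the lookup function is \emph{unambiguous}, this search has a clean ``at most one marked item / one-sided'' structure, so its output matrix has approximate rank $2^{O(c\cdot\max_i\Q(G_i))}$. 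I would therefore carry out step (2) by first proving the protocol statement ``$\Q(F_\G) = O(c\cdot\max_i\Q(G_i))$ for unambiguous $F_\G$'' via distributed Grover over the cells (each query to cell $i$ being a coherent run of the optimal protocol for $G_i'$), and then converting that protocol to an approximate low-rank decomposition — here I must again be careful that the conversion from \emph{protocol} to \emph{low-rank matrix} (Kremer/Yao/BdW) tolerates the error amplification needed inside Grover, but this is routine. The output is $\log\arank(F_\G) = O(c\cdot\max_i \Q(G_i))$, as claimed.
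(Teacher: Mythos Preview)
Your proposal has a genuine gap. The Grover approach you fall back on does not achieve the claimed bound: searching over $2^c$ cells via amplitude amplification costs $\Theta(\sqrt{2^c})$ invocations of the $G_i$-subroutine, so the resulting protocol has communication $\tO(2^{c/2}\cdot\max_i \Q(G_i))$, not $O(c\cdot\max_i\Q(G_i))$. You yourself write ``$\sqrt{2^c}\cdot\poly(c)$ overhead'' and then assert the total is $O(c\cdot\max_i\Q(G_i))$, which is a contradiction. No protocol-based argument of this kind will work, because the players genuinely do not know $\ell$ and cannot find it cheaply.

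The paper's argument is far simpler and avoids protocols for $F_\G$ altogether. You already observed the key fact: by unambiguity, at most one of the $G_i'(\x,\fu,\y,\fv)$ equals $1$ on any input. But you then treat the combination as a generic OR and reach for multiplicative bounds. The point you are missing is that a \emph{disjoint} OR is literally a sum of $\{0,1\}$-matrices: the communication matrix of $F_\G$ equals $\sum_{i=0}^{2^c-1} G_i'$ exactly. Rank is subadditive under addition, so if $\tilde G_i$ approximates $G_i'$ to $\ell_\infty$ error $\eps$, then $\sum_i \tilde G_i$ approximates $F_\G$ to error $2^c\eps$, giving
\[
\arank(F_\G) \le \sum_{i=0}^{2^c-1} \rank_{\eps}(G_i') \le 2^c \cdot \max_i \rank_{\eps}(G_i)
\]
for $\eps = 2^{-c}/3$. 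Taking logs costs only an additive $c$. The factor of $c$ in the final bound then comes from \emph{error reduction} for each $G_i$ individually: $\log\rank_\eps(G_i) \le \Q_\eps(G_i) = O(\log(1/\eps)\cdot\Q(G_i)) = O(c\cdot\Q(G_i))$. No search, no OR polynomial, no protocol for $F_\G$.
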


\begin{proof}
We start by observing that the unambiguity condition implies that for any input $(\x, \fu, \y, \fv)$, at most one of the functions $G_i(\x, u_i, \y, v_i)$ equals $1$. Indeed, only $G_\ell(\x, u_\ell, \y, v_\ell)$ can potentially evaluate to $1$, where $\ell = F^c(\x,\y)$.

In other words, when $F_\G(\x, \fu, \y, \fv)=1$ we must have
$G_\ell(\x,u_\ell,\y,v_\ell)=1$ for $\ell=F^c(x,y)$ and
$G_i(\x,u_i,\y,v_i)=0$ for all $i\neq \ell$. 
On the other hand, when $F_\G(\x, \fu, \y, \fv)=0$ we must have $G_i(\x,u_i,\y,v_i)=0$ for all $i\in\{0,\ldots,2^c-1\}$.

This means the communication matrix of $F_\G$ equals the sum of the communication matrices of $G_i$ over all $i$. More precisely, we extend the definition of $G_i$ to have it take all of $(\x, \fu, \y, \fv)$ as input in the natural way (i.e., it ignores all the other cells of the array except $u_i$ and $v_i$).
This observation directly yields 
$$\rank(F_\G) \leq \sum_{i=0}^{2^c-1} \rank(G_i).$$

The same inequality does not immediately hold for approximate rank, because the errors in the approximation can add up. So even though $A = \sum_i B_i$, if $\tilde{B}_i$ satisfies $\ell_\infty(\tilde{B}_i-B_i) \leq 1/3$, it is not necessarily the case that $\ell_\infty(A - \sum_i \tilde{B}_i) \leq 1/3$. However, if  each $\tilde{B}_i$ is an excellent approximation to $B_i$, then their sum will still be a good approximation to $A$. 
More precisely, it is still the case that 
$$\arank(F_\G) \leq \sum_{i=0}^{2^c-1} \rank_{\eps}(G_i),$$
where $\eps \leq  2^{-c}/3$, since the definition of approximate rank allows error at most $1/3$. This yields
$$\rank(F_\G) \leq 2^c \max_i \rank_{\eps}(G_i) \implies \log \arank(F_\G) \leq c + \max_i \log \rank_{\eps}(G_i).$$
Since log of approximate rank lower bounds quantum communication
complexity, we have that $\log \rank_{\eps}(G_i) \leq \mathrm{Q}^{*}_\eps(G_i)$. By using standard error reduction, we have that $\mathrm{Q}^{*}_\eps(G_i)$ for $\eps=2^{-c}/3$ is at most $O(c \, \Q(G_i))$.
Hence $\log \arank(F_\G)=O(c\cdot \max_i\Q(G_i)).$
\end{proof}

To prove \thm{upper}, we need a tool for taking a function $F$
and finding a collection $\mathcal{G}$ such that $F_{\mathcal{G}}$
is an unambiguous lookup function, and $\Q(G_i)$ is small
for all $G_i\in\mathcal{G}$. The following lemma
provides such a tool.

\begin{lemma}\label{lem:show_your_work}
	Let $F:\B^n\times\B^n\to\B$ be a total function with circuit size $\Size(F)$ (i.e., $F$ can be computed by a Boolean circuit with $\Size(F)$ gates of constant fanin).

	Then for all $c>0$, there exists a nontrivial family of XOR functions $\mathcal{G}=\{G_0,G_1,\ldots,G_{2^c-1}\}$, 
	such that $F_{\mathcal{G}}$ is an unambiguous lookup function and for all $i\in \{0,\ldots,2^c-1\}$, $$\Q(G_i)=\tO(\sqrt{c\,\Size(F)}).$$
\end{lemma}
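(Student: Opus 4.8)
The plan is to build the family $\mathcal{G}$ so that the ``correct'' cell $u_\ell\oplus v_\ell$ is intended to hold a transcript of the circuit $\mathcal{C}$ computing $F$ on each of the $c$ coordinates, and to have $G_\ell$ check this transcript for internal consistency using a distributed Grover search. Fix an optimal circuit $\mathcal{C}$ of size $s=\Size(F)$ over fanin-$2$ gates. For an input pair $(x_i,y_i)$, a \emph{candidate transcript} is an assignment of a bit to every wire of $\mathcal{C}$; it is \emph{valid} if every gate's output bit is the correct function of its input bits, the input wires carry the bits of $(x_i,y_i)$ in the fixed encoding, and the output wire carries a bit $b_i$. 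A full candidate transcript $T$ for the lookup cell $\ell$ consists of $c$ such per-coordinate transcripts, one for each $i\in[c]$, with the additional requirement that the vector $(b_1,\dots,b_c)$ of claimed output bits equals $\ell$. Now set $G_\ell(\mathbf{x},u,\mathbf{y},v)=1$ iff the string $u\oplus v$ (of length $m$, chosen to be $c\cdot(\text{number of wires})+O(1)$ bits, padded appropriately) encodes a full candidate transcript $T$ for $\ell$ that is valid for the actual inputs $(x_1,y_1),\dots,(x_c,y_c)$. This is manifestly an XOR function, and it is unambiguous: a valid transcript forces $(b_1,\dots,b_c)=(F(x_1,y_1),\dots,F(x_c,y_c))$, so $G_\ell$ evaluating to $1$ certifies $F^c(\mathbf{x},\mathbf{y})=\ell$.

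For nontriviality, note that there always exists a valid transcript (Alice and Bob could in principle write down the honest wire values for $(x_i,y_i)$), giving an input with $G_\ell=1$; and flipping a single wire bit of the transcript breaks validity, giving $G_\ell=0$ — so both values are attained as $u\oplus v$ ranges over $\B^m$. One subtlety: the XOR condition also requires that for cells $i\ne\ell$ the value of $G_i$ not accidentally misbehave relative to the lookup semantics, but \lem{logrank_cheatsheet} only needs unambiguity, which we have. (For the nontriviality clause of \defn{xor} we only need the statement at the index $\ell=F^c(\mathbf{x},\mathbf{y})$, which we just verified.)

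The heart of the argument is bounding $\Q(G_i)$. Given $(\mathbf{x},u)$ to Alice and $(\mathbf{y},v)$ to Bob, they must decide whether $u\oplus v$ is a valid full transcript for index $i$. Checking the index-consistency condition $(b_1,\dots,b_c)=i$ and the input-consistency conditions involves comparing $O(c\log s)$ bits that each party can compute locally from its own input, plus comparing, for each input wire, one bit of the transcript (held jointly as an XOR) against a bit of $x_i$ or $y_i$ (held by one party) — Alice can XOR in her share so that the whole check reduces to Bob testing whether a derived string of length $N=\Theta(cs)$ is all zeros. That is an $N$-bit equality/OR-type problem, solvable with $\tO(\sqrt{N})=\tO(\sqrt{cs})$ qubits of communication by the distributed Grover algorithm of Buhrman--Cleve--Wigderson \cite{BCW98,Gro96}: the players run Grover search over the $N$ coordinates for a ``bad'' gate or a ``bad'' input/index bit, each Grover iteration costing $O(\log N)$ qubits to evaluate the predicate (which for a fixed coordinate is a constant-size Boolean check on $O(1)$ bits that are each the XOR of an Alice bit and a Bob bit). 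Since bounded-error quantum query complexity of OR on $N$ bits is $O(\sqrt{N})$, this gives $\Q(G_i)=\tO(\sqrt{cs})=\tO(\sqrt{c\,\Size(F)})$, as claimed.

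The main obstacle is the bookkeeping needed to present the consistency check as a \emph{single} distributed OR-search with $O(1)$-local predicates: one must lay out the transcript encoding (how wires are numbered, how the $c$ blocks and the claimed output vector are packed into $m$ bits, how padding is handled so nontriviality still holds) carefully enough that every ``clause'' — a gate-consistency constraint, an input-wire constraint, an output/index constraint — touches only $O(1)$ bits of $u\oplus v$ and $O(1)$ bits of $(\mathbf{x},\mathbf{y})$, and so that Alice can locally fold her shares and reduce to Bob searching a string of length $\Theta(cs)$ for a nonzero coordinate. Once that encoding is pinned down, the query upper bound for $\OR$ and the $O(\log N)$-qubit simulation of each query step \cite{BCW98} finish the proof, and \lem{logrank_cheatsheet} then yields \thm{upper} since $\log\arank(F_\G)=O(c\cdot\max_i\Q(G_i))=\tO(c^{3/2}\sqrt{\Size(F)})$.
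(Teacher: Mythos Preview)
Your proposal is correct and follows essentially the same approach as the paper: encode in each cell a full transcript of the fixed size-$s$ circuit on each of the $c$ instances, let $G_\ell$ accept iff $u\oplus v$ is a valid transcript with output $\ell$, and verify validity via a distributed Grover search over the $\Theta(cs)$ local constraints. The paper's proof is terser about the encoding layout and simply describes the verification as three successive Grover searches (input consistency, gate consistency, output match with $i$), but the construction and the $\tO(\sqrt{c\,\Size(F)})$ bound are obtained the same way.
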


\begin{proof}
We need to construct functions $G_i(\x,u,\y,v)$ that lead to an unambiguous lookup function (\defn{unamb}), that are a nontrivial XOR family (\defn{xor}) and have $\Q(G_i)=\tO(\sqrt{c\,\Size(F)})$.

Each $G_i$ will check that $u_i\oplus v_i$ has a very special
type of certificate that proves that $F^c(x,y)=i$.
If it contains such a certificate, $G_i$ will output $1$ and otherwise it will output $0$. 
This takes care of the unambiguity condition.
Since $G_i$ only depends on $u_i\oplus v_i$, it will be an XOR family and 
since it only evaluates to $1$ on a certificate, it will be nontrivial.

We now construct the certificate. Let $\Size(F) = m$, which means that there is a circuit that takes in 
$(x,y)$ as input and outputs $F(x,y)$ using at most $m$ constant fanin gates. 
The cell $u_i\oplus v_i$ will contain $c$ certificates, each certifying that the corresponding input to $F$ evaluates to correct bit of $i$.
For one instance of $F$, the certificate is constructed as follows. 
The certificate has to provide a full evaluation of the circuit of size $m$ on $(x,y)$ by providing
the correct values for the inputs and outputs of all $m$ gates.
The final gate should, of course, evaluate the the claimed output value for $F$. 
The inputs to the first level, which are inputs belonging to either Alice or Bob, should be
consistent with the true inputs that Alice and Bob hold. For a circuit of size $m$, a certificate of this sort has size $\tO(m)$ (with a log factor to account for describing the labels of gates), and hence the entire certificate has size $\tO(cm)$.

If the inputs are consistent with Alice's and Bob's input, and all the gates are evaluated correctly, then the output of the circuit will be $F(x,y)$ and the output string for all $c$ circuits will indeed be $F^c(\x,\y) = \ell$. If this output string is consistent with $i$, then $G_i$ accepts and otherwise rejects.

It is easy to see that $\mathcal{G}$ satisfies the first two properties we wanted.  It remains to upper bound $\Q(G_i)$.
As a warmup, note that the deterministic communication complexity of $G_i$ is at most $\tO(cm)$.
This is because Alice and Bob can simply send all of $u_i$ and $v_i$ to each other, which costs $\tO(cm)$ communication. They can then check that the their inputs are correct, the circuit evaluation is correct, and the circuits evaluate to $i$.

A similar algorithm, using Grover's algorithm to search for a discrepancy, yields the quantum algorithm.
Alice and Bob first check that the $O(cm)$ inputs in the circuits (there are $O(m)$ inputs per $F$, and there are $c$ copies of $F$) are consistent with their part of the input using $\tO(\sqrt{cm})$ communication using Grover's algorithm. They can then Grover search over all $cm$ gates to check if their inputs and outputs are consistent, which again takes  $\tO(\sqrt{cm})$ communication.
The final step is to check that the output bits equal $i$. This takes $\tO(\sqrt{c})$
communication using Grover search. Hence the total quantum communication complexity
of $G_i$ is $\tO(\sqrt{c m})=\tO(\sqrt{c\,\Size(F)})$.
\end{proof}

\lem{logrank_cheatsheet} and \lem{show_your_work} straightforwardly imply \thm{upper}.

\section{Lower bound on quantum communication complexity of lookup functions}
\label{sec:lower}

In this section, we prove our main theorem, which is the following:

\begin{theorem} \label{thm:qcs} Let $F: \X \times \Y \rightarrow \{0,1,*\}$ be a (partial) function, $c \ge 5\log(\Q_{1/3}(F))$ and $r \ge 1$ be an integer. Let $\G = \{G_0,\ldots,G_{2^c-1}\}$ be a nontrivial family of XOR functions where each $G_i:  \left( \X^c \times \{0,1\}^m \right) \times \left( \Y^c \times \{0,1\}^m \right) \rightarrow \{0,1\}$, and let $F_{\G}$ be the $(F,\G)$-lookup function. Let $\delta = \frac{1}{10^{9}cr^2}$. For any $1/3$-error $r$-round protocol $\Pi$ for $F_{\G}$, there exists a $\frac{1}{2} - \frac{\delta}{3}$-error protocol $\Pi'$ for $F$ such that 
$$
\QCC(\Pi') = O(\QCC(\Pi)) .
$$
\end{theorem}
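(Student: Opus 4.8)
The plan is to reduce to distributional error via the minimax principle (\fct{equiv}), embed a single copy of $F$ into the lookup instance, and then lower bound the information that Alice's (or Bob's) final register carries about $F$ on that copy, using \clm{lowinf}, the ``biased input'' dichotomy \clm{biasmu}, the communication-vs-information bound \lem{qic<=2qcc}, and the quantum cut-and-paste lemma \lem{quantum_cut_paste}. First, if $\QCC(\Pi)\ge \Q_{1/3}(F)$ the statement is immediate (output an optimal $1/3$-error protocol for $F$), so assume $\QCC(\Pi)<\Q_{1/3}(F)$; then $2^c\ge \Q_{1/3}(F)^5\gg \QCC(\Pi)$. By \fct{equiv} it is enough, for each distribution $\nu$ on $\dom(F)$, to construct an $r$-round protocol $\Pi'_\nu$ with $\err^\nu(\Pi'_\nu)\le \tfrac12-\tfrac\delta3$ and $\QCC(\Pi'_\nu)\le \QCC(\Pi)+O(1)$. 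Fix $\nu$ and apply \clm{biasmu} (and the same statement with Alice and Bob exchanged) with $c$ copies of $\nu$ and parameter $\delta/3$; if its first alternative holds we already have $\Pi'_\nu$, so assume the second: the $c$-bit address register $\ell$ is within trace distance $O(c\delta)$ of uniform even given Alice's (or Bob's) address input. Now embed: using shared randomness pick $j\in[c]$ and i.i.d.\ $(x_i,y_i)\leftarrow\nu$ for $i\ne j$, place the real inputs $(x,y)\leftarrow\nu$ in coordinate $j$, and fill every array cell with fresh uniform bits (Alice's with her private coins, Bob's with his). Running $\Pi$ on this input is a legal $r$-round protocol of cost $\QCC(\Pi)$; write $\Psi_r$ for its round-$r$ state together with a classical register $F$ recording $b:=F(x_j,y_j)$, and let $X,Y$ be Alice's and Bob's inputs together with the shared coins. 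By \clm{lowinf} we obtain $\Pi'_\nu$ with the desired cost and error once we show $\I(F:A_rC_r~|~X)_{\Psi_r}=\Omega(\delta)$ or $\I(F:B_rC_r~|~Y)_{\Psi_r}=\Omega(\delta)$.

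To prove that, combine two facts. First, the active cells are hidden from $\Pi$: since $\Pi$ sees only $\x$ (not $j$, not the other $(x_i,y_i)$, not $y$), the address $\ell$ (and the other candidate address $\ell'$, obtained by flipping coordinate $j$) is, after conditioning on $\x$, nearly uniform over the $2^c$ cells by the second alternative of \clm{biasmu}; applying \lem{qic<=2qcc} with the whole of Alice's array as the side-input shows Bob's register after any round $k$ carries at most $2\QCC(\Pi)$ bits about the array, hence on average over the (random, from $\Pi$'s viewpoint) $\ell,\ell'$ only $O(\QCC(\Pi)/2^c)=o(1)$ bits about the contents of $u_\ell,u_{\ell'}$, and symmetrically for Alice and $v_\ell,v_{\ell'}$. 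By \fct{SvsB}, resampling either player's active-cell contents perturbs the other player's round-$k$ state by only $o(1)$ in Bures distance. Second, suppose for contradiction that both conditional mutual informations above are below the target; by \clm{I2} and \clm{maxlike} each player's final register is then $o(\delta)$-close to independent of $b$, i.e.\ moving coordinate $j$ of that player's input between a $0$- and a $1$-preimage of $F$ barely changes that player's own final state. But $\G$ is a \emph{nontrivial} XOR family, so each $G_{\ell^b}$ is non-constant in $u_{\ell^b}\oplus v_{\ell^b}$, so there exist inputs to $F_\G$ agreeing except in the active cell on which the true output --- hence Alice's final output register --- differs with constant bias.

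Now chain the two extreme $F_\G$-inputs from the last sentence (one with $b=0$ forcing output $0$, one with $b=1$ forcing output $1$) through $O(1)$ intermediate inputs, each differing from the previous in one ``coordinate'': the coordinate-$j$ inputs, then the two relevant $u$-cells, then the two relevant $v$-cells. For each transition, \lem{quantum_cut_paste} bounds the change it causes in Alice's final state by $h_r+h_{r-1}+2\sum_{k=1}^{r-2}h_k\le 2r\max_k h_k$, and every $h_k$ is one of the round-$k$ Bures quantities bounded above --- $o(1)$ for the cell transitions, $o(\delta)$ for the coordinate-$j$ transitions. Summing the $O(1)$ transitions, the total change in Alice's final state, and hence in the distribution of her output bit, is $O(r)\cdot o(\delta)+O(r)\cdot o(1)$, which, with the constants tracked and $\delta=\tfrac{1}{10^{9}cr^2}$, is strictly less than the constant output bias guaranteed by correctness and nontriviality --- a contradiction. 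Hence one of the two mutual informations is $\Omega(\delta)$, \clm{lowinf} produces $\Pi'_\nu$, and \fct{equiv} assembles the global protocol $\Pi'$. The step I expect to be the real obstacle is exactly this cut-and-paste argument: because we have no control over the bias of $G_\ell$, just having $\Pi$ report its own output is worthless, and \lem{quantum_cut_paste} is what turns ``Alice and Bob cannot even locate, let alone correctly fill, the active cells'' into an actual obstruction for $\Pi$; the delicate bookkeeping --- the weak triangle inequality inside \lem{quantum_cut_paste} applied once per round and once per transition (the source of the $r^2$), and the conversion of the average-case information bounds into the per-instance Bures hypotheses of \lem{quantum_cut_paste} via Pinsker, Jensen and Markov (needing a run that is simultaneously typical for all the conditionings, the source of the $c$) --- is where the work lies.
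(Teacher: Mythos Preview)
Your overall plan --- minimax, the \clm{biasmu} escape, embedding a single copy of $F$, bounding information about the array via \lem{qic<=2qcc}, and then cut-and-paste --- matches the paper's. But there is a real gap in how you connect these pieces. Your claim that ``resampling either player's active-cell contents perturbs the other player's round-$k$ state by only $o(1)$'' follows from \lem{qic<=2qcc} together with \clm{biasmu} only for the state $\psi^{x}$ conditioned on $X=x$: what you actually get is
\[
\mathbb{E}_{x,\ell\leftarrow XL}\ \BR^2\bigl(\psi^{x}_{k,B_k\cdots U_\ell},\ \psi^{x}_{k,B_k\cdots}\otimes\psi_{U_\ell}\bigr)\ \le\ O(\QCC(\Pi)/2^c).
\]
Cut-and-paste, however, runs on a \emph{fixed} $(x,y,\ell)$, so you need the same bound for $\psi^{x,\ell}$. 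Since $\ell$ is a function of $y$, conditioning on $\ell$ can drastically skew Bob's state: if Alice had sent Bob the single correct cell, Bob would know $U_\ell$ perfectly given $(x,\ell)$ yet know almost nothing about a \emph{random} cell given only $x$. The paper closes this gap in \clm{low_info_CS}: the (A3)-type bound --- precisely the information assumption you suppose for contradiction --- says Alice's side has little information about $L$ given $X$, so by Uhlmann there is a unitary on Bob's side that ``forgets $\ell$'' and transports the $x$-conditioned bound to the $(x,\ell)$-conditioned one. You have (A3) in hand but spend it on the wrong thing: the ``coordinate-$j$ transitions'' in your chain are unnecessary (the paper fixes $(x,y,\ell)$ throughout and only varies $(u_\ell,v_\ell)$), and your justification for them --- small $\I(b:\text{state})$ implies two specific preimages give close states --- is a non-sequitur, since mutual information controls only the average over each $b$-class.

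There is a second missing idea. Once $(x,y,\ell)$ is fixed, $G_\ell$ can be hugely biased (say $G_\ell=1$ on only a $2^{-m}$ fraction of $(u_\ell,v_\ell)$), so the average Bures bounds obtained by Markov can miss exactly the inputs you need for the contradiction. The paper handles this with \clm{biased_output} (correctness of $\Pi$ forces $G_\ell$ to concentrate on some value $1-\alpha(x,y)$) and the clever setup of \clm{fix_CS}: the three pairs $(\tilde u,\tilde{\tilde v}),(\tilde{\tilde u},\tilde{\tilde v}),(\tilde{\tilde u},\tilde v)$ are sampled so that each is marginally uniform and $G_\ell$ takes the common value on all of them (so Markov applies), while the fourth pair $(\tilde u,\tilde v)$ with the rare value is built \emph{deterministically} from the XOR structure, $\tilde v=\tilde u\oplus t$. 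Then \lem{quantum_cut_paste} compares Alice's state on $(\tilde u,\tilde v)$ and on $(\tilde u,\tilde{\tilde v})$ --- different $G_\ell$-values, hence a contradiction with correctness --- while all the hypotheses $h_k$ involve only the three ``common'' pairs. Your proposal neither mentions the bias issue nor this construction.
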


Before proving this, we show how it implies the corollary used in \sec{separation}, which we restate.

\corlower*

\begin{proof}
Let $\Pi$ be a protocol for $F_\G$ with $\QCC(\Pi) = \Q(F)$. Then from  \thm{qcs}, we have $\Q_\eps(F) = O(\Q(F_\G))$, where $\eps=\frac{1}{2} - \frac{\delta}{3}$,  $\delta = \frac{1}{10^{9}cr^2}$, and $r\leq \QCC(\Pi) = \Q(F_\G))$ is the number of rounds in $\Pi$. Now from \thm{discQ}, we know that 
$\Q_\eps(F) =\Omega\bigl(\log \frac{1-2\eps}{\disc(F)}\bigr)$. Combining these with the  fact that $cr^2 = O(\Q(F_\G))$ we get
$$\Q(F_\G) =\Omega\(\log \frac{1-2\eps}{\disc(F)}\)= \Omega\(\log\(\frac{1}{\disc(F)}\) - \log(cr^2)\)=\Omega\(\log\(\frac{1}{\disc(F)}\) - \log \Q(F_\G)\),$$
which implies the statement to be proved.
\end{proof}

	\begin{figure}[t]
		\centering
		\begin{tikzpicture}[scale=1.025]
		\pgfmathsetmacro{\left}{0};
		\pgfmathsetmacro{\right}{13};
		\pgfmathsetmacro{\up}{8.4};
		\pgfmathsetmacro{\down}{0.0};
		\pgfmathsetmacro{\middle}{(\left+\right)/2)};
		\draw (\left,\down) -- (\left,\up+0.5) -- (\right,\up+0.5) -- (\right,\down) -- (\left,\down);
		\node (mu) at (\left+0.3,\up+0.2) {$\forall\mu$};
		\node[draw] (conclusion) at (\middle-4.2,\down-1.7) {$\exists\;\Pi'\textrm{ for }F\textrm{ with}\;\;\err(\Pi')\leq\frac{1}{2}-\frac{\delta}{3},	 \quad\QCC(\Pi')=O(\QCC(\Pi))$};
		\node[draw] (minimax) at (\middle, \down-0.6) {\fullbref{fact:equiv}};
		\node[draw,align=center] (ass1) at (\left-1.7,\up-3) {\eq{assumption1}\\ $\QCC(\Pi)\leq \delta 2^c\;?$};
		\node[draw,align=center] (ass2) at (\middle-4,\up-0.8) {\eq{assumption2}\\$\VR(XL,X\otimes W)\leq\frac{c\delta}{3}\; ?$};
		\node[draw,align=center] (ass3) at (\middle+3,\up-0.8) {\eq{assumption3} $\forall i \;\; \forall k \; \;$\\$\I(A_k U \tU X_{-i} Y_{-i}; L_i | X_i)_{\psi^k} \le \delta ,$ \\ $\I(B_k V \tV X_{-i} Y_{-i}; L_i | Y_i)_{\psi^k} \le \delta\; ?$};
		\node[draw] (X) at (\middle-1.5,\up-3) {\clm{low_info_CS}};
		\node[draw] (pi_mu) at (\middle,\down+0.8) {$\exists\;\Pi_\mu\;\;
			\err^\mu(\Pi_\mu)\leq\frac{1}{2}-\frac{\delta}{3},\;\;\QCC(\Pi_\mu)\leq\QCC(\Pi)+1$};
		\node[draw] (Z) at (\middle, \up-4.25) {\clm{fix_CS}};
		\node[draw] (Z2) at (\middle,\up-5.250) {\clm{quantum_cut_paste}};
		\node[draw] (contradiction) at (\middle, \up-6.5) {contradiction};
		\node[draw] (ass2no) at (\middle-4,\up-4.75) {\clm{biasmu}};
		\node[draw] (ass3no) at (\middle+3,\up-4.75) {\clm{lowinf}};
		\draw[-latex] (ass2) -- (X) node[midway,fill=white] {Yes};
		\draw[-latex] (ass3) -- (Z) node[midway,fill=white] {Yes};
		\draw[-latex] (ass2) -- (ass2no) node[near start,fill=white] {No};
		\draw[-latex] (ass3) -- (ass3no) node[near start,fill=white] {No};
		\draw[-latex] (ass2no.south) -- (ass2.south |- pi_mu.north);
		\draw[-latex] (ass3no.south) -- (ass3.south |- pi_mu.north);
		\draw[-latex] (ass1) -- (X) node[near start,fill=white] {Yes};
		\draw[-latex] (ass1.south) -- (ass1.south |- conclusion.north)
		node[near start,fill=white] {No};
		\draw[-latex] (X) -- (Z) node[near end,above] {\eq{X}};
		\draw[-latex] (Z) -- (Z2)  ;
		\draw[-latex] (Z2) -- (contradiction) node[midway,right] {\eq{Z}};
		\draw[-latex] (pi_mu.south) -- (pi_mu.south |- minimax.north);
		\draw[-latex] (pi_mu) -- (minimax);
		\draw[-latex] (minimax.south) -- (minimax.south |- conclusion.north);
		\end{tikzpicture}
		\caption{The structure of the proof of \thm{qcs}.
			Note that \clm{low_info_CS} and \clm{fix_CS}  only follow if both of their incoming
			arcs hold.}
		\label{fig:proof_figure}
	\end{figure}

\begin{proof}[Proof of \protect{\thm{qcs}}]
We explain here the overall structure of the argument which is also displayed visually in \fig{proof_figure}.

\para{Rule out trivial protocols.} We first rule out the easy case where the protocol we are given, $\Pi$, has high quantum communication cost.  More precisely, we check if the following condition holds.
\begin{equation*}
\label{eq:assumption1}
\QCC(\Pi) < \delta 2^c .
\tag{A1}
\end{equation*}
If this does not hold then $\QCC(\Pi) \geq \delta 2^c = \Omega(\Q(F))$. By choosing the protocol whose communication complexity is $\Q(F)$, we obtain a protocol $\Pi'$ for $F$ with $ \QCC(\Pi') = \Q(F) = O(\QCC(\Pi))$ and we are done. Hence for the rest of the proof we may assume \eq{assumption1}.

\para{Protocols correct on a distribution.}
Instead of directly constructing a protocol $\Pi'$ for $F$ that is correct on all inputs with bounded error, we instead construct for every distribution $\mu$ on $\dom(F)$, a protocol $\Pi_\mu$ that does well on $\mu$ and then use \fct{equiv} to construct our final protocol. More precisely, for every $\mu$ over $\dom(F)$  we construct a protocol $\Pi_\mu$ for $F$ that has the following properties:
\begin{align}
\QCC(\Pi_\mu)  = \QCC(\Pi) + 1 \qquad \textrm{and} \qquad \err^\mu(\Pi_\mu) < 1/2-\delta/3. \label{eq:niceprot}
\end{align}

Hence for the remainder of the proof let $\mu$ be any distribution over $\dom(F)$ and our aim is to construct a protocol satisfying \eq{niceprot}.
\para{Construct a distribution for $\mathbf{F_{\G}}$.} Using the distribution $\mu$ on $\text{dom}(F)$, we now construct a distribution over the inputs to $F_{\G}$. Let the random variable $T$ be defined as follows:
$$
T := (X_1,\ldots,X_c, U_0,\ldots,U_{2^c-1}, Y_1,\ldots,Y_c, V_0,\ldots,V_{2^c-1}), 
$$
where for all $i \in [c]$, $X_i Y_i$ is distributed according to $\mu$ and independent of all other random variables and for $j \in \{0,\ldots,2^c-1\}$, $U_j V_j$ are uniformly distributed in $\{0,1\}^{2m}$ and independent of all other variables. 
For $i \in [c]$, we define $L_i := F(X_i, Y_i)$. We also define $X := (X_1,\ldots,X_c)$, $Y := (Y_1,\ldots,Y_c)$, $L := (L_1,\ldots,L_c)$, $U := (U_0,\ldots,U_{2^c-1})$ and $V := (V_1,\ldots,V_{2^c-1})$. Lastly, for $i \in [c]$, we define $X_{-i} := X_1,\ldots,X_{i-1},X_{i+1}, \ldots,X_c$ and $X_{<i} :=X_1,\ldots,X_{i-1}$. Similar definitions hold for $L$ and $Y$. 
Let $A_k, B_k$ be the registers of Alice and Bob after round $k$ of protocol $\Pi$. The total pure state after round $k$ can be written as follows:
$$
\ket{\psi_k}_{X \widetilde{X} U \widetilde{U} Y \widetilde{Y} V \widetilde{V} A_k B_k} = \sum_{x,u,y,v} \sqrt{\mu_T(x,u,y,v)} \ket{x x}_{X \widetilde{X}} \ket{u u}_{U \widetilde{U}} \ket{y y}_{Y \widetilde{Y}} \ket{v v}_{V \widetilde{V}} \ket{\psi^{x,u,y,v}_k}_{A_k B_k}
$$
Here $\mu_T$ is the distribution of the random variable $T$. $\widetilde{X}, \widetilde{U}, \widetilde{Y}, \widetilde{V}$ are registers that purify the classical inputs $X,U,Y,V$ respectively. 

\para{Rule out easy distributions $\mathbf{\mu}$.} We now show that if $\mu$ is such that the output of $F(X,Y)$ is predictable simply by looking at Alice's input $X$, then this distribution is easy and we can construct a protocol $\Pi_\mu$ that does well on this distribution since Alice can simply guess the value of $F(X,Y)$ after seeing $X$. More precisely, we check if the following condition holds.
\begin{equation*}
\label{eq:assumption2}
\VR(XL, X \otimes W)  \leq c \delta/3,
\tag{A2}
\end{equation*}
where $W$ is the uniform distribution on $\{0,1\}^c$. 

If the condition does not hold, we invoke \clm{biasmu} with $\epsilon =\delta/3$. Then we must be in case~(a) of this claim and hence we get the desired protocol $\Pi_\mu$.
Therefore we can assume \eq{assumption2} holds.

\para{Construct new protocols $\Pi_i$.} We now define a collection of protocols $\Pi_i$ for each $i \in [c]$. $\Pi_i$ is a protocol in which Alice and Bob receive inputs from $\text{dom}(F)$. We construct $\Pi_i$ as follows: Given the input pair $(X_i, Y_i)$ distributed according to $\mu$, Alice and Bob use shared entanglement $X_{-i} \widetilde{X_{-i}} Y_{-i} \widetilde{Y_{-i}}$ (Alice holds $X_{-i} \widetilde{X_{-i}}$ and Bob holds $Y_{-i} \widetilde{Y_{-i}}$), where $X_{-i} Y_{-i}$ are distributed according to $\mu^{\otimes c-1}$ and $\widetilde{X_{-i}}\widetilde{Y_{-i}}$ purify $X_{-i} Y_{-i}$ in a canonical way. They also use shared entanglement  $U \widetilde{U} V \widetilde{V}$ (Alice holds $U \widetilde{U}$ and Bob holds $ V \widetilde{V}$), where  $U$ and $V$ are uniformly distributed and $\widetilde{U}\widetilde{V}$ purify $UV$ in a canonical way. Note that Alice and Bob now have inputs $XU$ and $YV$ distributed according to $T$. They then run protocol $\Pi$. It is clear that for all $i \in [c]$, $\QCC(\Pi_i) = \QCC(\Pi)$.

\para{Rule out informative protocols $\Pi_i$.} If any of the protocols $\Pi_i$ that we constructed has a lot of information about $L_i$, then we can use~\clm{lowinf} to design a protocol for $F$. Hence, we can assume that for each $1 \le k \le r$, 
\begin{equation*}
\label{eq:assumption3}
\I(A_k U \tU X_{-i} Y_{-i}; L_i | X_i)_{\psi_k}, \I(B_k V \tV X_{-i} Y_{-i}; L_i | Y_i)_{\psi_k} \le \delta .
\tag{A3}
\end{equation*}

\para{Obtain a contradiction.} 
We have already established that \eq{assumption1}, \eq{assumption2}, and \eq{assumption3} must hold, otherwise we have obtained our protocol $\Pi_\mu$. We will now show that if  \eq{assumption1}, \eq{assumption2}, and \eq{assumption3}  simultaneously hold, then we obtain a contradiction. To show this, we use some claims that are proved after this theorem.

First we apply \clm{subadditivity} to get the following from \eq{assumption1} and \eq{assumption2}.
\begin{equation}
\forall k \in \{1,\ldots, r\}:  \quad \mathbb{E}_{x,l \leftarrow XL} \BR^2 \left(\psi^{x}_{k,B_k Y \widetilde{Y} V \widetilde{V} U_l}, \psi^{x}_{k,B_k Y \widetilde{Y} V \widetilde{V}} \otimes \psi_{ U_l} \right) \le \frac{q}{2^c} + \frac{c\delta}{3} . \label{eq:X}
\end{equation}
Here $q = \QCC(\Pi)/2$. Intuitively this claim asserts that for a typical $x$ and $\ell$, Bob (conditioned on $X=x$) has very little information about the cell $U_\ell$ at the end of round $k$, which is quantified by saying their joint state is close to being a product state. This would be false without assuming \eq{assumption1} because if there was no upper bound on the communication in $\Pi$, then Alice could simply communicate all of $U$, in which case Bob would have a lot of information about any $U_j$. We need \eq{assumption2} as well, since otherwise it is possible that the correct answer $\ell$ is easily predicted by Alice by looking at her input alone, in which case she can send over the contents of cell $U_\ell$ to Bob. A symmetric statement also follows with Alice and Bob interchanged. 

We then apply \clm{low_info_CS} to get the following from \eq{assumption3}.
\begin{equation}
\forall k \in \{1,\ldots, r\}: \quad \mathbb{E}_{x,l \leftarrow XL} \BR^2 \left(\psi^{x,l}_{k,B_k Y \widetilde{Y} V \widetilde{V} U_l}, \psi^{x,l}_{k,B_k Y \widetilde{Y} V \widetilde{V}} \otimes \psi_{U_l} \right) \le 3 \cdot \left( \frac{q}{2^c} + \frac{c\delta}{3} + 2 c \delta \right) . \label{eq:Y}
\end{equation}
Intuitively, this claim asserts that for a typical $x$ and $\ell$, Bob (conditioned on $X=x$ and $L=\ell$) has very little information about the cell $U_\ell$ at the end of round $k$, which is quantified by saying their joint state is close to being a product state. A symmetric statement also follows for Alice. 
Equation \ref{eq:Y} implies the following relation, which is proved in \clm{biased_output}: $\Pr_{x,y,l,u_l,v_l \leftarrow X,Y,L,U_L,V_L}[G_l(x,y,u_l,v_l) = \alpha(x,y)] \le 1/100$, where $\alpha(x,y)$ is either $0$ or $1$. We then proceed to apply Claim \ref{clm:fix_CS}.

We then apply \clm{quantum_cut_paste}, which uses \eq{X} and \eq{Y} and \clm{fix_CS}, to obtain the following. There exists, $x,y,l,\widetilde{u}_l, \widetilde{v}_l, \widetilde{\widetilde{u}}_l, \widetilde{\widetilde{v}}_l$ such that,
\begin{align} 
 \Delta \left((\psi_{r,A_r U_{-l} \tU_{-l}}^{x,y,l,\widetilde{u}_l, \widetilde{v}_l}, \psi_{r,A_r U_{-l} \tU_{-l}}^{x,y,l,\widetilde{u}_l,\widetilde{\widetilde{v}}_l}\right) &\le 1000r \cdot \sqrt{\left( \frac{q}{2^c} + \frac{c\delta}{3} + 2 c \delta \right)} < 0.1 , \nonumber \\
 G_l(x,y, \widetilde{u}_l, \widetilde{v}_l) = 1 &\text{ and } G_l(x,y, \widetilde{u}_l,  \widetilde{\widetilde{v}}_l) = 0. \label{eq:Z}
\end{align}
 We assume (w.l.o.g) that Alice gives the answer in round $r$. From above 
$$ |\Pr(\mbox{Alice outputs $1$ on } (x,y, \widetilde{u}_l, \widetilde{v}_l)) -  \Pr(\mbox{Alice outputs $1$ on } (x,y, \widetilde{u}_l, \widetilde{\widetilde{v}}_l)) | < 0.1 .$$
This is a contradiction since $G_l(x,y, \widetilde{u}_l, \widetilde{v}_l) = 1 \text{ and } G_l(x,y, \widetilde{u}_l,  \widetilde{\widetilde{v}}_l) = 0$ and the error of $\Pi$ on any input is at most $1/3$.

\para{Minimax argument.}
Note that in all branches where we did not reach a contradiction, we constructed a protocol satisfying \eq{niceprot}. Hence we constructed, for any $\mu$ over $\dom(F)$, a protocol $\Pi_\mu$ that satisfies \eq{niceprot}.  We now use~\fct{equiv} to complete the proof. 
\end{proof}

This completes the proof of the theorem, except for the claims \clm{subadditivity}, \clm{low_info_CS},  \clm{biased_output}, \clm{fix_CS}, and \clm{quantum_cut_paste} that we did not prove. We now prove these claims.

\subsection{Proof of claims}

\begin{claim}\label{clm:subadditivity}
Suppose $\QCC(\Pi) = 2q $ and $\Delta(XL, X \otimes W) \le \delta_1 $. Then 
$$
\mathbb{E}_{x,l \leftarrow XL} \BR^2 \left(\psi_{k,B_k Y \widetilde{Y} V \widetilde{V} U_l}^{x}, \psi^{x}_{k,B_k Y \widetilde{Y} V \widetilde{V}} \otimes \psi_{ U_l} \right) \le \frac{q}{2^c} + \delta_1.
$$
for all $1 \le k \le r$. Here $\psi_{U_l}$ is the maximally mixed state on the register $U_l$ (in other words a random variable which is uniformly distributed.)
\end{claim}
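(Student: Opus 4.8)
The plan is to bound the expected squared Bures distance by first controlling the analogous quantity for the \emph{uniform} distribution on the index (i.e., picking $\ell$ uniformly at random from $\{0,\ldots,2^c-1\}$ rather than from $L$), and then transferring to the actual distribution of $L$ using the hypothesis $\Delta(XL, X\otimes W) \le \delta_1$. For the uniform-index part, the key observation is that the state $\psi_{k}$ depends on the registers $U = (U_0,\ldots,U_{2^c-1})$ only through whatever information about them has crossed the channel in the $k$ rounds so far. Conditioned on $X = x$, Bob's side together with the message register holds at most $q_k \le q$ qubits' worth of information about $U$, since the total communication from Alice's side is $\QCC(\Pi)/2 = q$ (this is exactly the content of \lem{qic<=2qcc}, applied with $U$ playing the role of Alice's extra input independent of $(X,Y)$, and $\psi_{U_l}$ being the maximally mixed ``correct'' product state). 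Thus $\I(U : B_k Y \widetilde Y V \widetilde V \,|\, X)_{\psi_k} \le 2q$.

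The next step is to convert this mutual-information bound into a statement about the average cell. Since $U = U_0 \cdots U_{2^c-1}$ is a product of $2^c$ independent uniform blocks (independent also of everything else at the start), I would use the superadditivity of conditional mutual information over independent registers — \fct{infoind} applied inductively — to get
\[
\sum_{j=0}^{2^c-1} \I(U_j : B_k Y \widetilde Y V \widetilde V \,|\, X)_{\psi_k} \le \I(U : B_k Y \widetilde Y V \widetilde V \,|\, X)_{\psi_k} \le 2q,
\]
so that averaging over a uniformly random index $j$ gives $\expec_{j \leftarrow W} \I(U_j : B_k Y \widetilde Y V \widetilde V \,|\, X)_{\psi_k} \le 2q / 2^c = q \cdot 2^{1-c}$. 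Now apply \fct{IvsB} (the $\I \ge \BR^2$ direction for classical–quantum states, with the classical register being $X$ together with the uniformly random $j$): since conditioning on $X=x$ and index $j$ gives exactly the pair of states $\psi^{x}_{k, B_k Y \widetilde Y V \widetilde V U_j}$ and $\psi^{x}_{k, B_k Y \widetilde Y V \widetilde V} \otimes \psi_{U_j}$ whose distance we want, we obtain
\[
\expec_{j \leftarrow W}\, \expec_{x \leftarrow X}\, \BR^2\!\left(\psi^{x}_{k, B_k Y \widetilde Y V \widetilde V U_j},\ \psi^{x}_{k, B_k Y \widetilde Y V \widetilde V} \otimes \psi_{U_j}\right) \le \frac{q}{2^c}.
\]
(I should double-check the constant: \fct{IvsB} reads $\I(A:B|X) \ge \expec_x \BR^2(\rho^x_{AB}, \rho^x_A\otimes\rho^x_B)$, so this is clean; also one uses $\psi^x_{U_j} = \psi_{U_j}$ since $U_j$ is independent of $X$.)

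Finally, I would pass from the uniform index to the true index $L$. The quantity $g(x,\ell) \defeq \BR^2(\psi^{x}_{k,B_k Y \widetilde Y V \widetilde V U_\ell}, \psi^{x}_{k,B_k Y \widetilde Y V \widetilde V} \otimes \psi_{U_\ell})$ lies in $[0,1]$, and we have just shown $\expec_{(x,\ell)\leftarrow X\otimes W}[g(x,\ell)] \le q/2^c$. Since $\Delta(XL, X\otimes W) \le \delta_1$ by hypothesis, and trace distance bounds the difference in expectation of any $[0,1]$-valued function of the classical pair $(x,\ell)$, we get
\[
\expec_{(x,\ell)\leftarrow XL}[g(x,\ell)] \le \expec_{(x,\ell)\leftarrow X\otimes W}[g(x,\ell)] + \Delta(XL, X\otimes W) \le \frac{q}{2^c} + \delta_1,
\]
which is exactly the claim. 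The main obstacle I anticipate is the bookkeeping in the second step: justifying the superadditivity over the $2^c$ blocks $U_j$ cleanly (the registers $U_j$ are mutually independent and jointly independent of the rest at time $0$, and the protocol is a fixed sequence of unitaries, so the chain-rule / \fct{infoind} argument goes through, but one must be careful that the independence is used at the right point), and making sure the factor-of-two from \lem{qic<=2qcc} lines up with $\QCC(\Pi) = 2q$ to produce the stated bound $q/2^c$ rather than $2q/2^c$.
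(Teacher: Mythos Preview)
Your proposal is correct and follows essentially the same route as the paper: bound $\I(U:B_kY\tY V\tV\,|\,X)$ via \lem{qic<=2qcc}, split over the $2^c$ independent cells using \fct{infoind}, pass to $\BR^2$ via \fct{IvsB}, and then swap the uniform index $W$ for the true $L$ using the trace-distance hypothesis and the fact that $\BR^2\le 1$. The factor-of-two bookkeeping you flag is indeed the only delicate point, and the paper's own proof handles it somewhat loosely (writing $q\ge\I(\cdots)$ directly from \lem{qic<=2qcc}); either way the constant is immaterial for the eventual application.
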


\begin{proof} We have
\begin{align*}
q &\ge \I(B_k Y \tY  V \tV ; U_0,\ldots,U_{2^c-1}|X)_{\psi_k} & ~\text{(\lem{qic<=2qcc})}\\
&\ge \sum_{l=0}^{2^c-1} \I(B_k Y \tY  V \tV ; U_l | X)_{\psi_k} & ~\text{(\fct{infoind})}\\
&=2^c \cdot \mathbb{E}_{x, l \leftarrow X \otimes W} \I(B_k Y \tY  V \tV ; U_l | X = x)_{\psi_k} \\
&\ge 2^c \cdot \mathbb{E}_{x, l \leftarrow X \otimes W} \BR^2 \left(\psi^{x}_{k,B_k Y \widetilde{Y} V \widetilde{V} U_l}, \psi^{x}_{k,B_k Y \widetilde{Y} V \widetilde{V}} \otimes \psi_{U_l} \right)  & (\text{\fct{IvsB}}) .
\end{align*}
This implies that 
$$
\mathbb{E}_{x,l \leftarrow X \otimes W} \BR^2 \left(\psi^{x}_{k,B_k Y \widetilde{Y} V \widetilde{V} U_l}, \psi^{x}_{k,B_k Y \widetilde{Y} V \widetilde{V}} \otimes \psi_{ U_l} \right) \le \frac{q}{2^c} .
$$
Since $\Delta(XL, X \otimes W) \le \delta_1$ and $\BR^2(\rho, \sigma) \le 1$ always, this proves the claim as well. 
\end{proof}

The next claim intuitively says that, if the communication cost of $\Pi$ is small, then at any point during the protocol, Bob's register has small information about the correct cheat sheet cell. 

\begin{claim}\label{clm:low_info_CS}
Assume in addition to the assumptions of \clm{subadditivity}, the following condition holds: for all $i \in [c]$, let 
$$
\I(A_k U \tU X_{-i} Y_{-i}; L_i | X_i)_{\psi_k} \le \delta .
$$
Then 
$$
\mathbb{E}_{x,l \leftarrow XL} \BR^2 \left(\psi_{k,B_k Y \widetilde{Y} V \widetilde{V} U_l}^{x,l}, \psi_{k,B_k Y \widetilde{Y} V \widetilde{V}}^{x,l} \otimes \psi_{U_l} \right) \le 3 \cdot \left( \frac{q}{2^c} + \delta_1 + 2 c \delta \right)
$$
for all $1 \le k \le r$. 
\end{claim}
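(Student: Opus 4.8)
The plan is to deduce this from \clm{subadditivity} by upgrading its conclusion --- which controls Bob's correlation with the cell $U_l$ after conditioning only on Alice's input $X=x$ --- to the same statement after \emph{also} conditioning on $L=l$. The obstacle is that conditioning on $L=l$ is not a quantum operation, and, worse, it substantially changes Bob's reduced state $\psi^{x,l}_{k,B_kY\widetilde YV\widetilde V}$, since $L=F^c(X,Y)$ is determined by Bob's input; so one cannot simply massage the inequality of \clm{subadditivity} on Bob's side. The key point is that the new hypothesis says precisely that conditioning on $L$ barely disturbs \emph{Alice's} side of the global pure state $\psi_k$, and this can be transferred to Bob's side using Uhlmann's theorem, because the relevant cell $U_l$ sits on Alice's side.

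\textbf{Step 1 (the hypothesis controls Alice's reduced state).} I would first convert the per-coordinate bounds $\I(A_k U\widetilde U X_{-i}Y_{-i}:L_i\mid X_i)_{\psi_k}\le\delta$ into the single bound $\I(A_k U\widetilde U:L\mid X)_{\psi_k}\le c\delta$. By the chain rule (\fct{chain-rule}), $\I(A_k U\widetilde U:L\mid X)=\sum_{i=1}^{c}\I(A_k U\widetilde U:L_i\mid X L_{<i})$, and each summand is at most $\delta$: one adds $Y_{-i}$ to the first register (\fct{mono}); uses that $L_{<i}$ is a deterministic function of $(X,Y_{-i})$, so conditioning on it is free once $X$ and $Y_{-i}$ are; uses independence of the coordinates, so $X_{-i}Y_{-i}$ carries no information about $L_i$ given $X_i$; and applies the hypothesis. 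Then \fct{IvsB} and \fct{SvsB} yield
\[
\mathbb{E}_{x,l\leftarrow XL}\,\BR^2\!\bigl(\psi^{x,l}_{k,A_k U\widetilde U},\ \psi^{x}_{k,A_k U\widetilde U}\bigr)\ \le\ \I(A_k U\widetilde U:L\mid X)_{\psi_k}\ \le\ c\delta .
\]

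\textbf{Step 2 (transfer to Bob via Uhlmann, then triangle inequality).} Fix $x$ and a value $l$ in the support of $XL$. Both $\psi^{x}_k$ and $\psi^{x,l}_k$ are pure states on the same Hilbert space (obtained from $\psi_k$ by measuring $X$, resp.\ $X$ together with a coherent copy of $F^c(x,Y)$), with marginals $\psi^{x}_{k,A_k U\widetilde U}$ and $\psi^{x,l}_{k,A_k U\widetilde U}$ on $A_k U\widetilde U$. Since $A_k U\widetilde U$ lies entirely on Alice's side of the bipartition ``Alice's registers vs.\ Bob's registers (plus the round-$k$ message)'', \fct{uhlmann} supplies a unitary $\Gamma$ acting only on Bob's side --- in particular not on $U_l$, $A_k$, or $\widetilde U$ --- with $\BR\bigl((\id\otimes\Gamma)\psi^{x,l}_k,\psi^{x}_k\bigr)=\BR\bigl(\psi^{x,l}_{k,A_k U\widetilde U},\psi^{x}_{k,A_k U\widetilde U}\bigr)=:a_{x,l}$. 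Tracing down to Bob's side together with $U_l$ and using monotonicity (\fct{monotonicitydistance}) and unitary invariance, this gives $\BR\bigl(\psi^{x,l}_{k,B_kY\widetilde YV\widetilde V U_l},\Gamma^{\dagger}\psi^{x}_{k,B_kY\widetilde YV\widetilde V U_l}\Gamma\bigr)\le a_{x,l}$ and, tracing out $U_l$, also $\BR\bigl(\psi^{x,l}_{k,B_kY\widetilde YV\widetilde V},\Gamma^{\dagger}\psi^{x}_{k,B_kY\widetilde YV\widetilde V}\Gamma\bigr)\le a_{x,l}$. Now the triangle inequality for $\BR$ (\fct{triangle}), through the intermediate states $\Gamma^{\dagger}\psi^{x}_{k,B_kY\widetilde YV\widetilde V U_l}\Gamma$ and $\Gamma^{\dagger}\psi^{x}_{k,B_kY\widetilde YV\widetilde V}\Gamma\otimes\psi_{U_l}$, gives
\[
\BR\bigl(\psi^{x,l}_{k,B_kY\widetilde YV\widetilde V U_l},\ \psi^{x,l}_{k,B_kY\widetilde YV\widetilde V}\otimes\psi_{U_l}\bigr)\ \le\ 2a_{x,l}+b_{x,l},
\]
where $b_{x,l}:=\BR\bigl(\psi^{x}_{k,B_kY\widetilde YV\widetilde V U_l},\psi^{x}_{k,B_kY\widetilde YV\widetilde V}\otimes\psi_{U_l}\bigr)$: the first and third legs are $\le a_{x,l}$ (using \fct{prod} and that $\Gamma$ fixes $U_l$), while the middle leg equals $b_{x,l}$ after undoing $\Gamma$.

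\textbf{Step 3 (conclude).} Since $(2a_{x,l}+b_{x,l})^2\le 3(2a_{x,l}^2+b_{x,l}^2)$, taking $\mathbb{E}_{x,l\leftarrow XL}$ and using Step 1 for $\mathbb{E}[a_{x,l}^2]\le c\delta$ and \clm{subadditivity} for $\mathbb{E}[b_{x,l}^2]\le q/2^c+\delta_1$ gives the claimed bound $3(q/2^c+\delta_1+2c\delta)$. The step I expect to be the main obstacle is Step~2: making precise that $\psi^{x}_k$ and $\psi^{x,l}_k$ really are purifications of the stated marginals on one common Hilbert space, so that Uhlmann's theorem applies and yields a unitary supported entirely off $U_l$, $A_k$, and $\widetilde U$ --- including $\widetilde U$ in the marginal controlled by the hypothesis is exactly what forces Uhlmann's unitary to leave the array and its purification alone. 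One also needs to track the round-$k$ message register carefully, keeping it grouped with Bob's side throughout the argument and discarding it only at the very end via monotonicity (which is fine, as we only want an upper bound).
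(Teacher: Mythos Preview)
Your proposal is correct and follows essentially the same approach as the paper: first sum the per-coordinate bounds via the chain rule to get $\I(A_kU\widetilde U:L\mid X)\le c\delta$ and hence $\mathbb{E}_{x,l}\BR^2(\psi^{x,l}_{k,A_kU\widetilde U},\psi^{x}_{k,A_kU\widetilde U})\le c\delta$; then invoke Uhlmann's theorem to obtain a unitary on $B_kY\widetilde YV\widetilde V$ mapping $\psi^{x,l}_k$ close to $\psi^x_k$; finally use the three-term weak triangle inequality together with \clm{subadditivity}. The paper phrases the Uhlmann step by defining $\phi^{x,l}:=(\id\otimes\U^{x,l})\psi^{x,l}$ and noting that $\BR^2(\psi^{x,l}_{\cdots U_l},\psi^{x,l}_{\cdots}\otimes\psi_{U_l})=\BR^2(\phi^{x,l}_{\cdots U_l},\phi^{x,l}_{\cdots}\otimes\psi_{U_l})$ since the unitary does not touch $U_l$, which is exactly your observation; the three legs and the factor $3$ appear identically.
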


\begin{proof}
We first prove that the register $A_k$ carries low information about $L$ i.e. 
$$
\I(A_k U \tU; L|X)_{\psi^k} \le c \delta .
$$ 
This follows from the following chain of inequalities:
\begin{align*}
\delta &\ge \I(A_k U \tU X_{-i} Y_{-i}; L_i | X_i)_{\psi_k} \\
&\ge  \I(A_k U \tU X_{-i} L_{<i}; L_i | X_i)_{\psi_k} & \text{(\fct{mono} and \fct{IvsB})}\\
&\ge  \I(A_k U \tU; L_i | L_{<i}, X)_{\psi_k} &\text{(\fct{barhopping})}.
\end{align*}
By summing the inequality over $i$, we get
\begin{align*}
c \delta &\ge \sum_{i=1}^c \I(A_k U \tU; L_i | L_{<i}, X)_{\psi_k} \\
&= \I(A_k U \tU; L|X)_{\psi_k} &  \text{(\fct{chain-rule})}.
\end{align*}
This implies using~\fct{IvsB}: 
\begin{align}
\mathbb{E}_{x,l \leftarrow XL} \BR^2 \left(\psi^{x,l}_{k,A_k U \tU}, \psi^{x}_{k,A_k U \tU} \right) \le c \delta . \label{eqn:ankit1}
\end{align}
Now consider the following two pure states (one conditioned on $x,l$ and the other conditioned on $x$):
$$
\ket{\psi^{x,l}}_{k,Y \tY V \tV U \tU A_k B_k} = \sum_{y,v,u} \sqrt{\mu_T(y,v,u|X=x,L=l)}  \ket{u u}_{U \widetilde{U}} \ket{y y}_{Y \widetilde{Y}} \ket{v v}_{V \widetilde{V}} \ket{\psi^{x,u,y,v}}_{k,A_k B_k}
$$
and 
$$
\ket{\psi^{x}}_{k,Y \tY V \tV U \tU A_k B_k} = \sum_{y,v,u} \sqrt{\mu_T(y,v,u|X=x)}  \ket{u u}_{U \widetilde{U}} \ket{y y}_{Y \widetilde{Y}} \ket{v v}_{V \widetilde{V}} \ket{\psi^{x,u,y,v}}_{k,A_k B_k} .
$$
The marginals of these states on the systems $A_k U \tU$ are close as shown above. Now by Uhlmann's theorem (\fct{uhlmann}), there exists a unitary acting on the systems $B_k Y \tY V \tV$ (and the unitary depends on $x,l$) $\U^{x,l}_{B_k Y \tY V \tV}$ s.t. 
$$
\BR^2 \left(\id_{A_kU \tU} \otimes \U^{x,l}_{B_k Y \tY V \tV} \ket{\psi^{x,l}}_{k,A_k U \tU B_k Y \tY V \tV}, \ket{\psi^{x}}_{k,A_k U \tU B_k Y \tY V \tV} \right) = \BR^2 \left(\psi^{x,l}_{k,A_k U \tU}, \psi^{x}_{k,A_k U \tU} \right) .
$$
The unitary $\U^{x,l}_{B_k Y \tY V \tV}$ should be intuitively thought of as implementing the operation of ``forgetting $L$". Hence Equation~\eqref{eqn:ankit1} gives us that: 
\begin{align}
\mathbb{E}_{x,l \leftarrow XL} \BR^2 \left(\id_{A_kU \tU} \otimes \U^{x,l}_{B_k Y \tY V \tV} \ket{\psi^{x,l}}_{k,A_k U \tU B_k Y \tY V \tV},  \ket{\psi^{x}}_{k,A_k U \tU B_k Y \tY V \tV}\right) \le c \delta .\label{eqn:ankit2}
\end{align}
For all $(x,\ell)$, define,
$$\phi^{x,\ell} = \id_{A_kU \tU} \otimes \U^{x,l}_{B_k Y \tY V \tV} \ket{\psi^{x,l}}_{k,A_k U \tU B_k Y \tY V \tV}.$$
Combining Equation~\eqref{eqn:ankit2} with the monotonicity of Bures metric (\fct{monotonicitydistance}), we obtain the following:
\begin{align}
\mathbb{E}_{x,l \leftarrow XL} \BR^2 \left( \phi^{x,l}_{k,B_k Y \tY V \tV U_l},  \psi^{x}_{k,B_k Y \tY V \tV U_l}\right) \le c \delta \label{eqn:ankit3}
\end{align}
and 
\begin{align}
\mathbb{E}_{x,l \leftarrow XL} \BR^2 \left(\phi^{x,l}_{k,B_k Y \tY V \tV},  \psi^{x}_{k,B_k Y \tY V \tV}\right) \le c \delta .\label{eqn:ankit4}
\end{align}
Furthermore, combining Equation~\eqref{eqn:ankit4} with Fact \ref{fact:prod}, we obtain:
\begin{align}
\mathbb{E}_{x,l \leftarrow XL}\BR^2 \left(\phi^{x,l}_{k,B_k Y \tY V \tV} \otimes \psi_{U_l},  \psi^{x}_{k,B_k Y \tY V \tV} \otimes \psi_{U_l}\right) \le c \delta . \label{eqn:ankit5}
\end{align}
\clm{subadditivity} gives us that:
\begin{align}
\mathbb{E}_{x,l \leftarrow XL} \BR^2 \left(\psi_{k,B_k Y \widetilde{Y} V \widetilde{V} U_l}^{x}, \psi^{x}_{k,B_k Y \widetilde{Y} V \widetilde{V}} \otimes \psi_{ U_l} \right) \le \frac{q}{2^c} + \delta_1  \label{eqn:ankit6} .
\end{align}
Now combining Equations~\eqref{eqn:ankit3}, \eqref{eqn:ankit5} and \eqref{eqn:ankit6} along with weak triangle inequality for square of Bures metric (\fct{triangle}) and \fct{monotonicitydistance}, we obtain:
\begin{align*}
\lefteqn{\mathbb{E}_{x,l \leftarrow XL} \BR^2 \left(\psi^{x,l}_{ k,B_k Y \widetilde{Y} V \widetilde{V} U_l}, \psi^{x,l}_{k,B_k Y \widetilde{Y} V \widetilde{V}} \otimes \psi_{U_l} \right) }\\
&=\mathbb{E}_{x,l \leftarrow XL} \BR^2 \left(\phi^{x,l}_{k,B_k Y \widetilde{Y} V \widetilde{V} U_l}, \phi^{x,l}_{k,B_k Y \widetilde{Y} V \widetilde{V}} \otimes \psi_{U_l} \right) \\
&\le 3 \cdot \left( \frac{q}{2^c} + \delta_1 + 2 c \delta \right) . \qedhere
\end{align*}
\end{proof}

\begin{claim}
\label{clm:biased_output}
Assuming the conclusion from \clm{low_info_CS}, it holds that 
$$
\Pr_{x,y,l,u_l,v_l \leftarrow X,Y,L,U_L,V_L}[G_l(x,y,u_l,v_l) = \alpha(x,y)] \le 1/100,
$$

where $\alpha(x,y)$ is either $0$ or $1$ for every $x,y$. 
\end{claim}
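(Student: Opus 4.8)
Writing $\ell:=F^c(x,y)$ and using that $\G$ is a nontrivial XOR family, the plan is to first record that $(u,v)\mapsto G_\ell(x,u,y,v)$ depends only on $u\oplus v$, so that we may write $G_\ell(x,u,y,v)=\widehat G_{x,y}(u\oplus v)$ for some $\widehat G_{x,y}\colon\B^m\to\B$; I would then let $\alpha(x,y)\in\{0,1\}$ be a value taken by $\widehat G_{x,y}$ on at most half of $\B^m$ and set $\gamma_{x,y}:=\Pr_w[\widehat G_{x,y}(w)=\alpha(x,y)]\le 1/2$. Since $u_l\oplus v_l$ is uniform when $u_l,v_l$ are uniform and independent, the probability in the claim is exactly $P:=\mathbb{E}_{x,y}\gamma_{x,y}$, so it suffices to show $P<1/100$. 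I would assume WLOG that Alice produces the output at round $r$ (Bob's case being symmetric, exchanging $U\leftrightarrow V$ and using the $U_l$-version of \eq{Y}).

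\textbf{Step 1: lower bound $\mathbb{E}_{v_l}|p_z(v_l)-q_z|$ via correctness.} I would fix $z:=(x,u,y)$, set $\ell=F^c(x,y)$, and for $v_l\in\B^m$ let $\psi^{z,v_l}_{r,A_r}$ be Alice's final register conditioned on $(X,U,Y)=(x,u,y)$ and $V_l=v_l$ (marginalizing over $V_{-l}$), $\psi^{z}_{r,A_r}:=\mathbb{E}_{v_l}\psi^{z,v_l}_{r,A_r}$, and $p_z(v_l),q_z$ the probabilities Alice outputs $1$ in these states. Because $\Pi$ errs with probability at most $1/3$ on \emph{every} input and $\widehat G_{x,y}(u_l\oplus v_l)$ does not depend on $u_{-l},v_{-l}$, averaging the pointwise guarantee over $V_{-l}$ gives $p_z(v_l)\ge 2/3$ when $\widehat G_{x,y}(u_l\oplus v_l)=1$ and $p_z(v_l)\le 1/3$ when it is $0$. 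A short calculation with the mean $q_z$ then yields $\mathbb{E}_{v_l}|p_z(v_l)-q_z|\ge\gamma_{x,y}/3$ for every $z$, hence $\mathbb{E}_{z}\mathbb{E}_{v_l}|p_z(v_l)-q_z|\ge P/3$.

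\textbf{Step 2: upper bound the same quantity via \clm{low_info_CS}.} Writing $p_z(v_l)=\Tr(M\psi^{z,v_l}_{r,A_r})$ and $q_z=\Tr(M\psi^{z}_{r,A_r})$ for the output operator $0\preceq M\preceq\id$, I get $|p_z(v_l)-q_z|\le\Delta(\psi^{z,v_l}_{r,A_r},\psi^{z}_{r,A_r})\le\sqrt2\,\BR(\psi^{z,v_l}_{r,A_r},\psi^{z}_{r,A_r})$ by \fct{deltabures}. The symmetric form of \clm{low_info_CS} (the $V_l$-version of \eq{Y} at $k=r$) says $\mathbb{E}_{y,\ell}\BR^2\bigl(\psi^{y,\ell}_{r,A_rX\widetilde XU\widetilde UV_l},\psi^{y,\ell}_{r,A_rX\widetilde XU\widetilde U}\otimes\psi_{V_l}\bigr)\le\eta$ with $\eta:=3\bigl(\tfrac{q}{2^c}+\tfrac{c\delta}{3}+2c\delta\bigr)$; peeling off the classical registers $X,\widetilde X,U,\widetilde U,V_l$ with \fct{subadd} (both sides carry the same distribution on them) and using $\mathbb{E}_{y,\ell}\mathbb{E}_{x,u\mid y,\ell}=\mathbb{E}_{x,u,y}$ with $\ell=F^c(x,y)$, this becomes $\mathbb{E}_{z}\mathbb{E}_{v_l}\BR^2(\psi^{z,v_l}_{r,A_r},\psi^{z}_{r,A_r})\le\eta$. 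Combining Steps 1 and 2 with Jensen's inequality,
\[
\tfrac{P}{3}\;\le\;\mathbb{E}_{z}\mathbb{E}_{v_l}|p_z(v_l)-q_z|\;\le\;\sqrt2\,\mathbb{E}_{z}\mathbb{E}_{v_l}\BR(\psi^{z,v_l}_{r,A_r},\psi^{z}_{r,A_r})\;\le\;\sqrt{2\eta}\,,
\]
so $P\le 3\sqrt{2\eta}$. Finally, \eq{assumption1} gives $q/2^c<\delta/2$ and $\delta=1/(10^9cr^2)$, so $\eta<9c\delta\le 9\cdot 10^{-9}$ and $P<3\sqrt{2\cdot 9\cdot 10^{-9}}<1/100$.

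\textbf{Main obstacle.} I expect the only delicate point to be the conditioning in Step 2: one must condition on \emph{all} of $(X,U,Y)$ while only \emph{marginalizing} over $V_{-l}$, so that the pointwise correctness of $\Pi$ used in Step 1 survives the averaging, yet the near-independence supplied by \clm{low_info_CS} — stated with $V_{-l}$ traced out — is exactly what is available. The remaining work is routine bookkeeping with \fct{subadd}, \fct{deltabures}, and Jensen's inequality.
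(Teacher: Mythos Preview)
Your proof is correct and rests on the same core idea as the paper: the near-independence supplied by \clm{low_info_CS} forces the protocol's output to be nearly oblivious to the relevant cell, while pointwise correctness forces it to track $G_\ell$, so $G_\ell$ must be heavily biased. The implementations differ in two respects. First, the paper works on Bob's side with the $U_l$-version of \clm{low_info_CS} (passing to an output register $O$ obtained from $B_r$), whereas you work on Alice's side with the symmetric $V_l$-version; both are available in the surrounding argument, so this is immaterial. Second, the paper first boosts the error of $\Pi$ to $1/400$ via \fct{boost} and then carries the whole computation in terms of $\BR^2$ on the one-qubit output state, using the triangle inequality to compare $\ket{G_l(x,y,u_l,v_l)}$ with the mixture $p^0_{x,y,l}\ketbra{0}+p^1_{x,y,l}\ketbra{1}$. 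You instead keep the error at $1/3$, extract the scalar acceptance probability $p_z(v_l)$, and prove the deviation bound $\mathbb{E}_{v_l}|p_z(v_l)-q_z|\ge\gamma_{x,y}/3$ directly before invoking Jensen and \fct{deltabures}. Your route is slightly more elementary (no boosting) and makes the role of correctness more transparent; the paper's route stays uniformly in the Bures-metric language used elsewhere in \autoref{sec:lower}.
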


\begin{proof}
 Using monotonicity and partial measurement (\fct{mono} and \fct{relsubadd}), we have that: 
\begin{equation*}
\mathbb{E}_{x,y,l,u_l,v_l \leftarrow XYLU_LV_L} \BR^2 \left(\psi_{r, B_r }^{x,y,l,u_l,v_l}, \psi_{r, B_r}^{x,y,l,v_l}\right) \le 3 \cdot \left( \frac{q}{2^c} + \frac{c\delta}{3} + 2 c \delta \right) 
\end{equation*}

Let the output register be called $O$. Then, from our choice of parameters and monotonicity (\fct{mono}), above inequality implies 
\begin{equation}
\label{eqn:biaspsipsi}
\mathbb{E}_{x,y,l,u_l,v_l \leftarrow XYLU_LV_L} \BR^2 \left(\psi_{r, O }^{x,y,l,u_l,v_l}, \psi_{r, O}^{x,y,l,v_l}\right) \le 1/400
\end{equation}

Since protocol makes an error of at most $1/400$ (which can be assumed due to \fct{boost}), we have that 
\begin{equation}
\label{eqn:biaspsiG}
\mathbb{E}_{x,y,l,u_l,v_l \leftarrow XYLU_LV_L}\BR^2(\psi_{r, O}^{x,y,l,u_l,v_l}, \ketbra{G_l(x,y,u_l,v_l)})\leq 1/400.
\end{equation}

On the other hand, since the look-up function is an XOR family, we find that for a fixed $x,y$ (and hence a fixed $l$), 
\begin{align*}
\expec_{u_l\leftarrow U_L} \ketbra{G_l(x,y,u_l,v_l)} = &\Pr_{u_l,v_l\leftarrow U_l,V_l|x,y,l}[G_l(x,y,u_l,v_l) = 0]\ketbra{0} \\ +& \Pr_{u_l,v_l\leftarrow U_l,V_l|x,y,l}[G_l(x,y,u_l,v_l) = 1]\ketbra{1}.
\end{align*}
Define $p^0_{x,y,l} = \Pr_{u_l,v_l\leftarrow U_l,V_l|x,y,l}[G_l(x,y,u_l,v_l) = 0]$ and $p^1_{x,y,l} = \Pr_{u_l,v_l\leftarrow U_l,V_l|x,y,l}[G_l(x,y,u_l,v_l) = 1]$. Then above equation, along with Equation~\eqref{eqn:biaspsiG} implies that    
\begin{equation*}
\expec_{x,y,l,u_l,v_l \leftarrow XYLU_lV_l}\BR^2(\psi_{r, O}^{x,y,l,v_l}, p^0_{x,y,l}\ketbra{0} + p^1_{x,y,l}\ketbra{1})\leq 1/400
\end{equation*}
which in conjunction with Equation \ref{eqn:biaspsipsi} and triangle inequality gives us
\begin{equation}
\label{eqn:biasGG}
\expec_{x,y,l,u_l,v_l \leftarrow XYLU_lV_l}\BR^2(\ketbra{G_l(x,y,u_l,v_l)}, p^0_{x,y,l}\ketbra{0} + p^1_{x,y,l}\ketbra{1})\leq 1/100.
\end{equation}

This directly implies that we cannot have both $p^0_{x,y,l},p^1_{x,y,l}$ large. More formally, for every $x,y$, let $\alpha(x,y)$ be such that $p^{\alpha(x,y)}_{x,y,l} < p^{1-\alpha(x,y)}_{x,y,l}$. Then it is clear that \
$$\BR^2(\ketbra{G_l(x,y,u_l,v_l)}, p^0_{x,y,l}\ketbra{0} + p^1_{x,y,l}\ketbra{1}) > p^{\alpha(x,y)}_{x,y,l},$$
which in turn implies (when used in Equation \ref{eqn:biasGG}),
$$\expec_{x,y,l\leftarrow XYL}p^{\alpha(x,y)}_{x,y,l} = \expec_{x,y,l,u_l,v_l \leftarrow XYLU_lV_l}p^{\alpha(x,y)}_{x,y,l}\leq 1/100.$$ Recalling the definition of $p^{\alpha(x,y)}_{x,y,l}$, this immediately gives us
$$\expec_{x,y,l\leftarrow XYL}\Pr_{u_l,v_l\leftarrow U_l,V_l|x,y,l}[G_l(x,y,u_l,v_l) = \alpha(x,y)] \leq 1/100.$$
This completes the proof.
\end{proof}

\begin{claim}\label{clm:fix_CS} Assume that the assumptions of \clm{subadditivity} and \clm{low_info_CS} hold. In addition, 
$$
\I(B_k V \tV X_{-i} Y_{-i}; L_i | Y_i)_{\psi_k} \le \delta
$$
and 
$$
\Pr_{x,y,l,u_l,v_l \leftarrow X,Y,L,U_L,V_L}[G_l(x,y,u_l,v_l) = \alpha(x,y)] \le 1/100
$$
also hold for $\alpha(x,y)\in \{0,1\}$ for every $x,y$. Then there exist $x,y, l = l(x,y), \widetilde{u}_l,  \widetilde{v}_l,  \widetilde{\widetilde{u}}_l,  \widetilde{\widetilde{v}}_l$ s.t. the following conditions hold:
\begin{enumerate}
\item $G_l(x,y, \widetilde{u}_l, \widetilde{v}_l) = \alpha(x,y)$.
\item $G_l(x,y, \widetilde{u}_l,  \widetilde{\widetilde{v}}_l) = G_l(x,y, \widetilde{\widetilde{u}}_l,  \widetilde{\widetilde{v}}_l) = G_l(x,y, \widetilde{\widetilde{u}}_l,  \widetilde{v}_l) = 1-\alpha(x,y)$.
\item $\sum_{k=1}^r \BR \left(\psi^{x,y,u,v}_{k,B_k V_{-l} \tV_{-l}},  \psi^{x,y,v}_{k,B_k V_{-l} \tV_{-l}}\right) \le 80r \cdot \sqrt{\left(\frac{q}{2^c} + \delta_1 + 2 c \delta \right)}$, \\ 
for any choice of $(u,v) = (\widetilde{u}_l,  \widetilde{\widetilde{v}}_l), (\widetilde{\widetilde{u}}_l,  \widetilde{\widetilde{v}}_l), (\widetilde{\widetilde{u}}_l,  \widetilde{v}_l)$.
\item $\sum_{k=1}^r \BR \left(\psi^{x,y,u,v}_{k,A_k U_{-l} \tU_{-l}},  \psi^{x,y,u}_{k,A_k U_{-l} \tU_{-l}}\right) \le 80r \cdot \sqrt{\left( \frac{q}{2^c} + \delta_1 + 2 c \delta \right)}$, \\ 
for any choice of $(u,v) = (\widetilde{u}_l,  \widetilde{\widetilde{v}}_l), (\widetilde{\widetilde{u}}_l,  \widetilde{\widetilde{v}}_l), (\widetilde{\widetilde{u}}_l,  \widetilde{v}_l)$.
\end{enumerate}
\end{claim}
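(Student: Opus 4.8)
The plan is to upgrade the conclusion of \clm{low_info_CS} (and its Alice--Bob mirror image) into statements about a single \emph{fixed} correct cell, then average to pin down one address pair $(x,y)$, and finally exploit the XOR structure of $\G$ to hand-pick the four cell values $\widetilde u_l,\widetilde v_l,\widetilde{\widetilde u}_l,\widetilde{\widetilde v}_l$. First I would drill down: writing $\eta:=\tfrac{q}{2^c}+\delta_1+2c\delta$, so that \clm{low_info_CS} reads $\mathbb{E}_{x,l\leftarrow XL}\BR^2\bigl(\psi^{x,l}_{k,B_kY\tY V\tV U_l},\psi^{x,l}_{k,B_kY\tY V\tV}\otimes\psi_{U_l}\bigr)\le 3\eta$ for every $1\le k\le r$, I would apply partial measurement for the squared Bures metric (\fct{subadd}) successively to the classical registers $Y\tY$ (which replaces $\mathbb{E}_{x,l}$ by $\mathbb{E}_{(x,y)\leftarrow XY}$ with $l=F^c(x,y)$), then $V_l\tV_l$, then $U_l$ --- using that $\psi^{x,y,v}_{k,B_kV_{-l}\tV_{-l}}=\mathbb{E}_u\psi^{x,y,u,v}_{k,B_kV_{-l}\tV_{-l}}$, while the product state collapses to $\psi^{x,y,v}_{k,B_kV_{-l}\tV_{-l}}$ when $U_l$ is measured --- to obtain
\[
\mathbb{E}_{x,y,u,v}\;\BR^2\bigl(\psi^{x,y,u,v}_{k,B_kV_{-l}\tV_{-l}},\,\psi^{x,y,v}_{k,B_kV_{-l}\tV_{-l}}\bigr)\;\le\;3\eta\qquad(1\le k\le r),
\]
with $u,v$ uniform and $l=F^c(x,y)$. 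The identical argument with Alice and Bob exchanged --- using the hypothesis $\I(B_kV\tV X_{-i}Y_{-i};L_i\mid Y_i)_{\psi_k}\le\delta$ and the symmetric form of \clm{low_info_CS} --- yields the same bound with $(A_k,U_{-l},\tU_{-l})$ in place of $(B_k,V_{-l},\tV_{-l})$ and $\psi^{x,y,u}$ in place of $\psi^{x,y,v}$.

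Next I would fix the address. Summing the two bounds over $k$ and combining with \clm{biased_output} (which gives $\mathbb{E}_{x,y}\Pr_{u,v}[G_l(x,y,u,v)=\alpha(x,y)]\le\tfrac{1}{100}$), a Markov argument over $(x,y)\leftarrow XY$ with suitably chosen slack factors fixes one pair $(x,y)$ --- set $l:=F^c(x,y)$, $\alpha:=\alpha(x,y)$ --- for which $\mathbb{E}_{u,v}\sum_k\BR^2(\,\cdot\,)$ is $O(r\eta)$ for both the $B$-side and the $A$-side quantity, and $\Pr_{u,v}[G_l(x,y,u,v)=\alpha]$ is a small constant. By the XOR property of $\G$, $G_l(x,u,y,v)$ depends only on $w:=u\oplus v$; write it $H(w)$. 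By the nontriviality condition in \defn{xor} at the address $\ell=F^c(x,y)=l$, $H$ is non-constant, so I may fix $w^*$ with $H(w^*)=\alpha$.

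Finally I would choose the cells. Draw $\widetilde v_l,s,t$ uniformly from $\B^m$ and set $\widetilde u_l:=\widetilde v_l\oplus w^*$, $\widetilde{\widetilde v}_l:=\widetilde v_l\oplus s$, $\widetilde{\widetilde u}_l:=\widetilde u_l\oplus t$. Then $G_l(x,\widetilde u_l,y,\widetilde v_l)=H(w^*)=\alpha$ (condition~1), and condition~2 amounts to $H(w^*\oplus s)=H(w^*\oplus t)=H(w^*\oplus s\oplus t)=1-\alpha$, which (the three arguments being uniform) fails with probability at most $3\Pr_w[H(w)=\alpha]$, a small constant; moreover, once condition~2 holds it forces $s,t,s\oplus t\ne 0$ since $H(w^*)=\alpha\ne 1-\alpha$, so the four cell values are distinct as needed downstream. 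A short XOR calculation shows that, as $(\widetilde v_l,s,t)$ ranges uniformly, each of the three pairs $(u,v)\in\{(\widetilde u_l,\widetilde{\widetilde v}_l),(\widetilde{\widetilde u}_l,\widetilde{\widetilde v}_l),(\widetilde{\widetilde u}_l,\widetilde v_l)\}$ is marginally uniform on $\B^m\times\B^m$, so the $(\widetilde v_l,s,t)$-average of $\sum_k\BR^2(\,\cdot\,)$ at any of these pairs equals its value under a uniform cell pair, namely $O(r\eta)$. Markov over $(\widetilde v_l,s,t)$ together with $\sum_{k=1}^{r}a_k\le(r\sum_{k=1}^{r}a_k^2)^{1/2}$ then give $\sum_k\BR(\,\cdot\,)\le 80r\sqrt{\eta}$ at all three pairs, on both sides (conditions~3 and~4), outside a small-constant fraction of $(\widetilde v_l,s,t)$; a union bound over all bad events leaves positive probability, so a good $(\widetilde v_l,s,t)$ exists, which is exactly what \clm{fix_CS} asserts (with $q=\QCC(\Pi)/2$ and $\delta_1=c\delta/3$).

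The main obstacle is the bookkeeping in this last step: conditions~1--2 constrain $(s,t)$ through the three shifts $w^*\oplus s$, $w^*\oplus t$, $w^*\oplus s\oplus t$, whereas conditions~3--4 constrain $(\widetilde v_l,s,t)$ through three differently correlated cell pairs, and one must verify both that each such pair is marginally uniform (so the averaged Bures bounds transfer) and that all the failure probabilities --- over $(x,y)$ when fixing the address and over $(\widetilde v_l,s,t)$ when fixing the cells --- can be made to sum to strictly less than $1$ while the constant $80$ in the target bound is genuinely attainable after the Markov and Cauchy--Schwarz losses. A secondary subtlety is the symmetric form of \clm{low_info_CS} used for condition~4: one has to keep track of which registers stay classical throughout the drilling-down and ensure the symmetric analogues of the hypotheses of \clm{subadditivity} and \clm{low_info_CS} are in force.
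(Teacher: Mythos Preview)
Your proposal is correct and essentially matches the paper's proof: both drill down from \clm{low_info_CS} via monotonicity/partial measurement, fix $(x,y)$ by Markov, then randomly choose the four cell values so that each of the three target pairs is marginally uniform and apply Markov once more. The only cosmetic differences are that the paper converts $\BR^2$ to $\BR$ by Jensen \emph{before} summing over $k$ and applying Markov (you postpone this to a final Cauchy--Schwarz $\sum_k a_k\le (r\sum_k a_k^2)^{1/2}$), and it parametrizes the random cells by drawing $\widetilde u_l,\widetilde{\widetilde u}_l,\widetilde{\widetilde v}_l$ directly with $\widetilde v_l:=\widetilde u_l\oplus t$ rather than your equivalent $(\widetilde v_l,s,t)$ shifts; the constant $80$ is reached with room to spare either way.
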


\begin{proof} By \clm{low_info_CS}, we have that for all $1 \le k \le r$, 
\begin{align*}
\mathbb{E}_{x,l \leftarrow XL} \BR^2 \left(\psi_{k, B_k Y \widetilde{Y} V \widetilde{V} U_l}^{x,l}, \psi_{k, B_k Y \widetilde{Y} V \widetilde{V}}^{x,l} \otimes \psi_{U_l} \right) \le 3 \cdot \left( \frac{q}{2^c} + \delta_1 + 2 c \delta \right) .
\end{align*}
By monotonicity of Bures metric (\fct{monotonicitydistance}), we get that 
\begin{align*}
\mathbb{E}_{x,l \leftarrow XL} \BR^2 \left(\psi_{k, B_k Y V_{-l} \widetilde{V}_{-l} U_l V_l}^{x,l}, \psi_{k, B_k Y V_{-l} \widetilde{V}_{-l} V_l}^{x,l} \otimes \psi^{U_l} \right) \le 3 \cdot \left( \frac{q}{2^c} + \delta_1 + 2 c \delta \right) .
\end{align*}
Note that in both the states above, the marginal state on registers $U_l V_l$ is maximally mixed. Then by the partial measurement property of the square of Bures metric, \fct{subadd}, we get that
\begin{align*}
\mathbb{E}_{x,y,l, u_l, v_l \leftarrow XYLU_L V_L} \BR^2 \left(\psi_{k, B_k V_{-l} \widetilde{V}_{-l}}^{x,y,l,u_l,v_l}, \psi_{k, B_k V_{-l} \widetilde{V}_{-l}}^{x,y,l,v_l}\right) \le 3 \cdot \left( \frac{q}{2^c} + \delta_1 + 2 c \delta \right) .
\end{align*}
Convexity of square gives us that
\begin{align}
\mathbb{E}_{x,y,l, u_l, v_l \leftarrow XYLU_L V_L} \BR \left(\psi_{k, B_k V_{-l} \widetilde{V}_{-l}}^{x,y,l,u_l,v_l}, \psi_{k, B_k V_{-l} \widetilde{V}_{-l}}^{x,y,l,v_l}\right) \le \sqrt{3} \cdot \sqrt{\left( \frac{q}{2^c} + \delta_1 + 2 c \delta \right)} . \label{eqn:ankit7}
\end{align}
Similarly we get that for all $1 \le k \le r$, 
\begin{align}
\mathbb{E}_{x,y,l, u_l, v_l \leftarrow XYLU_L V_L} \BR \left(\psi_{k, A_k U_{-l} \widetilde{U}_{-l}}^{x,y,l,u_l,v_l}, \psi_{k, A_k U_{-l} \widetilde{U}_{-l}}^{x,y,l,u_l}\right) \le \sqrt{3} \cdot \sqrt{\left( \frac{q}{2^c} + \delta_1 + 2 c \delta \right)} . \label{eqn:ankit8}
\end{align}
Summing Equations~\eqref{eqn:ankit7} and \eqref{eqn:ankit8} over $k$, we get the following:
\begin{align*}
\mathbb{E}_{x,y,l, u_l, v_l \leftarrow XYLU_L V_L} \sum_{k=1}^r \BR \left(\psi_{k, B_k V_{-l} \widetilde{V}_{-l}}^{x,y,l,u_l,v_l}, \psi_{k, B_k V_{-l} \widetilde{V}_{-l}}^{x,y,l,v_l}\right) \le 2r \cdot \sqrt{\left( \frac{q}{2^c} + \delta_1 + 2 c \delta \right)} .
\end{align*}
and
\begin{align*}
\mathbb{E}_{x,y,l, u_l, v_l \leftarrow XYLU_L V_L} \sum_{k=1}^r \BR \left(\psi_{k, A_k U_{-l} \widetilde{U}_{-l}}^{x,y,l,u_l,v_l}, \psi_{k, A_k U_{-l} \widetilde{U}_{-l}}^{x,y,l,u_l}\right) \le 2r \cdot \sqrt{\left( \frac{q}{2^c} + \delta_1 + 2 c \delta \right)} .
\end{align*}
Now by Markov's inequality, we can find $x,y,l = l(x,y)$ s.t. the following hold:
\begin{align}
&\Pr_{u_l,v_l \leftarrow U_l,V_l}[G_l(x,y,u_l,v_l) = \alpha(x,y)] \le 1/25 \label{eqn:ankit9} , \\
&\mathbb{E}_{ u_l, v_l \leftarrow U_l V_l} \sum_{k=1}^r \BR \left(\psi_{k, B_k V_{-l} \widetilde{V}_{-l}}^{x,y,l,u_l,v_l}, \psi_{k, B_k V_{-l} \widetilde{V}_{-l}}^{x,y,l,v_l}\right) \le 8r \cdot \sqrt{\left( \frac{q}{2^c} + \delta_1 + 2 c \delta \right)} , \label{eqn:ankit10} \\
&\mathbb{E}_{u_l, v_l \leftarrow U_l V_l} \sum_{k=1}^r \BR \left(\psi_{k, A_k U_{-l} \widetilde{U}_{-l}}^{x,y,l,u_l,v_l}, \psi_{k, A_k U_{-l} \widetilde{U}_{-l}}^{x,y,l,u_l}\right) \le 8r \cdot \sqrt{\left( \frac{q}{2^c} + \delta_1 + 2 c \delta \right)} \label{eqn:ankit11} .
\end{align}
Without loss of generality, assume that $\alpha(x,y) = 1$. Let us have the following two notations:
\begin{align*}
&\kappa_A(u_l,v_l) := \sum_{k=1}^r  \BR \left(\psi_{k, A_k U_{-l} \widetilde{U}_{-l}}^{x,y,l,u_l,v_l}, \psi_{k, A_k U_{-l} \widetilde{U}_{-l}}^{x,y,l,u_l}\right) , \\
&\kappa_\BR(u_l,v_l) := \sum_{k=1}^r \BR \left(\psi_{k, B_k V_{-l} \widetilde{V}_{-l}}^{x,y,l,u_l,v_l}, \psi_{k, B_k V_{-l} \widetilde{V}_{-l}}^{x,y,l,v_l}\right) .
\end{align*}
Recall that for $l = l(x,y)$, $G_l(x,y,u_l,v_l)$ is a non-trivial XOR function of the inputs $u_l,v_l$. So there exists a $t \in \{0,1\}^m$ s.t. $G_l(x,y,u,u \oplus t) = 1$ for all $u \in \{0,1\}^m$. Now we will choose $\widetilde{u}_l, \widetilde{\widetilde{u}}_l, \widetilde{\widetilde{v}}_l$ uniformly and independently from $\{0,1\}^m$ and set $\widetilde{v}_l = \widetilde{u}_l \oplus t$. Note that marginally, the distribution of $(u,v)$ is uniform over $\{0,1\}^m \times \{0,1\}^m$, for any choice of $(u,v) = (\widetilde{u}_l,  \widetilde{\widetilde{v}}_l), (\widetilde{\widetilde{u}}_l,  \widetilde{\widetilde{v}}_l), (\widetilde{\widetilde{u}}_l,  \widetilde{v}_l)$. Hence for any choice of $(u,v) = (\widetilde{u}_l,  \widetilde{\widetilde{v}}_l), (\widetilde{\widetilde{u}}_l,  \widetilde{\widetilde{v}}_l), (\widetilde{\widetilde{u}}_l,  \widetilde{v}_l)$, from Equations \eqref{eqn:ankit9}, \eqref{eqn:ankit10} and \eqref{eqn:ankit11}, we get the following:
\begin{align*}
&\text{Pr}_{\widetilde{u}_l, \widetilde{\widetilde{u}}_l, \widetilde{\widetilde{v}}_l}[G_l(x,y,u,v) = 1] \le 1/25 , \\
&\mathbb{E}_{ \widetilde{u}_l, \widetilde{\widetilde{u}}_l, \widetilde{\widetilde{v}}_l} \kappa_A(u,v) \le 8r \cdot \sqrt{\left( \frac{q}{2^c} + \delta_1 + 2 c \delta \right)} ,\\
&\mathbb{E}_{ \widetilde{u}_l, \widetilde{\widetilde{u}}_l, \widetilde{\widetilde{v}}_l} \kappa_\BR(u,v) \le 8r \cdot \sqrt{\left( \frac{q}{2^c} + \delta_1 + 2 c \delta \right)} .
\end{align*}
Now by a simple application of Markov's inequality, there exists a setting of $(\widetilde{u}_l, \widetilde{\widetilde{u}}_l, \widetilde{\widetilde{v}}_l)$ so that for any choice of $(u,v) = (\widetilde{u}_l,  \widetilde{\widetilde{v}}_l), (\widetilde{\widetilde{u}}_l,  \widetilde{\widetilde{v}}_l), (\widetilde{\widetilde{u}}_l,  \widetilde{v}_l)$, 
\begin{align*}
&G_l(x,y,u,v) = 0 , \\
& \kappa_A(u,v) \le 80r \cdot \sqrt{\left( \frac{q}{2^c} + \delta_1 + 2 c \delta \right)} , \\
& \kappa_\BR(u,v) \le 80r \cdot \sqrt{\left( \frac{q}{2^c} + \delta_1 + 2 c \delta \right)} .
\end{align*}
This completes the proof. Note that we chose $\widetilde{v}_l$ so that $G_l(x,y,\widetilde{u}_l, \widetilde{v}_l) = 1$. 
\end{proof}

The next claim will follow from the quantum-cut-and-paste lemma applied to \clm{low_info_CS}.

\begin{claim}\label{clm:quantum_cut_paste} Assume that the assumptions of \clm{subadditivity}, \clm{low_info_CS} and \clm{fix_CS} hold. Then for the $x,y,l,\widetilde{u}_l, \widetilde{v}_l, \widetilde{\widetilde{u}}_l, \widetilde{\widetilde{v}}_l$ in \clm{fix_CS}, it holds that
$$
 \Delta \left((\psi_{r,A_r U_{-l} \tU_{-l}}^{x,y,l,\widetilde{u}_l, \widetilde{v}_l}, \psi_{r,A_r U_{-l} \tU_{-l}}^{x,y,l,\widetilde{u}_l,\widetilde{\widetilde{v}}_l}\right) \le 1000r \cdot \sqrt{\left( \frac{q}{2^c} + \delta_1 + 2 c \delta \right)} .
$$
\end{claim}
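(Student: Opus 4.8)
The plan is to apply the quantum cut-and-paste lemma (\lem{quantum_cut_paste}) to $\Pi$, treating the contents of cell $l$ as the players' ``inputs''. Fix the tuple $x,y,l,\widetilde{u}_l,\widetilde{v}_l,\widetilde{\widetilde{u}}_l,\widetilde{\widetilde{v}}_l$ produced by \clm{fix_CS}; running $\Pi$ with the address part set to $(x,y)$ and with all cells other than $l$ frozen to the fixed values appearing in the $\psi^{\cdot}$ states turns $\Pi$ into a protocol whose only free input for Alice is $u_l$ and whose only free input for Bob is $v_l$, while $U_{-l}\tU_{-l}$ is carried along in Alice's workspace and $V_{-l}\tV_{-l}$ in Bob's. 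I would invoke \lem{quantum_cut_paste} with Alice's two inputs set to $u=\widetilde{\widetilde{u}}_l$, $u'=\widetilde{u}_l$ and Bob's two inputs set to $v=\widetilde{\widetilde{v}}_l$, $v'=\widetilde{v}_l$. The reason for this exact assignment is the crux of the argument: with it, every hybrid state appearing in the conclusion of the lemma is one of the three ``wrong'' combinations $(\widetilde{u}_l,\widetilde{\widetilde{v}}_l)$, $(\widetilde{\widetilde{u}}_l,\widetilde{\widetilde{v}}_l)$, $(\widetilde{\widetilde{u}}_l,\widetilde{v}_l)$, each of which is controlled by \clm{fix_CS}, and never the ``correct'' combination $(\widetilde{u}_l,\widetilde{v}_l)$, about which we have no closeness guarantee. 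Concretely, the lemma yields
\[
\BR\!\left(\psi^{x,y,l,\widetilde{u}_l,\widetilde{v}_l}_{r,A'_r},\ \psi^{x,y,l,\widetilde{u}_l,\widetilde{\widetilde{v}}_l}_{r,A'_r}\right)\ \le\ h_r+h_{r-1}+2\sum_{k=1}^{r-2}h_k\ \le\ 2\sum_{k=1}^{r}h_k ,
\]
where for odd $k$, $h_k=\BR\bigl(\psi^{x,y,l,\widetilde{\widetilde{u}}_l,\widetilde{\widetilde{v}}_l}_{k,B'_k},\psi^{x,y,l,\widetilde{u}_l,\widetilde{\widetilde{v}}_l}_{k,B'_k}\bigr)$ and for even $k$, $h_k=\BR\bigl(\psi^{x,y,l,\widetilde{\widetilde{u}}_l,\widetilde{\widetilde{v}}_l}_{k,A'_k},\psi^{x,y,l,\widetilde{\widetilde{u}}_l,\widetilde{v}_l}_{k,A'_k}\bigr)$. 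Since $A_rU_{-l}\tU_{-l}$ is a subsystem of $A'_r$, \fct{monotonicitydistance} together with \fct{deltabures} upgrades this to $\Delta\bigl(\psi^{x,y,l,\widetilde{u}_l,\widetilde{v}_l}_{r,A_rU_{-l}\tU_{-l}},\psi^{x,y,l,\widetilde{u}_l,\widetilde{\widetilde{v}}_l}_{r,A_rU_{-l}\tU_{-l}}\bigr)\le \sqrt{2}\cdot 2\sum_k h_k$.

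The second step is to bound $\sum_k h_k$ using \clm{fix_CS}. For odd $k$ I would insert the ``Bob-only'' hybrid $\psi^{x,y,l,\widetilde{\widetilde{v}}_l}_{k,B_kV_{-l}\tV_{-l}}$ between the two states defining $h_k$ and apply the triangle inequality for $\BR$ (\fct{triangle}); summing over $k$ and applying part~3 of \clm{fix_CS} to the two wrong combinations $(\widetilde{\widetilde{u}}_l,\widetilde{\widetilde{v}}_l)$ and $(\widetilde{u}_l,\widetilde{\widetilde{v}}_l)$ gives $\sum_{k\ \mathrm{odd}}h_k\le 2\cdot 80r\sqrt{q/2^c+\delta_1+2c\delta}$. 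Symmetrically, for even $k$ one inserts the ``Alice-only'' hybrid $\psi^{x,y,l,\widetilde{\widetilde{u}}_l}_{k,A_kU_{-l}\tU_{-l}}$ and applies part~4 of \clm{fix_CS} to $(\widetilde{\widetilde{u}}_l,\widetilde{\widetilde{v}}_l)$ and $(\widetilde{\widetilde{u}}_l,\widetilde{v}_l)$, giving $\sum_{k\ \mathrm{even}}h_k\le 2\cdot 80r\sqrt{q/2^c+\delta_1+2c\delta}$. Hence $\sum_k h_k\le 320\,r\sqrt{q/2^c+\delta_1+2c\delta}$, and chaining with the previous paragraph yields $\Delta(\cdots)\le 640\sqrt{2}\,r\sqrt{q/2^c+\delta_1+2c\delta}<1000\,r\sqrt{q/2^c+\delta_1+2c\delta}$, as claimed.

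I expect the main obstacle to be not conceptual but careful register bookkeeping: one must reconcile the convention of \lem{quantum_cut_paste} (where $A'_k,B'_k$ absorb the round-$k$ message of the party currently holding it) with the registers $A_k,B_k$ that appear in \clm{fix_CS}, confirm that the frozen workspace registers $U_{-l}\tU_{-l}$ and $V_{-l}\tV_{-l}$ can be carried through all of the hybrids and folded into $A'_k$ and $B'_k$ without affecting the Bures-metric estimates, and observe that the fixed cell contents $U_l\tU_l$, $V_l\tV_l$ sit in pure states and hence drop out of all the distances. The one genuinely load-bearing idea is the assignment of $(u,u',v,v')$ chosen above, which guarantees that every hybrid produced by the cut-and-paste expansion coincides with one of the three combinations \clm{fix_CS} gives us control over, rather than the uncontrolled combination $(\widetilde{u}_l,\widetilde{v}_l)$.
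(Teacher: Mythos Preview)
Your proposal is correct and follows essentially the same route as the paper's own proof: the same assignment $u=\widetilde{\widetilde{u}}_l$, $u'=\widetilde{u}_l$, $v=\widetilde{\widetilde{v}}_l$, $v'=\widetilde{v}_l$ in \lem{quantum_cut_paste}, the same triangle-inequality hybrids through the ``one-input-only'' states $\psi^{x,y,l,\widetilde{\widetilde{u}}_l}$ and $\psi^{x,y,l,\widetilde{\widetilde{v}}_l}$, the same appeal to parts~3 and~4 of \clm{fix_CS}, and the same conversion to trace distance via \fct{deltabures}. The paper simply defines $\widetilde{A}_k:=A_kU_{-l}\tU_{-l}$, $\widetilde{B}_k:=B_kV_{-l}\tV_{-l}$ and sums the analogues of your $h_k$ over all $k$ (rather than splitting into odd/even), arriving at the same $640\sqrt{2}\,r<1000r$ constant; your register-bookkeeping caveat is exactly the only thing one has to be careful about.
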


\begin{proof}
Let us define the following registers: $\widetilde{A}_k := A_k U_{-l} \tU_{-l}$ and $\widetilde{B}_k := B_k V_{-l} \tV_{-l}$. Also we will define the following:
\begin{align*}
&\delta_{k,A} := \BR \left( \psi_{k,\widetilde{A}_k}^{x,y,\widetilde{\widetilde{u}}_l, \widetilde{v}_l},  \psi_{k,\widetilde{A}_k}^{x,y,\widetilde{\widetilde{u}}_l, \widetilde{\widetilde{v}}_l}\right) ,\\
&\delta_{k,B} := \BR \left( \psi_{k,\widetilde{B}_k}^{x,y,\widetilde{u}_l, \widetilde{\widetilde{v}}_l},  \psi_{k,\widetilde{B}_k}^{x,y,\widetilde{\widetilde{u}}_l, \widetilde{\widetilde{v}}_l}\right) .
\end{align*}
By the triangle inequality for Bures metric \fct{triangle}, 
\begin{align}
&\delta_{k,A} \le \BR \left( \psi_{k,\widetilde{A}_k}^{x,y,\widetilde{\widetilde{u}}_l, \widetilde{v}_l},  \psi_{k,\widetilde{A}_k}^{x,y,\widetilde{\widetilde{u}}_l}\right) + \BR \left( \psi_{k,\widetilde{A}_k}^{x,y,\widetilde{\widetilde{u}}_l,  \widetilde{\widetilde{v}}_l},  \psi_{k,\widetilde{A}_k}^{x,y,\widetilde{\widetilde{u}}_l}\right) \label{eqn:ankit12} ,\\
&\delta_{k,B} \le \BR \left( \psi_{k,\widetilde{B}_k}^{x,y,\widetilde{u}_l, \widetilde{\widetilde{v}}_l},  \psi_{k,\widetilde{B}_k}^{x,y,\widetilde{\widetilde{v}}_l}\right) + \BR \left( \psi_{k,\widetilde{A}_k}^{x,y,\widetilde{\widetilde{u}}_l,  \widetilde{\widetilde{v}}_l},  \psi_{k,\widetilde{A}_k}^{x,y,\widetilde{\widetilde{v}}_l}\right) \label{eqn:ankit13} .
\end{align}
Combining Equations~\eqref{eqn:ankit12}, \eqref{eqn:ankit13} and \clm{fix_CS}, we get the following:
\begin{align*}
&\sum_{k=1}^r \delta_{k,A} \le 160r \cdot \sqrt{\left( \frac{q}{2^c} + \delta_1 + 2 c \delta \right)} ,\\
&\sum_{k=1}^r \delta_{k,B} \le 160r \cdot \sqrt{\left( \frac{q}{2^c} + \delta_1 + 2 c \delta \right)} .
\end{align*}
Note that the state $\psi_{k,\widetilde{A}_k, \widetilde{B}_k}^{x,y,u,v}$ is a pure state for every $k,x,y,u,v$. Also for a fixed $x,y$, these states can be formed by a quantum protocol $\Pi'$ where Alice gets the input $u$ and Bob gets the input $v$ (since they are originally formed by running the protocol $\Pi$ and $U_{-l} \tU_{-l}$ and $V_{-l} \tV_{-l}$ are registers that can be owned by Alice and Bob respectively at the start of $\Pi'$). Hence we can apply \lem{quantum_cut_paste} (by setting $u = \widetilde{\widetilde{u}}_l$, $u' = \widetilde{u}_l$, $v = \widetilde{\widetilde{v}}_l$, $v' = \widetilde{v}_l$) to conclude that 
\begin{align*}
\BR \left(\psi_{r,\widetilde{A}_r}^{x,y,l,\widetilde{u}_l, \widetilde{v}_l}, \psi_{r,\widetilde{A}_r}^{x,y,l,\widetilde{u}_l,\widetilde{\widetilde{v}}_l} \right) 
&\le 2 \sum_{k=1}^r \left( \delta_{k,A} + \delta_{k,B}\right) \\
&\le 640r \cdot \sqrt{\left( \frac{q}{2^c} + \delta_1 + 2 c \delta \right)} .
\end{align*} 
Now the proof is finished by \fct{deltabures} and monotonicity of trace distance (\fct{monotonicitydistance}).
\end{proof}
 
\section{Conclusion and open problems}
\label{sec:conclusion}

We prove a nearly quadratic separation between the log of approximate rank and quantum communication complexity for a family of total functions, which is also the first superlinear separation between these two measures. Our separation is based on a lookup function constructed from the inner product function. To prove the lower bound on the quantum communication complexity of this lookup function, we prove a general purpose cheat sheet theorem for quantum communication complexity. We also prove a general theorem about an upper bound on log of approximate rank of lookup functions based on the circuit size of the base function. This proves the upper bound for an appropriate lookup function on inner product because the inner product function has a linear size circuit.

Several interesting open problems arise out of our work. We state some of them here:

\begin{enumerate}
\item Can we eliminate the round dependence in \thm{qcs}? Can we prove a similar result for quantum information complexity instead of quantum communication complexity, thereby separating quantum information complexity from log of approximate rank? 
\item Can we separate the quantum partition bound \cite{Laplante2012} from quantum communication complexity? Is the quantum partition bound a stronger lower bound measure than log of approximate rank?
\item Can we prove some sort of cheat sheet theorem for log of approximate rank? A simpler question might be to prove that for the inner product function on $n$ bits, any lookup function contructed using a nontrivial XOR family of functions has log of approximate rank at least $\Omega(\sqrt{n})$. 
\end{enumerate}

\section*{Acknowledgements}
We thank Aleksandrs Belovs, Mika G\"o\"os, and Miklos Santha for interesting discussions during the writing of \cite{ABB+16a}.
This work is partially supported by ARO grant number W911NF-12-1-0486, by the Singapore Ministry of Education and the National Research Foundation, also through NRF RF Award No. NRF-NRFF2013-13, and the Tier 3 Grant ``Random numbers from quantum processes'' MOE2012-T3-1-009.
This preprint is MIT-CTP \#4857.


\DeclareUrlCommand{\Doi}{\urlstyle{sf}}
\renewcommand{\path}[1]{\small\Doi{#1}}
\renewcommand{\url}[1]{\href{#1}{\small\Doi{#1}}}
\newcommand{\eprint}[1]{\href{http://arxiv.org/abs/#1}{\small\Doi{#1}}}
\bibliographystyle{alphaurl}
\phantomsection\addcontentsline{toc}{section}{References} 
\bibliography{cheat}

\end{document}